\newcommand*{\email}[1]{\href{mailto:#1}{\nolinkurl{#1}} }
\title{Quantum algorithm for stochastic optimal stopping problems with applications in finance}
\author[1]{Jo{\~a}o F. Doriguello\thanks{joaofd@nus.edu.sg}}
\author[1]{Alessandro Luongo\thanks{ale@nus.edu.sg}}
\author[1]{Jinge Bao\thanks{jbao@u.nus.edu.sg}}
\author[1]{\\Patrick Rebentrost\thanks{cqtfpr@nus.edu.sg}}
\author[1]{Miklos Santha\thanks{cqtms@nus.edu.sg}}
\affil[1]{Centre for Quantum Technologies, National University of Singapore, Singapore}
\date{\today}
\begin{document}

\maketitle

\begin{abstract}
The famous least squares Monte Carlo ({\rm LSM}) algorithm combines linear least square regression with Monte Carlo simulation to approximately solve problems in stochastic optimal stopping theory.
In this work, we propose a quantum {\rm LSM} based on quantum access to a stochastic process, on quantum circuits for computing the optimal stopping times, and on quantum techniques for Monte Carlo.
For this algorithm, we elucidate the intricate interplay of function approximation and quantum algorithms for Monte Carlo. 
Our algorithm achieves a nearly \textit{quadratic} speedup in the runtime compared to the {\rm LSM} algorithm under some mild assumptions.
Specifically, our quantum algorithm can be applied to American option pricing and we analyze a case study for the common situation of Brownian motion and geometric Brownian motion processes.
\end{abstract}


\section{Introduction}
Within stochastic optimization, optimal stopping theory is a broad area of applied mathematics that started in the 1940s and 1950s mainly with A.\ Wald~\cite{Wald:1947} and is concerned with the problem of deciding the best time to ``stop'' or take an action in order to maximize an expected reward~\cite{shiryaev2007optimal}. The time at which the observations are terminated, called \emph{stopping time}, is a random variable depending on the data observed so far in the process. A simple example of an optimal stopping problem is the following: consider a game in which $100$ numbers are written on $100$ pieces of paper without restrictions on the numbers, except that no number appears more than once. The pieces of paper are shuffled faced down and you are asked to look at the numbers, without having seen them before and \emph{once} at a time, and to stop when you think that you have found the biggest number. It turns out that there is a stopping rule that allows you to stop at the biggest number for $1/e$ fraction of the inputs. 

Since its conception, optimal stopping theory collected problems from many disparate areas under a unique umbrella~\cite{powell2019unified}, e.g.\ quickest detection~\cite{shiryaev2010quickest}, sequential parameter estimation~\cite{muravlev2020bayesian} and sequential hypothesis testing~\cite{daskalakis2017optimal}. Probably the most famous optimal stopping problem is the one of option pricing in finance, especially American options~\cite{longstaff2001valuing}. A central problem in the world of finance is to assign a monetary value to a hitherto unvalued asset. 
In the capital markets, there exists a large variety of financial assets which are derivative to underlying assets such as stocks, bonds, or commodities. One of the most well-known examples is the European call option which allows the buyer to ``lock in" a price for buying a stock at some future time (or ``exercise'' time). A fair valuation of such an option was first discussed in the seminal works of Black and Scholes~\cite{Black1973} and Merton~\cite{Merton1973}. In the times since,
the methods proposed in these works have become standard practice in the financial sector and have been extended and generalized for many financial derivatives and market models.

American options allow the buyer to exercise the option at any point in time between the time of purchase and a fixed final time. 
In contrast to an European option, there are no known closed formulas for the price of an American option with finite maturity date even in simple models like the Black-Scholes-Merton one. Theoretically, an American option can be viewed as a stochastic optimal stopping problem for the buyer and a super-martingale hedging problem for the seller~\cite{follmer2016stochastic}. 
Practical algorithms have been developed for the pricing of American options~\cite{cox1979option,rendleman1979two,jaillet1990variational,huang2003option}, an important class being least squares Monte Carlo ({\rm LSM}) algorithms originally proposed independently by Tsitsiklis and Van Roy~\cite{tsitsiklis2001regression} and by Longstaff and Schwartz~\cite{longstaff2001valuing}.

Among the aforementioned classical algorithms for option pricing --- and other topics in finance --- is the sub-field of quantum computing of designing quantum algorithms in the context of financial problems~\cite{orus2019quantum,bouland2020prospects,egger2020quantum}, e.g.\ risk management~\cite{Woerner2018}, financial greeks~\cite{stamatopoulos2021towards,an2021quantum}, portfolio optimization~\cite{barkoutsos2020improving,dasgupta2019quantum,hodson2019portfolio,rebentrost2018quantum,han2022quantum,alcazar2020classical} and option pricing~\cite{Martin2019,vazquez2021efficient}. 
A common tool in obtaining a quantum advantage is amplitude estimation~\cite{brassard2002quantum} and its generalizations for Monte Carlo sampling~\cite{montanaro2015quantum,hamoudi2021quantum,cornelissen2021quantum,cornelissen2021bquantum}. A few different works devised quantum algorithms for derivative pricing based on quantum subroutines for Monte Carlo~\cite{Rebentrost2018finance,Stamatopoulos2019,chakrabarti2021threshold}, e.g.\ European~\cite{ramos2021quantum,fontanela2021quantum,radha2021quantum} and American/Bermudan~\cite{miyamoto2021bermudan} option pricing, and option pricing in the local volatility model~\cite{kaneko2020quantum,an2021quantum} (of which the Black-Scholes model is a subcase). Given its versatility and previous cases of success, it is only natural to explore the applicability of quantum methods for Monte Carlo to problems in optimal stopping theory. 
In this work we focus on tailoring these methods to {\rm LSM} algorithms.

\subsection{Problem statement and past results}

Optimal stopping theory is concerned with the problem of finding the best moment to stop a process in order to maximize an expected reward. More generally, assume a stochastic process (which corresponds to the market model in financial applications) modelled with a Markov chain $(X_t)_{t=0}^T$ with sample space $\Omega$ and state space $E \subseteq \mathbb{R}^d$, i.e., the random variables defined as $X_t : \Omega \to E$. 
Each element $X_t$ for $t\in\{0,\dots,T\}$ gives rise to an image probability measure $\rho_t$ in $E\subseteq\mathbb{R}^d$. Let $L^2(E,\rho_t)=L^2(\rho_t)$ be the set of squared integrable functions with norm $\|f\|_{L^2(\rho_t)} = \sqrt{\mathbb{E}_{\rho_t}[|f(X_t)|^2]}$.
Assume further a \emph{payoff process}: another stochastic process $(Z_t)_{t=0}^T$ obtained from $(X_t)_{t=0}^T$ as $Z_t = z_t(X_t)$ for square-integrable real functions $z_t\in L^2(E)$, for $t\in\{0,\dots,T\}$, over the probability distribution of the Markov chain. The main problem is when to stop the process and take the payoff so that the expected payoff is maximized. A stopping time is a random variable $\tau:\Omega\to\{0,1,\dots,T\}$ such that $\{\omega\in\Omega|\tau(\omega)=t\}$ does not depend on future information (for $t'>t$) about the stochastic process. The optimal stopping problem then consists in finding a stopping time $\tau$ that maximizes the expected value payoff.
\begin{problem}[Optimal stopping problem]
\label{prob:prob1}
For $\epsilon_{\rm final}>0$, approximate $\mathcal U_0$ to additive accuracy $\epsilon_{\rm final}$ with high probability, where 
\begin{equation*} 
    \mathcal U_0 := \operatorname*{sup}_{{\tau}~{\rm stopping~time}}\mathbb{E}[Z_{\tau}].
\end{equation*}
\end{problem}

A more precise formulation is given in Problem \ref{defProblem}.
The quantity $\mathcal{U}_0$ is related to a stochastic process called the \emph{Snell envelope}~\cite{neveu1975discrete,follmer2016stochastic} and can be expressed as the solution of the dynamic programming~\cite{clement2001analysis,follmer2016stochastic}
\begin{align*}
    \begin{cases} 
        \tau_T = T, \\
        \tau_t = t \mathbf{1}\{Z_t \geq \E[Z_{\tau_{t+1}} | X_t]\} +  \tau_{t+1}\mathbf{1}\{Z_t < \E[Z_{\tau_{t+1}} | X_t]\}, & 0 \leq t \leq T-1.
   \end{cases}
\end{align*}
The stopping time $\tau_0$ can be shown to maximize $\mathbb{E}[Z_\tau]$ in Problem~\ref{prob:prob1}. The quantities $\mathbb{E}[Z_{\tau_{t+1}}|X_t]$ are called \emph{continuation values}. In the past, many different approaches were developed to tackle the dynamic programming above~\cite{kim1990analytic,chance2007synthesis,cox1979option,rendleman1979two,sharpe1999investments,mckean1965free,peskir2006optimal,jaillet1990variational, bensoussan2011applications,huang2003option,van1976optimal, boyle1977options,fu2001pricing,kohler2010review,broadie1997monte}. A famous approach is the least squares Monte Carlo ({\rm LSM}) by Longstaff and Schwartz~\cite{longstaff2001valuing}. The {\rm LSM} algorithm combines linear least-square regression with Monte Carlo simulation to approximate the hard-to-compute continuation values arising in the dynamic programming.

\subsubsection{Least squares Monte Carlo algorithms}

The Longstaff-Schwartz algorithm combines linear least-square regression with Monte Carlo simulation to approximating certain quantities known as continuation values arising in a discrete-time optimal stopping problem. The slightly simpler Tsitsiklis-Van Roy algorithm also tackles the problem by parametric approximations based on stochastic approximation techniques~\cite{Benveniste1990AdaptiveAA}. More specifically, the {\rm LSM} algorithm from Longstaff and Schwartz~\cite{longstaff2001valuing} solves the dynamic programming problem by means of two approximations. The first one is based on approximating the continuation values with a finite set of basis functions.
For each time step $t\in[T-1]$ we choose a set of $m$ expansion functions $e_{t,k}(x)\in L^2(E)$, $k\in[m]$, that is rich enough to approximate sufficiently well the continuation values $\mathbb{E}[Z_{\tau_{t+1}}|X_t]$ (in the $L^2$ norm) but is simple enough to make the problem tractable. Often a few low-degree polynomials are sufficient to obtain a good approximation. 
Then we can linearly decompose the continuation value function as $\mathbb{E}[Z_{\tau_{t+1}}|X_t] \approx \sum_{k=1}^m \alpha_k e_{t,k}(X_t)$, where $\alpha_k$ are projection coefficients given by $\alpha_t = A_t^{-1}b_t$, where $A_t\in\mathbb{R}^{m\times m}$ is a matrix with entries $(A_t)_{k,l} = \mathbb{E}[e_{t,k}(X_t)e_{t,l}(X_t)]$ and $b_t\in\mathbb{R}^m$ is a vector with entries $(b_t)_k = \mathbb{E}[Z_{\tau_{t+1}}e_{t,k}(X_t)]$. 
The key advantage of the first approximation is that instead of Monte Carlo estimating the continuation values $f(x)$ we can Monte Carlo estimate quantities involving the expansion functions. In particular, we can use Monte Carlo estimation to approximate the coefficients $\alpha_t$.

The second approximation is to solve the dynamic programming for the stopping times by using Monte Carlo sampling. In other words, we sample $N$ independent paths of the Markov Chain, compute the associated payoff values and solve the dynamic programming for each of these paths (using the approximation to the continuation values, which are easy to evaluate on the sample paths). The end result is a set of optimal stopping times $\{\tau^{(n)}\}_{n=1}^N$, from which the quantity $\sup_{\tau}\mathbb{E}[Z_\tau]$ is approximated as $\frac{1}{N}\sum_{n=1}^N Z_{\tau^{(n)}}$.
The {\rm LSM} algorithm~\cite{longstaff2001valuing} takes as input sampling access to the Markov chain and access to the payoff and the expansion functions.

While Longstaff and Schwartz~\cite{longstaff2001valuing} offered some theoretical arguments for the convergence of their algorithm, corroborated by many numerical experiments with improved performances compared to other methods at the time, they did not present any explicit convergence result. Tsitsiklis and Van Roy~\cite{tsitsiklis2001regression}, on the other hand, proved that the estimator in their algorithm converges to an estimator of an ideal algorithm where sampling error from Monte Carlo simulation is ignored. However, no convergence rate was provided.

The first theoretical analysis of the {\rm LSM} algorithm was done by Cl\'ement et al.~\cite{clement2001analysis} and focused on finite-dimensional linear approximation spaces. They proved that the approximation error of the continuation values goes to $0$ as the number of basis function goes to infinity. Moreover, they also provided an almost sure convergence of the Monte Carlo procedure via a Central Limit Theorem, but neither rate of convergence nor sample complexity was analyzed. The generality of their results is also limited in the sense that nonlinear approximation spaces were not studied and some assumptions were placed on the underlying Markov process. The work of Stentoft~\cite{stentoft2004convergence} is placed in the framework as~\cite{clement2001analysis} and claims to have obtained convergence rates in the two-period case as both the number of basis functions and the number of simulated paths go to infinity together under some assumptions. However, as noted by~\cite{liu2019american}, the proof of~\cite{stentoft2004convergence} contains some errors. 

Glasserman and Yu~\cite{glasserman2004number} were the first to regenerate the sample paths at each time step in order to simplify the error analysis. Moreover, they also considered quasi-regression instead of regression. Explicit results were obtained when taking polynomial functions as basis functions and considering Brownian motion and geometric Brownian motion as underlying process. Their analysis was later extended by Gerhold~\cite{gerhold2011longstaff} for different L\'evy models. 

Egloff~\cite{egloff2005monte} unified the Longstaff-Schwartz and the Tsitsiklis-Van Roy algorithms by proposing a new class of algorithms named dynamic look-ahead, and managed to obtain major advances in the error analysis. By leveraging the empirical risk minimization framework of Vapnik and Chervonenkis~\cite{vapnik1971uniform}, Egloff provided error estimates and sample complexity results by upper-bounding the expected sample error by $O(\sqrt{\log(N)/N})$, where $N$ is the number of sampled paths, for a variety of approximation spaces, including nonlinear ones. His analysis requires that the payoff functions be bounded, and that the approximation space be uniformly bounded, closed, and convex in some $L^p$ space, $2\leq p \leq \infty$, and possess finite Vapnik–Chervonenkis (VC) dimension.

Following the results of Egloff~\cite{egloff2005monte}, a series of works from Zanger slowly dropped previous assumptions and generalized the error analysis. Starting in~\cite{zanger2009convergence}, Zanger removed the convexity and closeness conditions on the approximation space from Egloff~\cite{egloff2005monte}, but obtained a worse expected $L^2$ sample error of $o(1/N^{1/4})$. The uniform boundedness of the payoff and approximation functions were still presumed, together with finite VC dimension. In~\cite{zanger2013quantitative}, Zanger managed to drop the boundedness assumption on the approximation functions, extending previous results to arbitrary sets of $L^2$ functions with finite VC dimension, together with an improved $O(\sqrt{\log(N)/N})$ expected $L^2$ sample error. However, like~\cite{glasserman2004number,gerhold2011longstaff}, regenerating new sets of independent sample paths at each time step was needed. Such assumption was later dropped in~\cite{zanger2018convergence}, but at the expense of again requiring uniformly bounded approximation functions. Finally, in~\cite{zanger2020general} Zanger proved an $O(\sqrt{\log(N)/N})$ expected $L^2$ sample error by using a single set of independent sample paths. The approximation space was assumed to be an arbitrary set of $L^2$ functions with finite VC dimension. Further, the payoff functions were assumed to be bounded in $L^p$, $2<p<\infty$, but not necessarily under the uniform norm. Moreover, this work also considered the case in which the least-square regression step is solved only approximately. The results from~\cite{zanger2020general} are the most general ones that we are aware of which apply to the original version of the {\rm LSM} algorithm introduced in~\cite{longstaff2001valuing}.

Apart from the aforementioned works, Klimek and Pitera~\cite{klimek2018least} investigated in the same framework of Cl\'ement et al.~\cite{clement2001analysis} but considered the case when the underlying process is non-Markovian and the payoff is path-dependent. Liu et al.~\cite{liu2019american} revised the proof of Cl\'ement et al.~\cite{clement2001analysis} employing quasi-regression and proved that the mean sample error asymptotically scales as $1/\sqrt{N}$.

\subsubsection{Applications of {\rm LSM} algorithm}

Among the whole domain of optimal stopping problems, there are many that can be approached directly with {\rm LSM}, e.g.\ the secretary problem~\cite{crosby2017optimal}, modelling the optimal time to call an election based on data~\cite{tonkes2012longstaff}, estimating the solvency of governments with respect to their debt~\cite{sauli2013suitability}, and multi-armed bandit problems~\cite{gutin2018practical}. 
Another important application of {\rm LSM} is in the insurance sector. In fact, {\rm LSM} can be used to estimate the VaR (Value at Risk)~\cite{krah2018least} and life insurance contracts~\cite{bacinello2009pricing} (see also~\cite{pelsser2016difference} for a comparison of {\rm LSM} with other methods). The computational challenges of this domain were further highlighted by recent European regulatory requirements~\cite{solvency2009directive,dimitrakopoulos2013least}. {\rm LSM} is also often used for solving Backward Stochastic Differential Equations (BSDE). 
Some numerical algorithms for BSDE are two-steps stochastic procedures involving a discretisation step where the solutions obtained at time $t$ of the BSDE are projected onto a space obtained from the filtrations at time $t-1$. This step involves a conditional expectation that cannot be calculated analytically, but must be estimated using some approximation procedure. The idea of applying {\rm LSM} to BSDE was first introduced in~\cite{gobet2005regression} and further developed in~\cite{gobet2011approximation, lemor2006rate}. Recently, this method has been generalized to solve two-dimensional forward-backward stochastic differential equations~\cite{li2021regression,bender2012least}.

\subsubsection{Other algorithms for option pricing}\label{subsec:nonlsmc}

{\rm LSM} is not the only type of algorithm that can be used to price American options~\cite{egloff2005monte}. Besides a few attempts to give an analytical formula under certain conditions~\cite{kim1990analytic}, the vast majority of them has been directed towards giving numerical results, which we briefly discuss in this section. A simple and well-known way of pricing American options is through the use of binomial trees. While the origins of this technique are somewhat unclear~\cite{chance2007synthesis}, the first articles that proposed the idea of binomial trees for pricing options are considered to be~\cite{cox1979option,rendleman1979two}, with the first seminal ideas proposed in the first edition of~\cite{sharpe1999investments}. McKean~\cite{mckean1965free} realized that the price of an American option can be cast as a free boundary problem~\cite{peskir2006optimal}, which is a particular partial differential equation that can be solved numerically. There is a flurry of other methods to price American options based on partial differential equations. We name a few approaches such as variational inequalities~\cite{jaillet1990variational, bensoussan2011applications}, linear complementary~\cite{huang2003option}, and those related to free boundaries~\cite{van1976optimal}. However, as noted in~\cite{egloff2005monte}, these methods often suffer from the curse of dimensionality, as they require the computing time and the storage to grow exponentially with the dimension of the underlying state space.
{\rm LSM} is also not the only Monte Carlo approach for pricing American options. One of the first works using Monte Carlo for option pricing is~\cite{boyle1977options}. Reviews of Monte Carlo and other methods for the problem of American option pricing can be found in~\cite{fu2001pricing,kohler2010review,broadie1997monte}. In contrast to giving lower bounds for the true optimal stopping value --- as the {\rm LSM} algorithm --- Rogers~\cite{rogers2002monte} proposed a method which leverages a dual problem, resulting in an upper bound for the optimal stopping value. Last but not least, semi-analytical approaches for American option pricing and optimal stopping time are also used~\cite{barone1987efficient}. 

\subsection{Our results}

We propose a quantum version of the {\rm LSM} algorithm and perform the error analysis considering a general set of expansion functions $\{e_{t,k}\}_{t,k}$. We achieve a quadratic speedup in the approximation error compared to the classical counterpart under some assumptions of smoothness on the continuation values (which are explained below).

Our quantum algorithm assumes quantum sampling access to the Markov chain and quantum query access to the functions $\{z_t:E\to\mathbb{R}\}_{t=0}^T$ and $\{e_{t,k}:E\to\mathbb{R}\}_{t\in[T-1],k\in[m]}$. Given these, it follows the classical {\rm LSM} in approximating the continuation values with a suitable set of expansion functions. However, the dynamic programming is not solved separately along different sampled paths, but in superposition along all possible stochastic processes. More specifically, at any given time $t$, the dynamic programming is solved in a backward fashion from time $T$ to $t$ by constructing a unitary that prepares the stopping times $\tau_{t}$ in superposition via the mapping $|x\rangle \ket{\bf 0} \mapsto |x\rangle|\tau_t(x)\rangle$ for all $x\in E$. This resulting state is used in the quantum subroutines for Monte Carlo to extract expectation values $\mathbb{E}[Z_{\tau_{t+1}}|X_t]$ that make up the vector $b_t$ and which are required to continue solving the dynamic programming at the next time step $t-1$. Such procedure is repeated until $t=0$, when the optimal stopping time $\tau_0$ can be computed in superposition and thus the quantity $\sup_\tau \mathbb{E}[Z_\tau]$ be finally approximated. We note that, unlike the classical {\rm LSM}, our quantum algorithm requires redoing all previous dynamic programming steps before a given time $t$ in order to progress into the next time step $t-1$. The final procedure involves $O(T^2)$ time steps instead of $O(T)$.

We prove that the classical {\rm LSM} algorithm and our proposed quantum {\rm LSM} algorithm approximate the sought-after quantity $\mathcal{U}_0$ up to additive accuracy with high probability. In the following, define $\vec{e}_t(\cdot) := (e_{t,1}(\cdot),\dots,e_{t,m}(\cdot))^{\top}$.
\begin{theorem}[Informal version of Corollary~\ref{cor:comparison}]
    Consider a set of linearly independent functions $\{e_{t,k}:E\to\mathbb{R}\}_{k=1}^m$ for each $t\in[T-1]$ and payoff functions $\{z_t:E\to\mathbb{R}\}_{t=0}^T$. Then, for $\delta\in(0,1)$ and $\epsilon > 0$, the classical and quantum {\rm LSM} algorithms output $\widetilde{\mathcal{U}}_0$ such that
    \begin{align*}
        \operatorname{Pr}\left[|\widetilde{\mathcal{U}}_0 - \mathcal{U}_0| \geq 5^T\left(\epsilon + \operatorname*{\max}_{0< t < T} \operatorname*{\min}_{a\in\mathbb{R}^m}\|a\cdot \vec{e}_t(X_t) - \mathbb{E}[Z_{\tau_{t+1}}|X_t]\|_{L^2(\rho_t)}\right) \right] \leq \delta
    \end{align*}
    using time, respectively, $\widetilde{O}\left(\frac{Tm^4(T+m^2)}{\epsilon^2}\right)$ and $\widetilde{O}\left(\frac{T^2m^4}{\epsilon}\right)$, up to ${\rm polylog}$ terms.
\end{theorem}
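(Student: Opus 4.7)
The plan is to reduce the global error bound on $|\widetilde{\mathcal U}_0 - \mathcal U_0|$ to a backward-induction analysis that tracks, at each time step $t$, two sources of error. The first is the best $L^2$ approximation error $\eta_t := \min_{a\in\mathbb R^m}\|a\cdot \vec e_t(X_t) - \mathbb E[Z_{\tau_{t+1}}|X_t]\|_{L^2(\rho_t)}$, which is intrinsic to the chosen basis and cannot be removed. The second is the statistical/quantum estimation error of the coefficient vector $\widehat\alpha_t$ built from approximate estimators of $A_t$ and $b_t$. Since the $\{e_{t,k}\}_{k}$ are linearly independent, $A_t$ is invertible, and a standard perturbation bound for $A_t^{-1}b_t$ translates $\epsilon$-accurate estimators of those two moment quantities into an $O(\mathrm{poly}(m)\cdot\epsilon)$ error in $\widehat\alpha_t$; this yields an estimated continuation value $\widehat C_t(x):=\widehat\alpha_t\cdot \vec e_t(x)$ whose $L^2$ distance from $\mathbb E[Z_{\tau_{t+1}}|X_t]$ is at most $\eta_t + O(\mathrm{poly}(m)\cdot \epsilon)$, by the triangle inequality applied against the best approximant.

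The second step is propagating these per-step errors through the Snell-envelope recursion. I would define approximate stopping times $\widehat\tau_t$ by plugging $\widehat C_t$ into the dynamic programming and prove, by downward induction on $t$, that $\|\mathbb E[Z_{\widehat\tau_{t+1}}|X_t] - \mathbb E[Z_{\tau_{t+1}}|X_t]\|_{L^2(\rho_t)}$ is controlled by a geometric sum of the per-step contributions $\eta_{t'} + O(\epsilon)$ for $t'\geq t$, with a multiplicative growth of roughly $5$ per step. Two triangle inequalities produce the factor: one separates the error of replacing $\tau_{t+1}$ by $\widehat\tau_{t+1}$ from the error of replacing the true conditional expectation by $\widehat C_t$, and another bounds the effect of the discontinuous indicator $\mathbf 1\{Z_t \geq \widehat C_t(X_t)\}$ using the Cl\'ement--Protter-style decomposition, which only requires an $L^2$ bound on the continuation-value error rather than pointwise control of the exercise boundary. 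Iterating through $t=0,\dots,T$ and finally evaluating at $t=0$ gives the $5^T$ prefactor multiplying $\epsilon$ plus the maximum $\eta_t$.

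For the complexity, I would allocate a per-step accuracy $\epsilon/\mathrm{poly}(m)$ to the Monte Carlo estimators after absorbing the dimension dependence from the perturbation bound on $A_t^{-1}b_t$ and a union bound over the $O(Tm^2)$ scalar estimators with failure probability $\delta$. Classically, Chebyshev-type estimation of each entry of $A_t\in\mathbb R^{m\times m}$ and $b_t\in\mathbb R^m$ to this accuracy costs $O(m^4/\epsilon^2)$ samples per time step, giving $\widetilde O(Tm^6/\epsilon^2)$ overall. Quantumly, I would replace Monte Carlo sampling by the Montanaro/Hamoudi--Magniez quantum mean estimators fed with the unitary $|x\rangle|0\rangle\mapsto|x\rangle|\tau_t(x)\rangle$, buying a quadratic speedup in $\epsilon$; however, since this unitary has to be coherently re-synthesised by running the entire forward sweep of the dynamic programming at each step, we pay an extra factor of $T$, producing $\widetilde O(T^2 m^4/\epsilon)$.

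The hard part, and the step most prone to losing a polynomial factor in $T$, is the discontinuity of $\widehat\tau_t$ in $\widehat C_t$: a pointwise error in $\widehat C_t$ does not directly yield a pointwise error in $Z_{\widehat\tau_t}$. I expect to handle this by working entirely in $L^2$ and invoking the decomposition used in the convergence analyses of Cl\'ement et al.\ and Zanger, bounding $|\mathbb E[Z_{\widehat\tau_t}] - \mathbb E[Z_{\tau_t}]|$ in terms of $\|\widehat C_t - \mathbb E[Z_{\tau_{t+1}}|X_t]\|_{L^2(\rho_t)}$ rather than in terms of $\|\widehat C_t - \mathbb E[Z_{\tau_{t+1}}|X_t]\|_\infty$; the numerical constants coming out of this decomposition are what fix the base of the exponential, and a coarser analysis would only enlarge it without affecting the stated asymptotics in $T$, $m$, and $\epsilon$.
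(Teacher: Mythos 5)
Your proposal is correct and follows essentially the same route as the paper: a matrix-perturbation bound on $\widetilde{A}_t^{-1}\widetilde{b}_t$ to control the coefficient error, the Cl\'ement/Zanger $L^2$ decomposition (the paper's Lemma~\ref{lem:zanger}) to handle the discontinuity of the stopping times, a backward induction that produces the $5^T$ prefactor, and a union bound over the $O(Tm^2)$ scalar estimators, with the extra factor of $T$ in the quantum case coming precisely from re-synthesising the stopping-time unitaries $C_t^{(k)}$ at each step. The only cosmetic difference is that the paper suppresses the model-dependent constants $R$, $L$, $\sigma_{\min}$ into the informal statement, which your sketch implicitly absorbs into the $\mathrm{poly}(m)$ reallocation of accuracy.
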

The error $\epsilon$ arises from the Monte Carlo subroutines and can be made smaller by increasing the calls to the quantum inputs (or to the number of sampled paths in the classical counterpart). Compared to the classical algorithm, the number of oracle calls is quadratically less in the quantum algorithm. The quantity $\operatorname*{\min}_{a\in\mathbb{R}^m}\|a\cdot \vec{e}_t(X_t) - \mathbb{E}[Z_{\tau_{t+1}}|X_t]\|_{L^2(\rho_t)}$ appearing in the theorem above is known as \emph{approximation error}. This term arises from approximating the continuation values by the $m$ expansion functions and is a deterministic quantity implicitly dependent on $m$ and on smoothness properties of the continuation values.

In order to obtain a final additive accuracy $\epsilon_{\rm final}$ for $\widetilde{\mathcal{U}}_0$, we must resolve the implicit dependence of the approximation error on $m$. This is done by considering specific sets of expansion functions and assuming sufficiently good smoothness properties for the continuation values. More specifically, for each $t\in[T-1]$ we consider functions $\{e_{t,k}:E\to\mathbb{R}\}_{k=1}^m$ that generate the space $\mathcal{R}_q$ of all polynomials of degree at most $q$, so that $m = \binom{q+d}{d}$. We also assume that $\mathbb{E}[Z_{\tau_{t+1}}|X_t]\in C^n$, i.e., the continuation values are $n$-differentiable functions. Then it is possible to bound the approximation error $\operatorname*{\min}_{a\in\mathbb{R}^m}\|a\cdot \vec{e}_t(X_t) - \mathbb{E}[Z_{\tau_{t+1}}|X_t]\|_{L^2(\rho_t)}$ by using a Jackson-like inequality~\cite{jackson1924general} and obtain the following result.
\begin{theorem}[Informal version of Theorem~\ref{thr:quantum_result3}]
    For each $t\in[T-1]$, consider a set of linearly independent functions $\{e_{t,k}:E\to\mathbb{R}\}_{k=1}^m$ that spans the space $\mathcal{R}_q$ with $m = \binom{q+d}{d}$ and consider payoff functions $\{z_t:E\to\mathbb{R}\}_{t=0}^T$. Assume that $\mathbb{E}[Z_{\tau_{t+1}}|X_t]\in C^n$ for all $t\in\{0,\dots,T-1\}$, where $n\leq q$. Then, for $\delta\in(0,1)$ and $\epsilon>0$, if $q = \lceil(5^T/\epsilon)^{1/n}\rceil$, the classical and quantum {\rm LSM} algorithms output $\widetilde{\mathcal{U}}_0$ such that
    \begin{align*}
        \operatorname{Pr}\left[|\widetilde{\mathcal{U}}_0 - \mathcal{U}_0| \geq \epsilon \right] \leq \delta
    \end{align*}
    using time, respectively, $\widetilde{O}\left((5^T/\epsilon)^{2+6d/n}\right)$ and $\widetilde{O}\left((5^T/\epsilon)^{1+4d/n}\right)$, up to ${\rm polylog}$ terms.
\end{theorem}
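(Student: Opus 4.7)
The plan is to use a multivariate Jackson-type inequality to convert the smoothness hypothesis $\mathbb{E}[Z_{\tau_{t+1}}|X_t]\in C^n$ into an explicit $q^{-n}$ bound on the approximation error appearing in Corollary~\ref{cor:comparison}, and then to calibrate the Monte Carlo parameter against the polynomial degree so that both sources of error fall below the final target $\epsilon$ (overloading notation between the Monte Carlo $\epsilon$ of Corollary~\ref{cor:comparison} and the final accuracy $\epsilon$ of this theorem).

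First, for each $t$ I would invoke Jackson's inequality~\cite{jackson1924general} in its multivariate form applied to $f_t(\cdot) := \mathbb{E}[Z_{\tau_{t+1}}|X_t=\cdot] \in C^n$: there exists a polynomial $p_t$ of total degree at most $q$ with $\|f_t - p_t\|_\infty \leq C(d,n)\, q^{-n}\, |f_t|_{C^n}$, where $|\cdot|_{C^n}$ denotes the usual $C^n$-seminorm. Since $\{e_{t,k}\}_{k=1}^m$ spans $\mathcal{R}_q$, such a $p_t$ lies in their linear span, and because $\rho_t$ is a probability measure the $L^2(\rho_t)$ norm is dominated by the uniform norm, yielding
\begin{equation*}
\operatorname*{\min}_{a\in\mathbb{R}^m}\|a\cdot \vec{e}_t(X_t) - f_t(X_t)\|_{L^2(\rho_t)} \leq \|f_t - p_t\|_\infty = O(q^{-n}),
\end{equation*}
uniformly in $t$ after taking the maximum of $|f_t|_{C^n}$ over $t$.

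Next I would calibrate parameters against the compound error $5^T(\epsilon + \text{approx.\ error})$ of Corollary~\ref{cor:comparison}. Rescaling the Monte Carlo parameter to $\epsilon/(2\cdot 5^T)$ and choosing $q = \lceil(5^T/\epsilon)^{1/n}\rceil$ so that the Jackson bound $O(q^{-n})$ is at most $\epsilon/(2\cdot 5^T)$, the total error becomes at most $\epsilon$ with probability at least $1-\delta$, and $m = \binom{q+d}{d} = O(q^d) = O((5^T/\epsilon)^{d/n})$. Substituting both into the classical and quantum complexities $\widetilde{O}(Tm^6/\epsilon^2)$ and $\widetilde{O}(T^2 m^4/\epsilon)$ from Corollary~\ref{cor:comparison}, with $1/\epsilon$ replaced by $5^T/\epsilon$, gives
\begin{align*}
T m^6 (5^T/\epsilon)^2 &= \widetilde{O}\bigl((5^T/\epsilon)^{2+6d/n}\bigr), \\
T^2 m^4 (5^T/\epsilon) &= \widetilde{O}\bigl((5^T/\epsilon)^{1+4d/n}\bigr),
\end{align*}
where the $T$-factors are absorbed into the $\widetilde{O}$ as polylog terms in $5^T/\epsilon$.

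The main obstacle is establishing a Jackson-type inequality with a constant that is uniform in $t$ when the state space $E$ is unbounded (as in the geometric Brownian motion case study). There one must either restrict to a compact exhaustion of $E$ with controlled tail probability under $\rho_t$, or invoke a weighted Jackson inequality that directly bounds the $L^2(\rho_t)$ norm rather than passing through the uniform norm. A secondary technical point is that the seminorms $|f_t|_{C^n}$ of the continuation values must be uniformly bounded in $t$; this holds for sufficiently regular Markov chains and bounded payoffs, but should be recorded as an explicit hypothesis of the formal statement.
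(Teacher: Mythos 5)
Your proposal is correct and follows essentially the same route as the paper: a multivariate Jackson-type inequality on a cube (the paper's Lemma~\ref{lem:differentiable_polynomials}) converts the $C^n$ hypothesis into a $Cq^{-n}$ bound on the approximation error via domination of $\|\cdot\|_{L^2(\rho_t)}$ by the uniform norm, after which $q=\lceil(\Theta(5^T C)/\epsilon)^{1/n}\rceil$ and a rescaled Monte Carlo accuracy $\epsilon_0 = \Theta\bigl((\epsilon/5^T)^{1+d/n}\bigr)$ split the error budget, and $m=O(q^d)$ is substituted into the complexities of Corollary~\ref{cor:comparison}. The two caveats you flag are precisely how the formal Theorem~\ref{thr:quantum_result3} is stated: it assumes the domain is $Q_d(\lambda)$ and that a uniform bound $C\geq C_D\max_t\|\mathbb{E}[Z_{\tau_{t+1}}|X_t]\|_{C^n(Q_d(\lambda))}$ is known, with the unbounded-domain case handled only later via truncation in the Brownian-motion case studies.
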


We obtain similar results if the continuation values are Lipschitz continuous, in which case the time complexities are the ones from the previous theorem with $n=1$.

If the continuation values are $n$-times differentiable, for $n = \Theta(\log(5^T/\epsilon)/\log\log(5^T/\epsilon))$, then we get the sought-after quadratic improvement from $\widetilde{O}((5^T/\epsilon)^{2})$ classical runtime to $\widetilde{O}(5^T/\epsilon)$ quantum runtime, up to polylog terms. We briefly note that such smoothness conditions on the continuation values are not unusual in areas like finance. Indeed, the continuation values can even be in $C^\infty$ in some models, e.g.\ Black-Scholes~\cite{gerhold2011longstaff,tankov2004option}. 

Very recently, Miyamoto~\cite{miyamoto2021bermudan} proposed a quantum {\rm LSM} algorithm based on Chebyshev interpolation through Chebyshev nodes and obtained a complexity of $O(\epsilon^{-1}\log^d(1/\epsilon)\text{poly}\log\log(1/\epsilon))$. Our approach, in contrast, is to project $\mathbb{E}[Z_{\tau_{t+1}}|X_t]$ onto a set of polynomials and is, for this reason, much more general. Moreover, our final result is a \emph{time} complexity, while the result from~\cite{miyamoto2021bermudan} is a \emph{query} complexity on the number of unitaries called by all quantum routines for Monte Carlo. Finally, Miyamoto~\cite{miyamoto2021bermudan} assumes that the continuation values are analytical functions, i.e., are in $C^\infty$, while we only need to assume $\mathbb{E}[Z_{\tau_{t+1}}|X_t]\in C^n$ for $n = \Theta(\log(5^T/\epsilon)/\log\log(5^T/\epsilon))$ in order to recover $\widetilde{O}(\epsilon^{-1})$ up to polylog factors. One downside of our approach, though, is the presence of quantities that implicitly depend on the underlying Markov chain.

As just mentioned, the full results behind the informal theorems above involve parameters that depend on the underlying Markov chain such as the minimum singular value of the matrices $A_t$. In order to explicitly work these parameters out, we also study the case when the underlying Markov process follows Brownian motion or geometric Brownian motion and obtain a simplified version of our algorithm. In the case of Brownian motion, we choose Hermite polynomials as the functions $\{e_{t,k}:E\to\mathbb{R}\}_{k=1}^m$ for each $t\in[T-1]$, since they are orthogonal under the probability measure underlying a Brownian motion. This means that the matrices $A_t$ are just the identity. The final result is a mild reduction on the classical and quantum time complexities to $\widetilde{O}((5^T/\epsilon)^{2+4d/n})$ and $\widetilde{O}((5^T/\epsilon)^{1+7d/2n})$, respectively. For the geometric Brownian motion, we pick suitable monomials that reduce the matrices $A_t$ to Vandermonde matrices, whose minimum singular value can be bounded. We obtain the final classical and quantum complexities $e^{O((5^T/\epsilon)^{2/n})}(5^T/\epsilon)^{2+12d/n}$ and $e^{O((5^T/\epsilon)^{2/n})}(5^T/\epsilon)^{1+15d/2n}$, respectively. If the continuation values are again $n$-times differentiable for $n=\Theta(\log(5^T/\epsilon)/\log\log(5^T/\epsilon))$, then the classical and quantum complexities for the Brownian motion setting reduce to the usual $\widetilde{O}(\epsilon^{-2})$ and $\widetilde{O}(\epsilon^{-1})$, respectively, while, for the geometric Brownian motion, they reduce to $e^{O(\log^c(5^T/\epsilon))}(5^T/\epsilon)^2$ and $e^{O(\log^c(5^T/\epsilon))}(5^T/\epsilon)$ for any constant $0<c<1$. These results for the geometric Brownian motion are slightly weaker than the usual $\widetilde{O}((5^T/\epsilon)^{2})$ and $\widetilde{O}(5^T/\epsilon)$, since the bound on the minimum singular value of the matrix $A_t$ is very sensitive to the degree $q$ of the chosen monomials.

\section{Preliminaries}

Define $[n] := \{1, \dots, n\}$. Given $b\in\mathbb{R}^m$ and $A\in\mathbb{R}^{m\times m}$ for some $m\in\mathbb{N}$, let $\|b\|_2 := \sqrt{\sum_{i=1}^m b_i^2}$ and $\|A\|_2 := \sigma_{\max}(A)$ be the vector and matrix norms, respectively, where $\sigma_{\max}(A)$ is the maximum singular value of $A$. We shall denote by $\sigma_{\min}(A)$ the minimum singular value of $A$. Given an event $\mathscr{E}$, we denote by $\mathbf{1}\{\mathscr{E}\}$ the indicator function $\mathbf{1}\{\mathscr{E}\} = 1$ if $\mathscr{E}$ is true and $0$ if not. Let $\omega_\ast$ denote the matrix multiplication exponent.

Let $(\Omega, \mathcal{F}, \mathbb{P})$ be a probability space and $\{0,1,\dots, T\}$ an index set with $T \in \mathbb{N}$. For every $t\in\{0,\dots,T\}$, let $\F_t$ be a sub $\sigma$-algebra on $\Omega$. If $\F_k \subseteq \F_l$ for all $k\leq l$, then $(\Omega, \mathcal{F}, (\F_t)_{t=0}^T, \mathbb{P})$ is a filtered probability space with filtration $(\F_t)_{t=0}^T$. 
Consider a discrete-time stochastic process $\mathbf{X} = (X_t)_{t=0}^T$, assumed to be Markovian, defined on a filtered probability space $(\Omega,\mathcal{F},(\mathcal{F}_t)_{t=0}^T,\mathbb{P})$ and with state space $(E,\mathcal{E})$, where $E \subseteq \mathbb{R}^d$. The set $E$ is equipped with its natural Borel $\sigma$-algebra inherited from $\mathbb{R}^d$, which we denote as $\mathcal{E}$. For $i\in[d]$, let $X_{t,i}:\Omega \to \mathbb{R}$ be the $i$-th component of $X_t$. We shall assume that $\mathbf{X}$ is \emph{adapted} with respect to $(\mathcal{F}_t)_{t=0}^T$, meaning that each $X_t$ is $\mathcal{F}_t$-measurable. We shall also assume that $X_0 = x_0$ is deterministic and, therefore, sometimes write the Markov chain $(X_t)_{t=1}^T$ as starting from $t=1$. Each element $X_t$ for $t\in\{0,\dots,T\}$, called the underlying process at time $t$, gives rise to an image probability measure (also called pushforward measure) $\rho_t$ in $E\subseteq\mathbb{R}^d$, i.e., $\rho_t(Y) = \mathbb{P}[\omega\in\Omega:X_t(\omega)\in Y]$ for any $Y\in\mathcal{E}$ (note that $\rho_0$ is the probability measure that assigns measure $1$ to the singleton set containing $x_0$). We denote by $\mathbf{X}_t = (X_j)_{j=t}^T$ the last $T-t+1$ random variables in the stochastic process. Let $L^2(E,\rho_t)=L^2(\rho_t)$ be the set of squared integrable functions with norm $\|f\|_{L^2(\rho_t)} = \sqrt{\mathbb{E}_{\rho_t}[|f(X_t)|^2]}$. Moreover, define the uniform norm $\|f\|_u = \sup \{|f(s)| : s \in E \}$ for $f:E\to\mathbb{R}$. 

\subsection{Stochastic optimal stopping problems}

The stochastic process $(X_t)_{t=0}^T$ with state space $E\subseteq\mathbb{R}^d$ represents the randomness of the problem under consideration. 
In economics, this process can model $d$ factors driving the economy and in finance this process can represent the 
values of some $d$ underlying set of securities. For our optimal stopping problem, we are interested in evaluating some function (which we may call ``payoff function") of this stochastic process. 
More formally:
\begin{definition}[Payoff process~\cite{follmer2016stochastic}]
\label{defPayoff}
    A payoff process is a non-negative adapted process $(Z_t)_{t=0}^T$ on the filtered probability space $(\Omega, \mathcal{F}, (\mathcal{F}_t)_{t=0}^T, \mathbb{P})$.
    In this work, we consider an adapted payoff process $(Z_t)_{t=0}^T$ obtained from the Markov chain $(X_t)_{t=0}^T$ as $Z_t = z_t(X_t)$ for Borel functions $z_t(\cdot)\in L^2(E,\rho_t)$, $t\in\{0,\dots,T\}$.
\end{definition}
An example of a non-negative adapted process is the American contingent claim.
Common examples of American options are the \emph{put} and \emph{call} options on a single asset given by the $Z^{\rm put} = z^{\rm put}(X) \coloneqq \max\{0,K-X\}$ and $Z^{\rm call} = z^{\rm call}(X) \coloneqq \max\{0,X-K\}$, respectively, where the constant $K>0$ is called strike price of the option.

Next, we define the notion of a stopping time (also called Markov time). A stopping time is any random variable that selects one of the possible times $\{0,1,\dots,T\}\cup\{+\infty\}$ and satisfies a measureability condition. 
\begin{definition}[Stopping time]
    A \emph{stopping time} is a function $\tau:\Omega\to\{0,1,\dots,T\}\cup\{+\infty\}$ such that $\{\omega\in\Omega|\tau(\omega)=t\}\in\mathcal{F}_t$ for $t\in\{0,\dots,T\}$. 
    The payoff obtained by using $\tau$ is $Z_\tau(\omega) := Z_{\tau(\omega)}(\omega)$.
    Let $\mathbb{T}_{t} := \{\tau|\tau~\text{is a stopping time with}~t\leq \tau \leq T\}$ be the set of all stopping times taking values in $[t,T]$. A stopping time $\tau^\ast\in\mathbb{T}_{t}$ is called \emph{optimal} in the interval $[t,T]$ if
    \begin{align*}
        \mathbb{E}[Z_{\tau^\ast}|X_t] = \operatorname*{ess~sup}_{\tau\in\mathbb{T}_{t}}\mathbb{E}[Z_\tau|X_t].
    \end{align*}
\end{definition}
The maximization is expressed via an essential supremum such that the null sets of the probability measure do not affect the result. For more details on the essential supremum see~\cite[Appendix~A.5]{follmer2016stochastic}.
The optimal stopping problem then consists in finding a stopping time $\tau$ that maximizes the expected value payoff.

\begin{problem}[Precise formulation of Problem~\ref{prob:prob1}] 
\label{defProblem}
    Let $(Z_t)_{t=0}^T$ be a payoff process.
    For $\epsilon_{\rm final}>0$, approximate the exact value $\operatorname*{sup}_{\tau\in\mathbb{T}_{0}}\mathbb{E}[Z_\tau]$ to additive accuracy $\epsilon_{\rm final}$ with high probability.
\end{problem}
%
A well-studied solution strategy for the above problem statement is based on dynamic programming for a set of stopping times. 
A crucial concept is the \emph{Snell envelope}~\cite{neveu1975discrete,follmer2016stochastic}.
\begin{definition}[Snell envelope]
    \label{def:snell}
    Let $(\Omega, \mathcal{F},(\mathcal{F}_t)_{t=0}^T, \mathbb{P})$ be a filtered probability space and $(X_t)_{t=0}^T$ a Markov chain with state space $E\subseteq\mathbb{R}^d$. Let $(Z_t)_{t=0}^T$ be a payoff process where $Z_t = z_t(X_t)$ for some $z_t \in L^2(E,\rho_t)$, $t\in\{0,\dots,T\}$. The Snell envelope $\mathcal{U}_t: \Omega \to\mathbb{R}$ of $Z_t$ is defined as
    \[ \begin{cases} 
          \mathcal{U}_T = Z_T, \\
          \mathcal{U}_t =  \max \left\{ Z_t, \E[ \mathcal{U}_{t+1} | X_t] \right\}, & 0 \leq t \leq T-1.
       \end{cases}
    \]
\end{definition}
Define the stopping times $\tau_t := \min\{u\geq t~|~\mathcal{U}_u = Z_u\}$. The Snell envelope is related to the maximal expected payoff according to the next theorem: $\tau_t$ maximizes the expectation of $Z_\tau$ among all $\tau\in\mathbb{T}_{t}$, i.e., that $\tau_{t}$ are optimal stopping times (in their respective intervals).
\begin{theorem}[{\cite[Theorem~6.18]{follmer2016stochastic}}]
    \label{thr:snellproperties}
    The Snell envelope $\mathcal{U}_t$ of $Z_t$ satisfies
    \begin{align*}
        \mathcal{U}_t = \mathbb{E}[Z_{\tau_t}|X_t] = \operatorname*{ess~sup}_{\tau\in\mathbb{T}_{t}}\mathbb{E}[Z_\tau|X_t],
    \end{align*}
    where $\tau_t := \min\{u\geq t~|~\mathcal{U}_u = Z_u\}$. In particular, $\mathcal{U}_0 = \mathbb{E}[Z_{\tau_0}] = \operatorname*{sup}_{\tau\in\mathbb{T}_{0}}\mathbb{E}[Z_\tau] = \max\{Z_0,\mathbb{E}[Z_{\tau_1}]\}$.
\end{theorem}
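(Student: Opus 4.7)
The plan is to establish the two equalities by showing that the Snell envelope is (i) an upper bound on $\mathbb{E}[Z_\tau \mid X_t]$ for every $\tau \in \mathbb{T}_t$, and (ii) achieved by the specific stopping time $\tau_t$. Along the way I will use that, by backward induction on the recursion in Definition~\ref{def:snell} and the Markov property of $(X_t)_{t=0}^T$, each $\mathcal{U}_t$ is actually a (measurable) function $u_t(X_t)$, so conditional expectations with respect to $\mathcal{F}_t$ and with respect to $X_t$ agree whenever the conditioned quantity is a function of $(X_s)_{s \geq t}$. In particular, the event $\{\mathcal{U}_u = Z_u\} = \{u_u(X_u) = z_u(X_u)\}$ is $\mathcal{F}_u$-measurable, so $\tau_t$ is a genuine stopping time with $t \leq \tau_t \leq T$.

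For the upper bound, I would first observe that $(\mathcal{U}_t)_{t=0}^T$ is a supermartingale: directly from the defining recursion,
\begin{equation*}
  \mathcal{U}_t \;=\; \max\{Z_t,\, \mathbb{E}[\mathcal{U}_{t+1}\mid X_t]\} \;\geq\; \mathbb{E}[\mathcal{U}_{t+1}\mid X_t] \;=\; \mathbb{E}[\mathcal{U}_{t+1}\mid \mathcal{F}_t],
\end{equation*}
using the Markov property in the last step. A straightforward induction (summing the one-step inequality) gives $\mathcal{U}_t \geq \mathbb{E}[\mathcal{U}_\tau \mid \mathcal{F}_t]$ for every $\tau \in \mathbb{T}_t$, a discrete optional-stopping argument. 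Since $\mathcal{U}_u \geq Z_u$ pointwise for all $u$ by the recursion, this yields $\mathcal{U}_t \geq \mathbb{E}[Z_\tau \mid \mathcal{F}_t] = \mathbb{E}[Z_\tau \mid X_t]$ for every $\tau \in \mathbb{T}_t$, and therefore $\mathcal{U}_t \geq \operatorname*{ess~sup}_{\tau \in \mathbb{T}_t}\mathbb{E}[Z_\tau \mid X_t]$.

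For the lower bound, I would prove that the stopped process $(\mathcal{U}_{s \wedge \tau_t})_{s \geq t}$ is in fact a martingale. The key observation is that on $\{\tau_t > s\}$ one has, by definition of $\tau_t$, $\mathcal{U}_s > Z_s$, hence the defining recursion forces $\mathcal{U}_s = \mathbb{E}[\mathcal{U}_{s+1}\mid X_s]$ on that event. Splitting according to $\{\tau_t > s\}$ versus $\{\tau_t \leq s\}$ then gives
\begin{equation*}
  \mathbb{E}[\mathcal{U}_{(s+1)\wedge \tau_t}\mid \mathcal{F}_s] \;=\; \mathbf{1}\{\tau_t \leq s\}\,\mathcal{U}_{\tau_t} + \mathbf{1}\{\tau_t > s\}\,\mathbb{E}[\mathcal{U}_{s+1}\mid \mathcal{F}_s] \;=\; \mathcal{U}_{s\wedge \tau_t}.
\end{equation*}
Iterating from $s=t$ up to $s=T$, and using $\tau_t \leq T$ together with $\mathcal{U}_{\tau_t} = Z_{\tau_t}$ (true at $\tau_t$ by construction, and at $T$ because $\mathcal{U}_T = Z_T$), I obtain $\mathcal{U}_t = \mathbb{E}[\mathcal{U}_{\tau_t}\mid \mathcal{F}_t] = \mathbb{E}[Z_{\tau_t}\mid \mathcal{F}_t] = \mathbb{E}[Z_{\tau_t}\mid X_t]$. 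Combined with the upper bound, this proves $\mathcal{U}_t = \mathbb{E}[Z_{\tau_t}\mid X_t] = \operatorname*{ess~sup}_{\tau\in \mathbb{T}_t}\mathbb{E}[Z_\tau\mid X_t]$. Specializing to $t=0$ and using that $X_0 = x_0$ is deterministic (so conditioning on $X_0$ is trivial) gives $\mathcal{U}_0 = \mathbb{E}[Z_{\tau_0}] = \sup_{\tau \in \mathbb{T}_0}\mathbb{E}[Z_\tau]$, and the final identity $\mathcal{U}_0 = \max\{Z_0, \mathbb{E}[Z_{\tau_1}]\}$ follows by reading off the recursion at $t=0$.

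The main obstacle is really bookkeeping rather than any deep step: one must carefully justify interchanging the essential supremum with the supermartingale optional-stopping bound (i.e., that the supremum over $\tau \in \mathbb{T}_t$ of $\mathbb{E}[Z_\tau\mid X_t]$ is attained at $\tau_t$ and equals $\mathcal{U}_t$ almost surely, not merely in expectation), and must keep track of the identification $\mathbb{E}[\,\cdot \mid \mathcal{F}_t] = \mathbb{E}[\,\cdot \mid X_t]$ afforded by the Markov property each time it is invoked. Since all random variables here take finitely many values in a discrete time horizon, integrability is automatic from $z_t \in L^2(\rho_t)$, so there are no further technicalities.
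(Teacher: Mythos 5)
The paper does not prove this statement at all --- it imports it verbatim as Theorem~6.18 of F\"ollmer--Schied~\cite{follmer2016stochastic} --- so there is no internal proof to compare against. Your argument is correct and is essentially the standard proof from that reference: the upper bound via the supermartingale property of $(\mathcal{U}_t)_t$ plus discrete optional stopping, and the lower bound via the observation that the stopped process $(\mathcal{U}_{s\wedge\tau_t})_{s\geq t}$ is a martingale because $\mathcal{U}_s=\mathbb{E}[\mathcal{U}_{s+1}\mid X_s]$ on $\{\tau_t>s\}$. All the key steps (the $\mathcal{F}_u$-measurability of $\{\mathcal{U}_u=Z_u\}$, $\tau_t\leq T$ via $\mathcal{U}_T=Z_T$, and reading off $\mathcal{U}_0=\max\{Z_0,\mathbb{E}[Z_{\tau_1}]\}$ from the recursion) are in place.

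One small bookkeeping caveat: for a \emph{general} $\tau\in\mathbb{T}_t$, the random variable $Z_\tau=\sum_{s=t}^T\mathbf{1}\{\tau=s\}Z_s$ need not be a function of $(X_s)_{s\geq t}$ alone, since the events $\{\tau=s\}$ may depend on the pre-$t$ history; so you cannot directly invoke your stated rule that $\mathbb{E}[\,\cdot\mid\mathcal{F}_t]=\mathbb{E}[\,\cdot\mid X_t]$ for such $Z_\tau$. The fix is immediate: establish $\mathcal{U}_t\geq\mathbb{E}[Z_\tau\mid\mathcal{F}_t]$ first, then take $\mathbb{E}[\,\cdot\mid X_t]$ of both sides and use the tower property together with the fact that $\mathcal{U}_t$ is $\sigma(X_t)$-measurable to conclude $\mathcal{U}_t\geq\mathbb{E}[Z_\tau\mid X_t]$. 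With that one-line repair the proof is complete.
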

Hence, finding an approximate $\mathcal U_0$ solves our Problem~\ref{defProblem}.
In order to solve the dynamic programming behind the Snell envelope, it is more convenient to recast the dynamic programming in terms of the optimal stopping times $\tau_t$ (rather than in terms of value functions) as follows.
\begin{theorem}
    \label{thr:newdynamic}
    The dynamic programming principle in Definition~{\rm \ref{def:snell}} can be recast in terms of the stopping times $\tau_t = \min\{u\geq t~|~ \mathcal{U}_u=Z_u\}$ as
    \begin{align*}
        \begin{cases} 
            \tau_T = T, \\
            \tau_t = t \mathbf{1}\{Z_t \geq \E[Z_{\tau_{t+1}} | X_t]\} +  \tau_{t+1}\mathbf{1}\{Z_t < \E[Z_{\tau_{t+1}} | X_t]\}, & 0 \leq t \leq T-1.
       \end{cases}
    \end{align*}
\end{theorem}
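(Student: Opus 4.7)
The plan is to prove the claim by backward induction on $t$, using the Snell envelope characterization $\mathcal{U}_t = \mathbb{E}[Z_{\tau_t}\mid X_t]$ supplied by Theorem~\ref{thr:snellproperties} to rewrite the conditional expectation $\mathbb{E}[\mathcal{U}_{t+1}\mid X_t]$ that appears in Definition~\ref{def:snell} in terms of $Z_{\tau_{t+1}}$. The base case $\tau_T = T$ is immediate because $\mathcal{U}_T = Z_T$ forces the minimum in $\tau_T = \min\{u\geq T : \mathcal{U}_u = Z_u\}$ to equal $T$.

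For the inductive step, I would first combine Theorem~\ref{thr:snellproperties} with the tower property of conditional expectation, writing
\begin{align*}
    \mathbb{E}[\mathcal{U}_{t+1}\mid X_t] = \mathbb{E}\bigl[\mathbb{E}[Z_{\tau_{t+1}}\mid X_{t+1}]\,\bigm|\,X_t\bigr] = \mathbb{E}[Z_{\tau_{t+1}}\mid X_t],
\end{align*}
where the last equality uses that $\sigma(X_t)\subseteq\sigma(X_{t+1})$ in the Markovian filtration picture (or equivalently, that $\tau_{t+1}\geq t+1$ and the payoff $Z_{\tau_{t+1}}$ is $\mathcal{F}_T$-measurable). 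Plugging this into the Snell recursion gives
\begin{align*}
    \mathcal{U}_t = \max\bigl\{Z_t,\,\mathbb{E}[Z_{\tau_{t+1}}\mid X_t]\bigr\}.
\end{align*}

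I would then split into two cases on the $\mathcal{F}_t$-measurable event involving $Z_t$ and the continuation value. On the event $\{Z_t \geq \mathbb{E}[Z_{\tau_{t+1}}\mid X_t]\}$, the previous display yields $\mathcal{U}_t = Z_t$, hence $t$ itself lies in the set $\{u\geq t : \mathcal{U}_u = Z_u\}$, forcing $\tau_t = t$. On the complementary event $\{Z_t < \mathbb{E}[Z_{\tau_{t+1}}\mid X_t]\}$, we instead have $\mathcal{U}_t = \mathbb{E}[Z_{\tau_{t+1}}\mid X_t] > Z_t$, so $t$ is excluded from $\{u\geq t : \mathcal{U}_u = Z_u\}$ and thus
\begin{align*}
    \tau_t = \min\{u\geq t+1 : \mathcal{U}_u = Z_u\} = \tau_{t+1}.
\end{align*}
Multiplying each case by its indicator and summing gives exactly the claimed identity.

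The only subtle point I anticipate is verifying that the splitting is well-defined almost surely and that the resulting $\tau_t$ is still a stopping time: this requires noting that $\mathbb{E}[Z_{\tau_{t+1}}\mid X_t]$ has an $\mathcal{F}_t$-measurable version, so the indicator events belong to $\mathcal{F}_t$, and that the equality $\mathcal{U}_t = \max\{Z_t,\mathbb{E}[Z_{\tau_{t+1}}\mid X_t]\}$ holds $\mathbb{P}$-a.s.\ (consistent with the essential-supremum formulation in Theorem~\ref{thr:snellproperties}). Modulo these measurability remarks, the argument is a direct inductive rewriting, and no new probabilistic machinery beyond Theorem~\ref{thr:snellproperties} and the tower property is needed.
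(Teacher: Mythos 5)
Your argument is correct and follows essentially the same route as the paper's proof: both identify $\mathbb{E}[Z_{\tau_{t+1}}\mid X_t]$ with $\mathbb{E}[\mathcal{U}_{t+1}\mid X_t]$ via Theorem~\ref{thr:snellproperties} and the tower property, and then case-split on whether $Z_t$ dominates the continuation value to conclude $\tau_t = t$ or $\tau_t = \tau_{t+1}$ from the definition $\tau_t = \min\{u\geq t : \mathcal{U}_u = Z_u\}$. The "backward induction" framing is cosmetic since $\tau_t$ is defined directly rather than recursively, and your measurability caveats (in particular, justifying the tower step via the Markov property rather than a literal inclusion $\sigma(X_t)\subseteq\sigma(X_{t+1})$) are handled at the same level of informality as in the paper.
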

\begin{proof}
    The case $t=T$ is trivial. Assume then $t<T$. Note that $\mathbb{E}[Z_{\tau_{t+1}}|X_t] = 
    \mathbb{E}[\mathbb E[Z_{\tau_{t+1}}|X_{t+1}]|X_t]
    $ because of the tower property of the expectation value with the filtration generated by $X_t$. In addition, $\mathbb E[Z_{\tau_{t+1}}|X_{t+1}] = \mathcal U_{t+1}$ from Theorem~\ref{thr:snellproperties}.
    Hence, if $Z_t \geq \mathbb{E}[Z_{\tau_{t+1}}|X_t]$, then $Z_t \geq  \mathbb{E}[\mathcal{U}_{t+1}|X_t]$. This latter statement, by the definition of the Snell envelope, implies $\mathcal{U}_t = Z_t$ and then $\tau_t = t$. On the other hand, if $Z_t < \mathbb{E}[Z_{\tau_{t+1}}|X_t]$, then, $Z_t < \mathbb{E}[\mathcal{U}_{t+1}|X_t] \implies Z_t \neq \mathcal{U}_t$, and so $\tau_t = \min\{u\geq t~|~ Z_u=\mathcal{U}_u\} = \min\{u\geq t+1~|~ Z_u=\mathcal{U}_u\} = \tau_{t+1}$.
\end{proof}
The stopping time $\tau_0$ maximizes $\mathbb{E}[Z_\tau]$ in Problem~\ref{defProblem}. The conditional expectations $\mathbb{E}[Z_{\tau_{t+1}}|X_t]$ are called \emph{continuation values}.

\subsection{The least squares Monte Carlo algorithm}

The {\rm LSM} algorithm consists in solving the dynamic programming in Theorem~\ref{thr:newdynamic} by means of two approximations. The first one is to approximate the continuation values $\mathbb{E}[Z_{\tau_{t+1}}|X_t]$ using a set of measurable real-valued functions in $L^2(E,\rho_t)$, e.g.\ by projection onto a finite-dimensional set of linearly independent polynomials. Let $\mathscr{H}_0\subseteq\mathbb{R}$ and let, for $t\in[T-1]$, $\mathscr{H}_t\subseteq L^2(E,\rho_t)$ be a subset of real-valued functions on $E$, called \emph{approximation architecture} or hypothesis class, that will be used to approximate the continuation values. By approximating $\mathbb{E}[Z_{\tau_{t+1}}|X_t]$ by $f_t\in\mathscr{H}_t$ for each $t\in\{0,\dots,T-1\}$, we can write the approximate dynamic programming as
\begin{align}
    \label{eq:newdynamic}
    \begin{cases} 
           \widetilde{\tau}_T = T, \\
            \widetilde{\tau}_t = t \mathbf{1}\{Z_t \geq f_t\} +  \widetilde{\tau}_{t+1}\mathbf{1}\{Z_t < f_t\}, & 0 \leq t \leq T-1.
       \end{cases}
\end{align}
Note that $\widetilde{\tau}_t = \widetilde{\tau}_t(f_t,\dots,f_{T-1})$ depends on the approximation architecture. 

The second approximation of the algorithm is to numerically evaluate the approximations $f_t$ in $L^2(\rho_t)$ by a Monte Carlo procedure. We sample $N$ independent paths $(X_t^{(1)})_{t=0}^T,\dots,(X_t^{(N)})_{t=0}^T$ of the Markov chain $\mathbf{X} = (X_t)_{t=0}^T$ and denote by $Z^{(n)}_t = z_t(X^{(n)}_t)$ the associated payoffs conditioned on $X_t^{(n)}$, where $z_t\in L^2(E,\rho_t)$, $t\in\{0,\dots,T\}$. Write the random variables of the last $T-t+1$ elements of all the sampled Markov chains by $\mathbf{X}_t^{(N)} = (X_t^{(1)},\dots,X_t^{(N)},X_{t+1}^{(1)},\dots,X_{t+1}^{(N)},\dots,X_T^{(1)},\dots,X_T^{(N)})$. For each path, the dynamic programming in Eq.~\eqref{eq:newdynamic} is solved recursively by approximating the continuation values in $\mathscr{H}_t$ via a least-square estimator. The result is sampled stopping times $\widetilde{\tau}_{t}^{(n)}$ that $\widetilde{\tau}_{t}$ takes on each random path. We stress that, due to the recursive nature of Eq.~\eqref{eq:newdynamic}, the stopping times $\widetilde{\tau}_{t}^{(n)}$ will depend on $\mathbf{X}_t^{(N)}$, and consequently also the payoffs $Z^{(n)}_{\widetilde{\tau}^{(n)}_t} = z_{\widetilde{\tau}^{(n)}_t(\mathbf{X}_t^{(N)})}\big(X^{(n)}_{\widetilde{\tau}^{(n)}_t(\mathbf{X}_t^{(N)})}\big)$. The dependence on $\mathbf{X}_t^{(N)}$ should be clear from the context and therefore we shall simply write $Z^{(n)}_{\widetilde{\tau}^{(n)}_t}$. In summary, combining both the approximation architecture and the Monte Carlo sampling, at each $t\in[T-1]$ we take $f_t\in \mathscr{H}_t$, depending on $\mathbf{X}_t^{(N)}$, satisfying
\begin{align}
    \label{eq:minimizer}
    \frac{1}{N}\sum_{n=1}^N \left(Z^{(n)}_{\widetilde{\tau}_{t+1}^{(n)}} - f_t(X_t^{(n)}) \right)^2 \leq \epsilon + \operatorname*{\inf}_{g\in\mathscr{H}_t}\frac{1}{N}\sum_{n=1}^N \left(Z^{(n)}_{\widetilde{\tau}_{t+1}^{(n)}} - g(X_t^{(n)}) \right)^2
\end{align}
for some given $\epsilon \geq 0$. It might be the case that an exact minimizer of the above optimization problem does not exist (the infimum does not belong to $\mathscr{H}_t$) or is hard to compute, meaning that an $\epsilon$-approximation could be used. Given the choice of $f_t\in \mathscr{H}_t$, it is then used in Eq.~\eqref{eq:newdynamic} to obtain $\widetilde{\tau}_t^{(n)}$, and so on recursively. At the end of the recursion we can take $f_0 = \frac{1}{N}\sum_{n=1}^N Z^{(n)}_{\widetilde{\tau}_1^{(n)}}$ as an exact minimizer, since $X_0$ is constant, and obtain the approximation $\widetilde{\mathcal{U}}_0$ to $\mathcal{U}_0$ as
\begin{align}
    \label{eq:approximate_result}
    \widetilde{\mathcal{U}}_0 = \max\left\{Z_0,\frac{1}{N}\sum_{n=1}^N Z^{(n)}_{\widetilde{\tau}_1^{(n)}}\right\}.
\end{align}

In this paper we shall be particularly interested in finite-dimensional linear approximation architectures, for which an exact minimizer exists in Eq.~\eqref{eq:minimizer}. Consider then a set $\{e_{t,k}:E\to\mathbb{R}\}_{k=1}^m$ of $m$ linearly independent measurable real functions and take the vector space generated by them as our approximation architecture $\mathscr{H}_t$, $t\in[T-1]$. Therefore, the infimum in Eq.~\eqref{eq:minimizer} is attained by projecting the continuation values onto $\mathscr{H}_t$ as $\alpha_t\cdot \vec{e}_t(X_t)$, where $\vec{e}_t(\cdot) := (e_{t,1}(\cdot),\dots,e_{t,m}(\cdot))^{\top}$ and the $m$-dimensional vector $\alpha_t$, the projection coefficients, is the least-square estimator given by~\cite{clement2001analysis}
\begin{align*}
    \alpha_t = \arg\operatorname*{\min}_{a\in\mathbb{R}^m}\mathbb{E}\big[(Z_{\widetilde{\tau}_{t+1}} - a\cdot \vec{e}_t(X_t))^2\big].
\end{align*}
Given the assumption that $\{e_{t,k}\}_{k=1}^m$ are linearly independent for each $t\in[T-1]$, the vector $\alpha_t\in\mathbb{R}^m$ has the explicit expression
\begin{align}
    \label{eq:coefficient_vector}
    \alpha_t =A_t^{-1}b_t \text{ where }  b_t=\mathbb{E}[Z_{\widetilde{\tau}_{t+1}}\vec{e}_t(X_t)]
\end{align}
and the $m\times m$ matrix $A_t$ has coefficients
\begin{align}
    \label{eq:projection_matrix}
    (A_t)_{k,l} = \mathbb{E}[e_{t,k}(X_t)e_{t,l}(X_t)].
\end{align}
Often it is hard to compute $\alpha_t$ and $A_t$ exactly. As previously mentioned, the {\rm LSM} algorithm approximates these by Monte Carlo sampling,
\begin{align}
    \label{eq:approximate_coefficient_vector}
    \widetilde{\alpha}_t = \widetilde{A}_t^{-1}\frac{1}{N}\sum_{n=1}^N Z^{(n)}_{\widetilde{\tau}^{(n)}_{t+1}}\vec{e}_t(X_t^{(n)})
\end{align}
and
\begin{align}
    \label{eq:matrices_A}
    (\widetilde{A}_t)_{k,l} = \frac{1}{N}\sum_{n=1}^N e_{t,k}(X^{(n)}_t)e_{t,l}(X^{(n)}_t).
\end{align}
More generally, though, any good approximation $\widetilde{\alpha}_t$ and $\widetilde{A}_t$ to $\alpha_t$ and $A_t$, respectively, is valid, and we shall not restrict the notation $\widetilde{\alpha}_t$ and $\widetilde{A}_t$ to only mean the above sampled quantities.

We have introduced the quantities that are important for the 
{\rm LSM} algorithm. To present the algorithm, we first specify the input model, starting with sampling access to the Markov chain.
\begin{definition}[Sampling access to Markov chain]
    Given a Markov chain $(X_t)_{t=1}^T$ on a probability space $(\Omega, \mathbb{P})$ and with state space $E\subseteq\mathbb{R}^d$, we define sampling access as the ability to draw a sample $\omega\in\Omega$ according to $\mathbb{P}$ and observe the value $X_t(\omega)$ for some $t\in[T]$. One sample costs time $\mathcal{T}_{\rm samp}$.
\end{definition}
Furthermore, query access to a function is defined as being able to evaluate a function for a given input.
\begin{definition}[Query access to function]
    Let $E\subseteq\mathbb{R}^d$ and $h:E\to\mathbb{R}$ be a function. We define query access as the ability to observe the value $h(x)$ for any given $x\in E$. One query costs time $\mathcal{T}_h$.
\end{definition}
Here, we assume that the functions of the approximation architecture and functions for the payoff take time $\mathcal{T}_{e}$ and $\mathcal{T}_{z}$, respectively, to evaluate. Both sampling and function access have natural quantum extensions, as will be defined in Section~\ref{sec:quantum}.

We are now in the position to present the classical {\rm LSM} algorithm in Algorithm~\ref{alg:classicalalgo}. Since we focus on the case where the approximation architectures $\mathscr{H}_t$ are finite-dimensional and linear, we write Algorithm~\ref{alg:classicalalgo} for this particular case.

\begin{algorithm}[t]
	\caption{Classical {\rm LSM} algorithm for optimal stopping problem}
	\label{alg:classicalalgo}
	\begin{algorithmic}[1]
		\Require Integer $N\in\mathbb{N}$. Sampling access to Markov chain $(X_t)_{t=0}^T$ defined on a sample space $\Omega$ and with state space $E\subseteq\mathbb{R}^d$. Query access to $\{z_t:E\to \mathbb{R}\}_{t=0}^T$ and $\{e_{t,k}:E\to\mathbb{R}\}_{t\in[T-1],k\in[m]}$, where $\{e_{t,k}\}_{k=1}^m$ are linearly independent for each $t\in[T-1]$. Let $\vec{e}_t(\cdot) := (e_{t,1}(\cdot),\dots,e_{t,m}(\cdot))^\top$.

        \State Sample $N$ independent paths $(X^{(1)}_t,\dots,X^{(N)}_t)_{t=0}^T$.
        \State Query associated payoffs $(Z^{(1)}_t,\dots,Z^{(N)}_t)_{t=0}^T$ and values $(e_{t,k}(X_t^{(1)}),\dots,e_{t,k}(X_t^{(N)}))_{t\in[T-1],k\in[m]}$.
		\State Compute the matrices $\{\widetilde{A}_t\}_{t=1}^{T-1}$ with entries as in Eq.~\eqref{eq:matrices_A}.
	 \State Compute the inverses $\{\widetilde{A}_t^{-1}\}_{t=1}^{T-1}$.
	 \State Set $\widetilde{\tau}^{(n)}_T = T$ for $n\in[N]$.
\For{$t=T-1$ to $1$}
    \State Calculate the vector $\widetilde{\alpha}_t = \widetilde{A}_t^{-1}\frac{1}{N}\sum_{n=1}^N Z^{(n)}_{\widetilde{\tau}^{(n)}_{t+1}}\vec{e}_t(X_t^{(n)})$.
    \State Calculate $\widetilde{\tau}^{(n)}_t = t\mathbf{1}\{Z_t^{(n)} \geq \widetilde{\alpha}_t\cdot \vec{e}_t(X_t^{(n)})\} + \widetilde{\tau}_{t+1}^{(n)}\mathbf{1}\{Z_t^{(n)} < \widetilde{\alpha}_t\cdot \vec{e}_t(X_t^{(n)})\}$, $n\in[N]$.
\EndFor
\State Output $\widetilde{\mathcal{U}}_0 := \max\Big\{Z_0,\frac{1}{N}\sum_{n=1}^N Z^{(n)}_{\widetilde{\tau}_1^{(n)}}\Big\}$. 
 \end{algorithmic}
\end{algorithm} 

\subsection{Error analysis of least squares Monte Carlo algorithm}

As previously mentioned in the introduction, there has been a considerable amount of work in understanding the convergence and expected sample error of the {\rm LSM} algorithm, starting with Cl\'ement et al.~\cite{clement2001analysis} two decades ago until the more recent results of Zanger~\cite{zanger2020general}. We first review a general result of Zanger~\cite{zanger2020general} for any uniformly bounded approximation architecture with finite VC dimension.

\begin{theorem}[{\cite[Theorem~3]{zanger2020general}}]
    Consider the {\rm LSM} algorithm (Algorithm~{\rm \ref{alg:classicalalgo}}) with $N$ random sampled paths, $T$ time steps, and uniformly bounded approximation architectures $\mathscr{H}_t$, $t\in\{0,\dots,T-1\}$, with finite VC dimension. Assume that an exact minimizer is always chosen in Eq.~\eqref{eq:minimizer}. Then the expected error from the {\rm LSM} algorithm is
    \begin{align*}
        \mathbb{E}[|\widetilde{\mathcal{U}}_0 - \mathcal{U}_0 |]  = O\left(6^T\left(\sqrt{\frac{\log{N}}{N}} + \max_{0\leq t<T}\inf_{f\in\mathscr{H}_t}\|f - \mathbb{E}[Z_{\tau_{t+1}}|X_t]\|_{L^2(\rho_t)} \right) \right).
    \end{align*}
\end{theorem}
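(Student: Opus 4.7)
The plan is to decompose $|\widetilde{\mathcal{U}}_0 - \mathcal{U}_0|$ into an \emph{approximation error} and a \emph{statistical (sampling) error}, and to propagate both through a backward induction on $t$. I would introduce an intermediate ``idealized'' recursion in which the empirical minimizers $f_t$ of Eq.~\eqref{eq:minimizer} are replaced by the exact $L^2$-projections $f_t^\ast \in \mathscr{H}_t$ of the continuation values produced by the idealized recursion itself, with associated stopping times $\tau_t^\ast$. The triangle inequality
\begin{align*}
|\widetilde{\mathcal{U}}_0 - \mathcal{U}_0| \leq \bigl|\widetilde{\mathcal{U}}_0 - \mathbb{E}[Z_{\tau_0^\ast}]\bigr| + \bigl|\mathbb{E}[Z_{\tau_0^\ast}] - \mathcal{U}_0\bigr|
\end{align*}
then isolates the two contributions cleanly, and taking expectations throughout reduces matters to bounding each term separately.

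For the pure approximation term $|\mathbb{E}[Z_{\tau_0^\ast}] - \mathcal{U}_0|$, the key observation is that the exercise rules $\mathbf{1}\{Z_t \geq \mathbb{E}[Z_{\tau_{t+1}}|X_t]\}$ and $\mathbf{1}\{Z_t \geq f_t^\ast\}$ only disagree on the (random) event where $Z_t$ lies between the two values, and on that event the value lost is at most $|f_t^\ast - \mathbb{E}[Z_{\tau_{t+1}}|X_t]|$. Taking conditional expectations and using $\|\cdot\|_{L^1(\rho_t)} \leq \|\cdot\|_{L^2(\rho_t)}$ (Jensen), one obtains a one-step inequality of the form $\mathbb{E}|Z_{\tau_t} - Z_{\tau_t^\ast}| \leq c_1 \|f_t^\ast - \mathbb{E}[Z_{\tau_{t+1}^\ast}|X_t]\|_{L^2(\rho_t)} + \mathbb{E}|Z_{\tau_{t+1}} - Z_{\tau_{t+1}^\ast}|$. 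A second triangle inequality relates the best approximation error to $\mathbb{E}[Z_{\tau_{t+1}^\ast}|X_t]$ (a random target depending on the architectures) to the desired error against the true continuation value $\mathbb{E}[Z_{\tau_{t+1}}|X_t]$, at the cost of another factor of the already-accumulated error. Iterating backward from $t=T-1$ to $t=0$ then yields a geometric accumulation, and bookkeeping the constants from the two triangle inequalities at each step produces a bound of shape $c^T \max_{0\leq t<T} \inf_{f\in\mathscr{H}_t}\|f - \mathbb{E}[Z_{\tau_{t+1}}|X_t]\|_{L^2(\rho_t)}$ for some constant $c$; Theorem~\ref{thr:snellproperties} identifies the left-hand side with $|\mathbb{E}[Z_{\tau_0^\ast}] - \mathcal{U}_0|$.

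For the statistical term $|\widetilde{\mathcal{U}}_0 - \mathbb{E}[Z_{\tau_0^\ast}]|$, I would invoke Vapnik--Chervonenkis uniform convergence: since each $\mathscr{H}_t$ is uniformly bounded with finite VC dimension and the payoff is bounded, the squared-loss class $\{(z,x) \mapsto (z - f(x))^2 : f \in \mathscr{H}_t\}$ is a uniform Glivenko--Cantelli class with deviation rate $O(\sqrt{\log N / N})$. This forces the empirical minimizer $f_t$ of Eq.~\eqref{eq:minimizer} to match the $L^2(\rho_t)$ minimizer $f_t^\ast$ within the same rate, and plugging this discrepancy into the same backward recursion as for the approximation term produces the $\sqrt{\log N/N}$ contribution, again accumulated over $T$ steps. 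Combining both error sources with their joint multiplicative constant yields the $6^T$ prefactor in the statement.

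The main obstacle is that in the algorithm the empirical minimizer $f_t$ is fit against the algorithm's own sampled payoffs $Z^{(n)}_{\widetilde{\tau}_{t+1}^{(n)}}$, whose stopping times themselves depend on $\{X_s^{(n)}\}_{s>t}$ through nested thresholds. Thus the basic VC uniform-convergence bound must be applied to composite classes of sublevel-set indicators and stopping-time functions whose VC dimensions must themselves be shown to remain finite, and the standard Glivenko--Cantelli argument must be run without the regeneration-of-paths trick used in earlier works. This is precisely the regeneration-free analysis carried out in~\cite{zanger2020general} via combinatorial closure properties of VC classes together with a Rademacher-complexity estimate; I would import those lemmas directly to close the induction and to pin down the constant $6$.
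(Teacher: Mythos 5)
This statement is not proved in the paper at all: it is quoted verbatim as~\cite[Theorem~3]{zanger2020general} in the subsection reviewing prior error analyses, so there is no in-paper argument to compare your proposal against. Judged on its own terms, your sketch correctly reconstructs the architecture of Zanger's proof: the split into an idealized recursion with exact $L^2$-projections plus a sampling discrepancy, the observation that the exercise rules disagree only when $Z_t$ falls between the two thresholds (which is exactly the mechanism behind Lemma~\ref{lem:zanger}, the paper's imported analogue), the backward telescoping that produces the geometric $c^T$ prefactor (the same induction the paper carries out explicitly in Theorem~\ref{thr:quantum_result2} to get its $5^T$), and the VC uniform-convergence bound giving $\sqrt{\log N/N}$. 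You also correctly identify the genuinely hard point --- that the empirical minimizer at time $t$ is fit against payoffs $Z^{(n)}_{\widetilde{\tau}^{(n)}_{t+1}}$ whose stopping times depend on the same sample paths at later times, so the uniform-convergence argument must be run over composite classes without regenerating paths. However, your proposal resolves that point only by declaring that you would ``import those lemmas directly'' from~\cite{zanger2020general}; since that single-sample-path VC/Rademacher analysis \emph{is} the substance of Zanger's Theorem~3 (the earlier results of~\cite{glasserman2004number,zanger2013quantitative} already cover the regenerated-paths case), what you have is an accurate outline of the known proof with its central technical step left as a citation rather than a self-contained argument. Given that the paper treats the theorem the same way, this is acceptable as a reconstruction, but you should be explicit that the constant $6$ and the finiteness of the VC dimension of the induced stopping-time classes are inherited from the source, not established by your sketch.
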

The term $\inf_{f\in\mathscr{H}_t}\|f - \mathbb{E}[Z_{\tau_{t+1}}|X_t]\|_{L^2(\rho_t)}$ is known in learning theory as the \emph{approximation error} of the approximation architecture, and is a deterministic quantity that conveys the minimum error that is possible from using $\mathscr{H}_t$.

Often in the study of convergence of the {\rm LSM} algorithm, previous works~\cite{clement2001analysis,egloff2005monte,zanger2020general} do not take into consideration the time complexity of finding an exact (or $\epsilon$-approximation) minimizer in Eq.~\eqref{eq:minimizer}, but instead consider only the number of sampled paths $N$ as a measure of complexity. It is thus not immediate to compare, e.g.\ the above theorem, with our quantum {\rm LSM} algorithm in the next section, since sampled paths can be reused in different parts of an algorithm, but calls to a quantum oracle generally cannot. Nonetheless, the time complexity of Algorithm~\ref{alg:classicalalgo} can be precisely analyzed, due to the closed form for the exact minimizer, and it is given in the next theorem. This result will serve as our proxy when comparing our quantum algorithm to its classical counterpart. We shall postpone the proof until the end of next section, since it is very similar to Theorems~\ref{thr:quantum_result1} and~\ref{thr:quantum_result2} on our quantum algorithm.
\begin{theorem}
    
    Within the setting of Algorithm~{\rm \ref{alg:classicalalgo}}, consider $N$ independent sample paths with sample cost $\mathcal{T}_{\rm samp}$ and let the linearly independent functions $\{e_{t,k}:E\to\mathbb{R}\}_{k=1}^m$ for each $t\in[T-1]$ be such that $L := \max_{t\in[T-1],k\in[m]} \|e_{t,k}\|_{L^2(\rho_t)}$ and have query cost $\mathcal{T}_{e}$. Also consider $\{z_t:E\to\mathbb{R}\}_{t=0}^T$ with $R := \max_{t\in[T]}\|z_t\|_u < \infty$ and query cost $\mathcal{T}_{z}$. Moreover, let $\sigma_{\min} := \min_{t\in[T-1]} \sigma_{\min}(A_t) > 0$ and assume that $\sqrt{m}RL/\sigma_{\min} \geq 1$.
    Then, for any $\epsilon\in(0,\sigma_{\min}/2]$, Algorithm~{\rm \ref{alg:classicalalgo}} runs for $O(T m^2 N + Tm^{\omega_\ast} + TN(T+ \mathcal T_{\rm samp} + \mathcal T_{z} + m \mathcal T_{e}))$ time and returns an estimate $\widetilde{\mathcal{U}}_0$ such that
    \begin{align*}
        \operatorname{Pr}\left[\big|\widetilde{\mathcal{U}}_0 - \mathbb{E}[Z_{\tau_0}]\big| \geq 5^T\left(\frac{4\epsilon mRL^2}{\sigma_{\min}^2} + \operatorname*{\max}_{0 < t < T} \operatorname*{\min}_{a\in\mathbb{R}^m}\|a\cdot \vec{e}_t(X_t) - \mathbb{E}[Z_{\tau_{t+1}}|X_k]\|_{L^2(\rho_t)}\right) \right] \leq 6m^2e^{-2N\epsilon^2/m^2}.
    \end{align*}
\end{theorem}

If sampling and query costs are constant, then the complexity is $O(Tm^2N + Tm^{\omega_\ast} + T^2N)$. The factor $Tm^2$ comes from computing $\{A_t\}_{t=1}^{T-1}$ and accounts for runtime instead of only number of samples. The term $O(T^2 N)$ comes from the mapping $\widetilde{\tau}^{(n)}_{t+1}\mapsto Z^{(n)}_{\widetilde{\tau}^{(n)}_{t+1}}$ for all $n\in[N]$ and $t\in[T-1]$.

\section{Quantum algorithm}
\label{sec:quantum}

In this section we shall present our quantum algorithm, which is based on the classical {\rm LSM} algorithm (Algorithm~\ref{alg:classicalalgo}). 
Before we discuss it, we review our computational model, input assumptions, and the quantum algorithm for Monte Carlo used in this work.
In what follows, for simplicity, we suppose that $\ket{\bf 0}$ describes a register with sufficiently many qubits initialized in the all-$0$ state.

\subsection{Computational model}

In this subsection, we address our quantum computational model. We work in the standard circuit model of quantum computation~\cite{nielsen2002quantum}. Aside from these standard assumptions, we take the following additional assumptions on the computational model. 

\paragraph*{Arithmetic model.}
In our work, we perform the arithmetic computations on the quantum computer by using a fixed point representation for real numbers. We assume that we can have enough qubits for storing these numbers, represented as bit-strings using the following definition. We also assume to work with enough precision so that numerical errors in the computation are negligible, and will not impact the final output of our algorithm. 
\begin{definition}[Fixed-point encoding of real numbers~\cite{rebentrost2021quantum}] \label{defEncoding}
Let $c_1,c_2$ be positive integers, and $a\in\{0,1\}^{c_1}$, $b \in \{0,1\}^{c_2}$, and $s \in \{0,1\}$ be bit-strings. Define the rational number as:
\begin{equation}
    \mathcal{Q}(a,b,s):= 
    (-1)^s
    \left(2^{c_1-1}a_{c_1}+ \dots + 2a_2 + a_1 + \frac{1}{2}b_1 + \dots + \frac{1}{2^{c_2}}b_{c_2} \right) \in [-R,R],
\end{equation}
where $R := 2^{c_1}-2^{-c_2}$. If $c_1,c_2$ are clear from the context, we can use the shorthand notation for a number $z:=(a,b,s)$ and write  $\mathcal{Q}(z)$ instead of $\mathcal{Q}(a,b,s)$. Given an $n$-dimensional vector $v \in (\{0,1\}^{c_1} \times \{0,1\}^{c_2} \times \{0,1\})^n$
the notation $\mathcal{Q}(v)$ means an $n$-dimensional vector whose $j$-th component is $\mathcal{Q}(v_j)$, for $j \in[n]$. 
\end{definition}

The choice of values for $c_1$ and $c_2$ depends on the choice of input functions used when running the algorithm. For the purposes of optimizing the quantum circuit, these constants can be changed dynamically in various steps of the computation. While analyzing how error propagates and accumulates throughout the operations in the quantum circuit is essential to ensure a correct estimation of the final result, this analysis can only be done for a given choice of input functions. We avoid the analysis of such details by using the quantum arithmetic model as in Definition~\ref{defQArith}. 
A standard result is that any Boolean function can be reversibly computed. Any reversible computation can be realized with a circuit involving negation and three-bit Toffoli gates. Such a circuit can be turned into a quantum circuit with single-qubit NOT gates and three-qubit Toffoli gates. Since most circuits for arithmetic operations operate with a number of gates of $O(\text{poly}(c_1,c_2))$ this implies a number of quantum gates of $O(\text{poly}(c_1,c_2))$ for the corresponding quantum circuit.
\begin{definition}[Quantum arithmetic model]
\label{defQArith}
Given $c_1, c_2 \in \mathbb{N}$ specifying fixed-point precision numbers as in Definition~{\rm \ref{defEncoding}}, we say we use a quantum arithmetic model of computation if the four arithmetic operations can be performed in constant time in a quantum computer. 
\end{definition}
In our computational model we do not include the cost for performing operations described in our arithmetic model. For instance, a central computational step of the quantum algorithm is the circuit computing the stopping times $\widetilde{\tau}_t(x)$, but as the circuit depth depends polynomially on $c_1$ and $c_2$, we do not take into account this cost when stating our runtime.
For an example of a resource estimation for a financial problem that takes into account the cost of arithmetic operations in fixed-point precision, we refer to~\cite{chakrabarti2021threshold}.

\paragraph*{Quantum input access.}
We assume that we have quantum oracles for certain input functions. 
The classical algorithm assumes access to two different kinds of oracles. The first is an oracle that allows us to obtain samples from the Markov chain $(X_t)_{t=0}^T$.
The second kind of oracle is evaluating the functions $\{z_t\}_{t=0}^T$ and $\{e_{t,k} \}_{t\in[T-1],k\in[m]}$. We assume access to the quantum versions of these oracles (formalized below). 
The first kind of quantum oracle prepares a quantum state that is in a superposition over the different outcomes of the Markov chain, weighted by amplitudes which are square roots of their classical probabilities. A measurement in the computational basis of such a state obtains a single sample with the corresponding probability and hence directly recovers a single use of the classical sampling access. 
The second kind of quantum oracle evaluates a given function in superposition over its inputs. While the functions $\{ e_{t,k} \}_{t\in[T-1],k\in[m]}$ are usually chosen to be low-degree polynomials (and thus admit efficient classical and quantum circuits with gate complexity proportional to the degree of the polynomial), the functions $\{ z_t \}_{t=0}^T$ might be arbitrarily complex. Usually the complexity of these functions is not discussed in classical literature, and we use placeholders for their evaluation cost.
\begin{definition}[Quantum sampling access to a Markov chain]
    \label{ass:markov-quantum}
    Let $(X_t)_{t=1}^T$ be a Markov chain defined on a filtered probability space $(\Omega, \mathcal{F}, (\mathcal{F}_{t})_{t=1}^T, \PP)$, for a finite $\Omega$, assuming values in a finite state space $E\subseteq\mathbb{R}^d$. Given $x\in E^T$, let $p(x) := \mathbb{P}[X_1=x_1]\prod_{t=1}^{T-1}\mathbb{P}[X_{t+1}=x_{t+1}|X_t=x_t]$ be the probability that $X_1=x_1,\dots,X_T=x_T$. Let $\mathcal{H}$ be a finite-dimensional Hilbert space with basis $\{|x\rangle\}_{x\in E^T}$. We say that we have quantum sampling access to $(X_t)_{t=1}^T$ if we are given an oracle $U$ on $\mathcal{H}$ such that $U|\mathbf{0}\rangle = \sum_{x\in E^T}\sqrt{p(x)}|x\rangle$. 
    One application of $U$ costs $T\cdot\mathcal T_{\rm samp}$ time. If $T=1$, we say that we have quantum sampling access to a random variable $X$ if we are given an oracle $U$ on $\mathcal{H}$ such that $U|\mathbf{0}\rangle = \sum_{x\in E}\sqrt{p(x)}|x\rangle$, where $p(x) := \mathbb{P}[X=x]$ is the probability that $X=x$. 
\end{definition}
We note the alternative definition of the unitary $U$ such that $U|\mathbf{0}\rangle = \sum_{x\in E^T}\sqrt{p(x)}|x\rangle|\psi_x\rangle$ for unknown garbage unit states $|\psi_x\rangle$. Such garbage states do not change our analysis, so we shall ignore them and work with the unitary $U$ from Definition~\ref{ass:markov-quantum}.

Even though we assume the existence of the oracle $U$, constructing such unitary is an important question on its own. A few methods have been proposed in order to tackle such problem, one of the most famous is due to Grover and Rudolph~\cite{grover2002creating} (see~\cite{MarinSanchez2021} for recent improvements on the Grover-Rudolph method), which loads into a quantum computer a discretization of a distribution with density function $p(x)$. More specifically, it creates the quantum state $\sum_{i=0}^{2^n-1}\sqrt{p_i^{(n)}}|i\rangle$ with $p_i^{(n)} = \int_{x_i^{(n)}}^{x_{i+1}^{(n)}}p(x)\text{d}x$ by recursively (on $n$) computing quantities like $f_n(i) = \int_{x_i^{(n)}}^{(x_i^{(n)}+x_{i+1}^{(n)})/2}p(x)\text{d}x \big/ \int_{x_i^{(n)}}^{x_{i+1}^{(n)}}p(x)\text{d}x$ for $i=0,\dots,2^n-1$. It is also possible to perform simple Taylor approximations on $f_n(i)$ when $n$ is sufficiently large (see~\cite[Equation~(35)]{kaneko2020quantum}). We briefly note that the issues about the Grover-Rudolph method recently pointed out by~\cite{herbert2021no} only arise when one needs to \emph{sample} from the distribution $p(x)$ in order to compute $f_n(i)$, which is not the case in many situations, e.g.\ in finance.
\begin{definition}[Quantum access to a function]
\label{defQComputeFunc}
    Let $E\subseteq\mathbb{R}^d$ be a finite set and let $\mathcal{H}$ be a Hilbert space with basis $\{|x\rangle\}_{x\in E}$. Given $h:E\to\{0,1\}^n$, we say that we have $(V_h, \mathcal T_h)$-quantum access to $h$ if we have access to a quantum circuit $V_h$ on $\mathcal{H}\otimes\mathbb{C}^{2^n}$ such that $V_h|x\rangle |b\rangle = |x\rangle|b \oplus h(x)\rangle$ for any bit-string $b\in\{0,1\}^n$. 
    One application of $V_h$ costs time $\mathcal{T}_h$. 
\end{definition}

\begin{center}
\begin{tabular}{ |c|c| } 
 \hline
 Computational Model & Classical / Quantum \\
 \hline
 Arithmetic operation & $O(1)$ \\ 
 Query functions $z$ &$\mathcal{T}_{z}$  \\
 Query functions $e$ &$\mathcal{T}_{e}$ \\
 Sampling a Markov chain & $T\cdot\mathcal{T}_{\rm samp}$ 
\\
\hline
\end{tabular}
\end{center}

The sampling and query costs used in this paper are summarized in the table above. We briefly note that a given cost might represent different values in different contexts, e.g.\ we shall use the notation $\mathcal{T}_{\rm samp}$ in both classical and quantum contexts, and it is implied that classical and quantum sampling costs may not be necessarily the same.

\paragraph*{Access to quantum controlled rotations.}

Controlled rotations are a central step in the
quantum algorithm for Monte Carlo (Theorem~\ref{thr:quantum_monte_carlo}).
The cost of a controlled rotation
depends directly on the number of bits used to specify the angle of rotation~\cite{WCNA2009}.
In our computational model we assume that controlled rotations come with a unit cost.
\begin{definition}[Access to quantum controlled rotations]
\label{defModelRotation}
We say that we have access to quantum controlled rotations if we have a quantum circuit $R$ whose application takes constant time and, for all  rational numbers $x \in[0,1]$ defined by a $(1 + c_2)$-bit-string in our fixed-point arithmetic model~{\rm \ref{defQArith}}, operates as: 
\begin{equation}
R \ket {x} \ket 0 = \ket { x} \left ( \sqrt{1- x} \ket 0 +  \sqrt x \ket 1 \right).
\end{equation}
\end{definition}
We note that again this definition allows us to neglect terms $O(c_1+c_2)$ in the runtime and to neglect complications arising from the arithmetic computation of the $\arcsin$. The access to these rotation unitaries leads to the following fact. 
\begin{fact}[Controlled rotations of a function with an interval]
\label{fact:query_to_rotation}
    Consider a rational number representation from Definition~{\rm \ref{defEncoding}} for some $c_1, c_2 \in \mathbb{N}$.
    Assume access to controlled rotations according to Definition~{\rm \ref{defModelRotation}}.
    Assume $(V_h,\mathcal T_h)$-quantum access to a function $h$ according to Definition~{\rm \ref{defQComputeFunc}}.
    For any two bit-strings $a,b\in\{0,1\}^n$, with $0 \leq \mathcal Q(a) < \mathcal Q(b)$, we can construct a  unitary operator $R^h_{a,b}$ on $\mathcal H \otimes \mathbb{C}^2$,  such that, for all $x\in E$,
    \begin{align*}
        R^h_{a,b}|x,0\rangle = \begin{cases}
        |x\rangle\left(\sqrt{1 - \mathcal{Q}(h(x))/\mathcal{Q}(b)}|0\rangle + \sqrt{\mathcal{Q}(h(x))/\mathcal{Q}(b)}|1\rangle\right) \quad &\text{if}~ \mathcal Q(a)\leq\mathcal Q(h(x))\leq \mathcal Q(b),\\
        |x,0\rangle \quad &\text{otherwise},
        \end{cases}
    \end{align*}
    where an application of $R^h_{a,b}$ costs $O(\mathcal T_h)$ time.
\end{fact}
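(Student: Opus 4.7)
The plan is to realize $R^{h}_{a,b}$ by the standard compute--rotate--uncompute pattern, using $(V_h,\mathcal{T}_h)$-quantum access to $h$ together with the quantum arithmetic model (Definition~\ref{defQArith}) and the controlled rotation $R$ of Definition~\ref{defModelRotation}. Concretely, I would introduce three ancilla registers initialized to $\ket{\mathbf{0}}$: one to hold $h(x)$, a single-qubit flag $f$ that indicates whether $\mathcal{Q}(h(x))\in[\mathcal{Q}(a),\mathcal{Q}(b)]$, and a third register to hold a value $y\in[0,1]$ that will be fed into $R$. The output qubit is the extra $\mathbb{C}^2$ factor appearing in the statement.

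The first step is to apply $V_h$ to write $\ket{x,\mathbf{0}}\mapsto\ket{x}\ket{h(x)}\ket{0}_f\ket{0}_y\ket{0}$, at cost $\mathcal{T}_h$. The second step uses two comparators, each constant time in our arithmetic model, to flip $f$ to $\ket{1}$ exactly when $\mathcal{Q}(a)\leq\mathcal{Q}(h(x))\leq\mathcal{Q}(b)$. The third step, controlled on $f=1$, performs the division $y\leftarrow \mathcal{Q}(h(x))/\mathcal{Q}(b)$, again in constant time in the arithmetic model; if $f=0$ the register $y$ stays at $0$. Because we only divide when $\mathcal{Q}(h(x))\leq\mathcal{Q}(b)$ and $\mathcal{Q}(h(x))\geq \mathcal{Q}(a)\geq 0$, the loaded value is a legal input for $R$, i.e.\ lies in $[0,1]$. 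Now apply $R$ to $\ket{y}$ and the output qubit. When $f=1$ the output qubit becomes $\sqrt{1-\mathcal{Q}(h(x))/\mathcal{Q}(b)}\ket{0}+\sqrt{\mathcal{Q}(h(x))/\mathcal{Q}(b)}\ket{1}$, whereas when $f=0$ we have $y=0$ and $R$ acts as the identity on the output qubit, matching the ``otherwise'' branch of the claimed action.

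Finally, I would uncompute by reversing the conditional division, reversing the two comparators, and applying $V_h^{\dagger}$ to erase $h(x)$. All ancillas return to $\ket{\mathbf{0}}$ and can be discarded, so the net effect on $\ket{x,0}$ is precisely $R^{h}_{a,b}\ket{x,0}$. The total cost is two calls to $V_h$, a constant number of arithmetic operations (comparators and one division) that are $O(1)$ by Definition~\ref{defQArith}, and a single controlled rotation that is $O(1)$ by Definition~\ref{defModelRotation}, giving the claimed $O(\mathcal{T}_h)$ runtime.

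There is no real obstacle, only one subtlety to guard against: the controlled rotation $R$ requires its angle register to encode a number in $[0,1]$, and the division $\mathcal{Q}(h(x))/\mathcal{Q}(b)$ is only guaranteed to satisfy this when the interval condition holds. Performing the division under the control of the flag $f$ handles this cleanly, at the same time arranging that the ``otherwise'' branch automatically yields the identity action on the output qubit through $R\ket{0}\ket{0}=\ket{0}\ket{0}$. The hypothesis $0\leq\mathcal{Q}(a)$ ensures that negative values of $\mathcal{Q}(h(x))$ are excluded by the flag and hence never enter the division, so no sign issues arise.
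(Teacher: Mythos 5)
Your proposal is correct and follows essentially the same route as the paper's proof: compute $h(x)$ with $V_h$, check the interval and form $\mathcal{Q}(h(x))/\mathcal{Q}(b)$ via constant-time arithmetic, apply the controlled rotation of Definition~\ref{defModelRotation}, and uncompute using a second call to $V_h$, for a total cost of $O(\mathcal{T}_h)$. The only difference is that you spell out the flag-controlled division and the $R\ket{0}\ket{0}=\ket{0}\ket{0}$ identity action explicitly, which the paper leaves implicit.
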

\begin{proof}
The quantum circuits for the division and for checking the interval run in constant time in the quantum arithmetic model.
The quantum circuits allow us to prepare $\ket x \ket{\mathcal{Q}(h(x))/\mathcal{Q}(b)}$ on the interval using $V_h$ two times, where the second register is of size polynomial in $c_1$ and $c_2$.
Performing a controlled rotation of an ancilla costs
constant time by Definition~\ref{defModelRotation}.
\end{proof} 

\paragraph*{Quantum algorithm for Monte Carlo.} Our quantum algorithm requires the computation of several expectation values. In this work we use the quantum algorithm for Monte Carlo from Montanaro~\cite{montanaro2015quantum}, already adapted to our computational model. 
\begin{theorem}[{Quantum algorithm for Monte Carlo $\mathtt{QMonteCarlo}$, \cite[Theorem~2.5]{montanaro2015quantum}}]
    \label{thr:quantum_monte_carlo}
    Let $X$ be a random variable given via quantum sampling access as in Definition~{\rm \ref{ass:markov-quantum}}.
    Consider a rational number representation from Definition~{\rm \ref{defEncoding}} for some $c_1, c_2 \in \mathbb{N}$.
    Let $E\subseteq\mathbb{R}^d$ be a finite set and let $\mathcal{H}$ be a Hilbert space with basis $\{|x\rangle\}_{x\in E}$.
    Consider a function
    $h:E\to \{0,1\}^n$ via quantum access to the controlled rotations as in Fact~{\rm \ref{fact:query_to_rotation}}, where $n \in \mathbb{N}$ such that $c_1+c_2+1=n$, and each access costs time $\mathcal T_h$.
    Assume that the random variable $\mathcal Q (h(X))$ has finite mean $\mu$, and variance upper-bounded by $\sigma^2$ for some known $\sigma>0$. Given $\delta,\epsilon\in(0,1)$, there is a quantum algorithm, called $\mathtt{QMonteCarlo}(h(X), \epsilon, \delta, \sigma)$, that runs in $O((\sigma/\epsilon)\log(1/\delta)\log^{3/2}(\sigma/\epsilon)\log\log(\sigma/\epsilon)) \times (T\mathcal T_{\rm samp} + \mathcal T_h)$ time and outputs an estimate $\widetilde{\mu}$ such that $\operatorname{Pr}\left[|\widetilde{\mu} - \mu] \geq \epsilon\right] \leq \delta$.
\end{theorem}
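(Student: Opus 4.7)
The plan is to apply Montanaro's quantum mean estimation algorithm almost verbatim, with the only work being to build, out of the two input oracles of this paper, the single state-preparation unitary that Montanaro's theorem requires, and to translate the query complexity of that result into the time complexity in our cost model. First I would compose the quantum sampling access $U$ of Definition~\ref{ass:markov-quantum} with the controlled-rotation construction $R^h_{a,b}$ of Fact~\ref{fact:query_to_rotation} (choosing bounds $a,b$ that cover the image of $\mathcal{Q}\circ h$) to obtain a single unitary $W$ such that
\begin{equation*}
    W|\mathbf{0}\rangle|0\rangle = \sum_{x\in E^T} \sqrt{p(x)}\,|x\rangle\otimes\left(\sqrt{1-\mathcal{Q}(h(x))/\mathcal{Q}(b)}\,|0\rangle + \sqrt{\mathcal{Q}(h(x))/\mathcal{Q}(b)}\,|1\rangle\right).
\end{equation*}
One application of $W$ uses $U$ once and $V_h$ twice, hence costs $O(T\mathcal{T}_{\rm samp}+\mathcal{T}_h)$ time in our arithmetic model. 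Measuring the ancilla outputs $1$ with probability exactly $\mu/\mathcal{Q}(b)$, so $W$ is precisely the kind of state-preparation oracle that Montanaro's algorithm consumes.

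Next I would feed $W$ into Montanaro's routine. The nontrivial ingredient there is the reduction from a range-dependent to a variance-dependent query complexity: Montanaro achieves this by shifting and partitioning the random variable into a bounded \emph{head} piece and a dyadic family of truncated \emph{tail} pieces whose ranges grow geometrically in $\sigma$, applying amplitude estimation to each piece with an error budget tuned so that the total error telescopes to $\epsilon$ and the total query count comes to $O((\sigma/\epsilon)\log^{3/2}(\sigma/\epsilon)\log\log(\sigma/\epsilon))$. Each piece only requires calls to $W$ composed with cheap classical post-processing on the value register, so the runtime follows by multiplying this query count by the per-call cost $O(T\mathcal{T}_{\rm samp}+\mathcal{T}_h)$. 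Finally, I would amplify the constant-probability guarantee of Montanaro's estimator to confidence $1-\delta$ by the standard median-of-$O(\log(1/\delta))$ trick, which explains the $\log(1/\delta)$ factor in the stated runtime.

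The main obstacle in the underlying proof — that is, in Montanaro's argument rather than in our adaptation — is exactly the variance-to-range reduction just described; within this paper's contribution, the only subtlety is to check that the finite-precision arithmetic of Definitions~\ref{defEncoding}--\ref{defQArith} and the controlled rotations of Definition~\ref{defModelRotation} do not introduce errors comparable to $\epsilon$, which is guaranteed by choosing $c_1,c_2$ sufficiently large (absorbed into the quantum arithmetic model) and by the hypothesis $c_1+c_2+1=n$ tying the rotation precision to the output precision of $h$.
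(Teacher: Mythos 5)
Your proposal is correct and matches the paper's treatment: the paper does not reprove this result but imports it directly from Montanaro (Theorem~2.5 of that work), with the only adaptation being exactly what you describe — composing the sampling oracle of Definition~\ref{ass:markov-quantum} with the controlled rotations of Fact~\ref{fact:query_to_rotation} to form the state-preparation unitary, multiplying Montanaro's query count by the per-call cost $O(T\mathcal{T}_{\rm samp}+\mathcal{T}_h)$, and absorbing finite-precision issues into the quantum arithmetic model. Your sketch of the variance-to-range reduction and the median-of-$O(\log(1/\delta))$ amplification accurately summarizes the ingredients of the cited proof.
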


The above result will be used to approximate expected values, e.g.\ $\mathbb{E}[e_{t,k}(X_t)e_{t,l}(X_t)]$ and $\mathbb{E}[Z_{\widetilde{\tau}_{t+1}}\vec{e}_t(X_t)]$, and was chosen for its simplicity. It is possible, though, to use other, more complicated quantum subroutines for Monte Carlo, e.g.~\cite{hamoudi2021quantum,cornelissen2021quantum,cornelissen2021bquantum}. Refs.~\cite{cornelissen2021quantum,cornelissen2021bquantum} propose quantum algorithms for multivariate Monte Carlo estimation, which could be particularly suitable in our case, since most of our quantities of interest are vectors and matrices. However, since these more complex and alternative quantum subroutines for Monte Carlo lead to the same time complexities up to polylogarithm factors as Theorem~\ref{thr:quantum_monte_carlo}, we decided to use the above result.

\subsection{Quantum circuits for the stopping times}

Recall that Theorem~\ref{thr:newdynamic} allows us to formulate the stochastic optimal stopping problem with dynamic programming for the optimal stopping times. Having introduced the  quantum computational model, we are now in the position to construct quantum circuits for various computations related to the dynamic programming. 
In particular, we construct a unitary that propagates backwards the optimal stopping time by one time step according to Eq.~(\ref{eq:newdynamic}). In what follows, given a path $x\in E^T$, by $z_{\widetilde{\tau}_t(x)}$ we mean $z_{\widetilde{\tau}_t(x)}(x_{\widetilde{\tau}_t(x)})$, i.e., the associated payoff of the $\widetilde{\tau}_t(x)$-th time step of $x$. While we assume a fixed-point encoding model, we will drop the notation $\mathcal{Q}(\cdot)$ from now on for simplicity.

\begin{lemma}[Quantum circuits for computing the stopping times]\label{lemQStop}
Let $\widetilde{\tau}_{t'}$, $t'\in\{0,\dots,T\}$, be stopping times defined in Eq.~\eqref{eq:newdynamic}. For all $t\in[T]$, given quantum query access to functions $\{z_{t'}:E\to\mathbb{R}\}_{t'=1}^T$ and $\{e_{t',k}:E\to\mathbb{R}\}_{t'\in[T-1],k\in[m]}$ in time $\mathcal{T}_{z}$ and $\mathcal{T}_{e}$, respectively, and to the components of real vectors $\{\widetilde{\alpha}_{t'}\in\mathbb{R}^m\}_{t'=t}^T$ in time $O(1)$, the following statements are true (let $e_{0,k}(x_0) := 1$):
\begin{enumerate}
    \item There is a unitary $W_{t}$ such that, in time $O(\mathcal{T}_{z}+m\mathcal{T}_{e})$,
    \begin{align*}
        \begin{cases}
            W_{t}\ket{x}\ket{\widetilde{\tau}_{t+1}(x)}|\mathbf{0}\rangle^{\otimes 3} = \ket{x}\ket{\widetilde{\tau}_{t+1}(x)}\ket{z_{t}(x_t)}|\widetilde{\alpha}_{t}\cdot \vec{e}_t(x_{t})\rangle|\widetilde{\tau}_{t}(x)\rangle &\quad~\text{if}~t\neq T,\\
            W_{t}\ket{x}|\mathbf{0}\rangle = |x\rangle|T\rangle &\quad~\text{if}~t=T.
        \end{cases} 
    \end{align*}
    \item There is a unitary $V^{(k)}_t$ such that $V^{(k)}_t|x\rangle|\widetilde{\tau}_{t}(x)\rangle|\mathbf{0}\rangle = |x\rangle|\widetilde{\tau}_{t}(x)\rangle|z_{\widetilde{\tau}_{t}(x)}e_{t-1,k}(x_{t-1})\rangle$, for $k\in[m]$, in time $O(T \log(T) \mathcal{T}_{z}+\mathcal{T}_{e})$.
    \item The unitary $C_t^{(k)} := W_T^\dagger \dots W_{t+1}^\dagger W_t^\dagger V_t^{(k)} W_t W_{t+1}\dots W_T$ is such that $C_t^{(k)}|x\rangle|\mathbf{0}\rangle^{\otimes(3(T-t)+2)} = |x\rangle|z_{\widetilde{\tau}_{t}(x)}e_{t-1,k}(x_{t-1})\rangle|\mathbf{0}\rangle^{\otimes(3(T-t)+1)}$, for $k\in[m]$, in time $O(T(\log(T)\mathcal{T}_{z}+m\mathcal{T}_{e}))$.
\end{enumerate}
\end{lemma}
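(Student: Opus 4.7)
The plan is to construct each of the three unitaries in turn, exploiting the quantum arithmetic model (which makes multiplication, addition and comparison $O(1)$-time), the controlled-rotation and query-access assumptions, and the basic fact that any reversible Boolean function can be realized as a quantum circuit.

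First, for $W_t$: when $t=T$, it suffices to write the constant $T$ into a fresh register with a constant-depth circuit of $X$-gates. When $t<T$, I would act on $|x\rangle|\widetilde{\tau}_{t+1}(x)\rangle|\mathbf{0},\mathbf{0},\mathbf{0}\rangle$ in three steps. (i) Apply the oracle for $z_t$ on the subregister $|x_t\rangle$ to write $z_t(x_t)$ into the first blank register at cost $\mathcal{T}_z$. (ii) Compute the linear combination $\widetilde{\alpha}_t\cdot\vec{e}_t(x_t)$ into the second blank register by iterating $k=1,\ldots,m$: query $e_{t,k}(x_t)$ into a scratch register at cost $\mathcal{T}_e$, multiply classically by $\widetilde{\alpha}_{t,k}$ in $O(1)$ arithmetic time, add to a running accumulator, and uncompute the query. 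The total cost is $O(m\mathcal{T}_e)$. (iii) Use a comparator between the two freshly populated registers to set the last register to $t$ if $z_t(x_t)\geq\widetilde{\alpha}_t\cdot\vec{e}_t(x_t)$ and to $\widetilde{\tau}_{t+1}(x)$ otherwise (both are classical strings available in registers, copied conditionally in constant time). This directly realizes Eq.~\eqref{eq:newdynamic} and gives the stated $O(\mathcal{T}_z+m\mathcal{T}_e)$ cost.

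Second, for $V_t^{(k)}$: the subtlety is that $|\widetilde{\tau}_t(x)\rangle$ holds the time index in superposition, so no single oracle $V_{z_{t'}}$ can be invoked. I would implement a controlled linear scan: loop through $t'=t,\ldots,T$ and, conditionally on $\widetilde{\tau}_t(x)=t'$ (checked by a $\log T$-bit comparator), apply $V_{z_{t'}}$ to $|x_{t'}\rangle$ and XOR $z_{t'}(x_{t'})$ into a scratch register; the control flag is then uncomputed. Each conditional call costs $O(\log(T)\,\mathcal{T}_z)$, giving $O(T\log(T)\,\mathcal{T}_z)$ overall. Then query $e_{t-1,k}(x_{t-1})$ via $V_{e_{t-1,k}}$ at cost $\mathcal{T}_e$, multiply the two values in $O(1)$ arithmetic time into the output register, and uncompute the scratch. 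This yields exactly $|x\rangle|\widetilde{\tau}_t(x)\rangle|z_{\widetilde{\tau}_t(x)}e_{t-1,k}(x_{t-1})\rangle$.

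Third, for $C_t^{(k)}$: first apply $W_T$ to introduce $|T\rangle$, then apply $W_{T-1},W_{T-2},\ldots,W_t$ in sequence, each using the previously produced $|\widetilde{\tau}_{s+1}(x)\rangle$ together with freshly allocated $|\mathbf{0}\rangle^{\otimes 3}$ registers to produce $|\widetilde{\tau}_s(x)\rangle$ along with the auxiliary values $z_s(x_s)$ and $\widetilde{\alpha}_s\cdot\vec{e}_s(x_s)$. Apply $V_t^{(k)}$ to write the desired product into a separate output register. Finally apply $W_t^\dagger,W_{t+1}^\dagger,\ldots,W_T^\dagger$ in reverse to uncompute every auxiliary register, leaving only $|x\rangle|z_{\widetilde{\tau}_t(x)}e_{t-1,k}(x_{t-1})\rangle$ as the nonzero output, as claimed. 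Summing costs, $T-t$ copies of $W_s$ contribute $O(T(\mathcal{T}_z+m\mathcal{T}_e))$ and $V_t^{(k)}$ contributes $O(T\log(T)\,\mathcal{T}_z+\mathcal{T}_e)$, together giving the stated $O(T(\log(T)\,\mathcal{T}_z+m\,\mathcal{T}_e))$.

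The main obstacle is the $V_t^{(k)}$ construction, because the payoff query must be performed at a time index held in superposition; the controlled-scan solution is conceptually straightforward but the scratch registers used during the scan must be \emph{cleanly} uncomputed inside $V_t^{(k)}$, else the outer uncomputation in $C_t^{(k)}$ via the $W_s^\dagger$ would fail to leave a product state. A secondary bookkeeping issue is maintaining a consistent register layout across all $W_s$, so that $C_t^{(k)}$ has a well-defined uncomputation phase; this is handled by fixing in advance three auxiliary registers per backward time step, which is precisely the source of the $3(T-t)+2$ blank registers in the statement.
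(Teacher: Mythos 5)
Your construction is correct and follows essentially the same route as the paper: the same query-compute-compare-uncompute circuit for $W_t$, a ``select''-style circuit over the time index held in superposition for $V_t^{(k)}$ (the paper realizes this as a conditional-copy $V_{\rm copy}$ followed by $V_{\rm select}:=\sum_{t'}\ket{t'}\bra{t'}\otimes V_{z_{t'}}$, whereas you apply the controlled oracles directly to the subregisters $\ket{x_{t'}}$, which is an equivalent implementation with the same $O(T\log(T)\mathcal{T}_z)$ cost), and the identical compute--apply--uncompute sandwich for $C_t^{(k)}$. The cost accounting and the register-bookkeeping remarks also match the paper's argument.
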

\begin{proof}
    We start with the first statement. The existence of $W_T$ is trivial. Assume $t\in[T-1]$, then, with one query access to function oracle $z_t$ and $m$ query accesses to function oracle $e_{t,k}$ for $k\in[m]$, we can perform
    $$
        \ket{x} \ket{\bf 0}\ket{\bf 0}\mapsto \ket{x}\ket{z_t(x_t)} \ket{\bf 0}\mapsto \ket{x}\ket{z_t(x_t)}\ket{\vec{e}_t(x_t)}.
    $$
    By using access to the $m$ elements of $\widetilde{\alpha}_t$, and $O(m)$ multiplications and additions, we can compute the inner product of $\widetilde{\alpha}_t \cdot \vec{e}_t(x_t)$ in superposition over $x_t$, as
    $$
        \ket{x}\ket{\vec{e}_t(x_t)}\ket{\bf 0} \mapsto \ket{x}\ket{\vec{e}_t(x_t)} \ket{\widetilde{\alpha}_t \cdot \vec{e}_t(x_t)}.
    $$
    Comparing between $z_t(x_t)$ and $\widetilde{\alpha}_t \cdot \vec{e}_t(x_t)$ in constant time, we can compute $\widetilde{\tau}_t(x)$ according to Eq.~\eqref{eq:newdynamic}, and hence obtain
    $$
        \ket{x}\ket{\widetilde{\tau}_{t+1}(x)}\ket{z_t(x_t)}\ket{\widetilde{\alpha}_t \cdot \vec{e}_t(x_t)} \ket{\bf 0}
        \mapsto
        \ket{x}\ket{\widetilde{\tau}_{t+1}(x)}\ket{z_t(x_t)}\ket{\widetilde{\alpha}_t \cdot \vec{e}_t(x_t)}\ket{\widetilde{\tau}_t(x)}.
    $$
    Uncomputing the intermediate steps leads to the desired operation. The total runtime is $O(\mathcal{T}_{z}+m\mathcal{T}_{e}+m+m+1) = O(\mathcal{T}_{z}+m\mathcal{T}_{e})$.
    
Regarding the second statement, 
we require a few circuits which can be constructed once as a pre-processing step.
First, we prepare the input for the payoff functions in an ancillary register, where the input depends on the content of the register $\ket{\widetilde{\tau}_t(x)}$.
For this step, we  prepare a conditional copy quantum circuit $V_{\rm copy}$ which operates as 
$\ket{x}\ket{\widetilde{\tau}_t(x)}\ket{\bf 0} \to \ket{x}\ket{\widetilde{\tau}_t(x)}|x_{\widetilde{\tau}_t(x)}\rangle$, 
where the register $|{x_{\widetilde{\tau}_t(x)}}\rangle$ stores the $\widetilde{\tau}_t(x)$-th step of the path $x$. This circuit operates in time given by the size of the registers of at most $O(T\log (T))$.
Second, from the access to the different payoff functions we construct access to the functions in superposition of the time parameter.
By assumption, we are given quantum circuits $V_{z_{t'}}$ for $t' \in [T]$. From these quantum circuits we construct the controlled circuit $V_{\rm select}:=\sum_{t'=1}^T \ket{t'}\bra{t'} \otimes V_{z_{t'}}$, which consists of the controlled versions of the circuits $V_{z_{t'}}$ and has a runtime of $O(T \log (T) \mathcal T_z)$~\cite{Berry2015trunc}.
Now, given $\ket{x}\ket{\widetilde{\tau}_t(x)}$, with one application of $V_{\rm copy}$ and one application of $V_{\rm select}$, we obtain $z_{\widetilde{\tau}_t(x)}$, i.e, the payoff evaluated at $x_{\widetilde{\tau}_t(x)}$, as
\begin{eqnarray*}
\ket{x}\ket{\widetilde{\tau}_t(x)}\ket{\bf 0}\ket{\bf 0}
&\mapsto&
\ket{x}\ket{\widetilde{\tau}_t(x)}|{x_{\widetilde{\tau}_t(x)}}\rangle\ket{\bf 0}
\mapsto
\ket{x}\ket{\widetilde{\tau}_t(x)}|{x_{\widetilde{\tau}_t(x)}}\rangle|{z_{\widetilde{\tau}_t(x)}}\rangle.
\end{eqnarray*}
Using $V_{\rm copy}$ again we uncompute the third register. 
One query to the function oracle $e_{t-1,k}$
obtains $e_{t-1,k}(x_{t-1})$ and by multiplication we obtain,
\begin{align*}
\ket{x}\ket{\widetilde{\tau}_t(x)}|{z_{\widetilde{\tau}_t(x)}}\rangle\ket{e_{t-1,k}(x_{t-1})} \ket{\bf 0}
&\mapsto
\ket{x}\ket{\widetilde{\tau}_t(x)}|{z_{\widetilde{\tau}_t(x)}}\rangle\ket{e_{t-1,k}(x_{t-1})}|{z_{\widetilde{\tau}_t(x)}e_{t-1,k}(x_{t-1})}\rangle.
\end{align*}
We uncompute the third and fourth registers using the given circuits to obtain the desired operation.
The total runtime is $O(T \log(T) \mathcal{T}_{z}+\mathcal{T}_{e})$.

Finally, for the third statement, it is not hard to see that (let $\widetilde{\tau}_T := T$)
\begin{align*}
C_t^{(k)}|x\rangle|\mathbf{0}\rangle^{\otimes(3(T-t)+2)} &= W_T^\dagger \dots W_t^\dagger V_t^{(k)}|x\rangle|T\rangle|\mathbf{0}\rangle\bigotimes_{j=t}^{T-1} \ket{z_j(x_j)}\ket{\widetilde{\alpha}_j \cdot \vec{e}_j(x_j)}\ket{\widetilde{\tau}_j(x)}\\
&= W_T^\dagger \dots W_t^\dagger |x\rangle|T\rangle|{z_{\widetilde{\tau}_t(x)}e_{t-1,k}(x_{t-1})}\rangle\bigotimes_{j=t}^{T-1} \ket{z_j(x_j)}\ket{\widetilde{\alpha}_j \cdot \vec{e}_j(x_j)}\ket{\widetilde{\tau}_j(x)}\\
&= |x\rangle|{z_{\widetilde{\tau}_t(x)}e_{t-1,k}(x_{t-1})}\rangle|\mathbf{0}\rangle^{\otimes(3(T-t)+1)}.
\end{align*}
From the two previous statements, the runtime of $C_t^{(k)}$ is $2(T-t+1)O(\mathcal{T}_{z}+m\mathcal{T}_{e})+O(T\log(T)\mathcal{T}_{z}+\mathcal{T}_{e})=O(T(\log(T)\mathcal{T}_{z}+m\mathcal{T}_{e}))$.
\end{proof}

\subsection{Quantum least squares Monte Carlo algorithm}

We present our quantum {\rm LSM} algorithm in Algorithm~\ref{alg:quantumlongstaff}. As previously mentioned, it follows the classical version in Algorithm~\ref{alg:classicalalgo}. It computes the expectations $\mathbb{E}[e_{t,k}(X_t)e_{t,l}(X_t)]$ (for the matrices $\{A_t\}_{t=1}^{T-1}$), $\mathbb{E}[Z_{\widetilde{\tau}_{t+1}}\vec{e}_t(X_t)]$ and $\mathbb{E}[Z_{\widetilde{\tau}_1}]$ using Theorem~\ref{thr:quantum_monte_carlo} instead of drawing random samples.
Recall that by definition $Z_{\widetilde{\tau}_{t+1}} = z_{\widetilde{\tau}_{t+1}(\mathbf{X}_{t+1})}(X_{\widetilde{\tau}_{t+1}(\mathbf{X}_{t+1})})$, i.e., both the optimal stopping time and the payoff depend on the path of the Markov chain. 

\begin{algorithm}[t]
\caption{Quantum {\rm LSM} algorithm for optimal stopping problem}
\label{alg:quantumlongstaff}
\begin{algorithmic}[1]
	\Require Parameters $\delta\in(0,1)$, $\epsilon>0$. Quantum sampling access to Markov chain $(X_t)_{t=1}^T$ defined on a finite sample space $\Omega$ and with finite state space $E\subseteq\mathbb{R}^d$. Quantum query access to $\{z_t:E\to\R\}_{t=1}^T$ and linearly independent functions $\{e_{t,k}:E \to \R\}_{k=1}^m$ for $t\in[T-1]$. Let $L := \max_{t\in[T-1],k\in[m]} \|e_{t,k}\|_{L^2(\rho_t)}$ and $R := \max_{t\in[T]}\|z_t\|_u$.
	\State $\delta_A := \delta/(4Tm^2)$, $\delta_b := \delta/(4Tm)$, $\epsilon_A := \epsilon/m$ and $\epsilon_b := \epsilon/\sqrt{m}$.
    \State Construct quantum access and controlled rotation access to $e_{t,k} e_{t,l}$, $\forall k,l \in[m],t\in[T-1]$, with quantum query access to $e_{t,j}$, quantum circuits for multiplication and Fact~\ref{fact:query_to_rotation}.
    \State Compute $\{\widetilde{A}_t\}_{t=1}^{T-1}$ by calling $\mathtt{QMonteCarlo}(e_{t,k}(X_t)e_{t,l}(X_t)$, $\epsilon_A$, $\delta_A$, $L^2$) for $k,l\in[m]$.
    \State Compute the inverses $\{\widetilde{A}_t^{-1}\}_{t=1}^{T-1}$.
    \State Prepare unitary $W_T$ s.t.\ $W_T|x\rangle|\mathbf{0}\rangle = |x\rangle|\widetilde{\tau}_T(x)\rangle$, where $\widetilde{\tau}_T(x) = T$ for all $x\in E^T$.
    \For{$t=T$ to $2$}
    \If{$t\neq T$}
        \State \parbox[t]{\dimexpr\linewidth-\algorithmicindent-\algorithmicindent}{
            Prepare unitary $W_t$ s.t.\ $W_{t}\ket{x}\ket{\widetilde{\tau}_{t+1}(x)}|\mathbf{0}\rangle^{\otimes 3} = \ket{x}\ket{\widetilde{\tau}_{t+1}(x)}\ket{z_{t}(x_t)}|\widetilde{\alpha}_{t}\cdot \vec{e}_t(x_{t})\rangle|\widetilde{\tau}_{t}(x)\rangle$ for any $\widetilde{\tau}_{t+1}(x)\in[T]$ (Lemma \ref{lemQStop}).}
        \EndIf
        
        \State \parbox[t]{\dimexpr\linewidth-\algorithmicindent}{
        Prepare unitaries $\{V^{(k)}_t\}_{k=1}^m$ s.t.\ $V^{(k)}_t|x\rangle|\widetilde{\tau}_{t}(x)\rangle|\mathbf{0}\rangle = |x\rangle|\widetilde{\tau}_{t}(x)\rangle|z_{\widetilde{\tau}_{t}(x)}e_{t-1,k}(x_{t-1})\rangle$~(Lemma~\ref{lemQStop}).}
        \State Prepare unitary $W_T^\dagger \dots W_{t+1}^\dagger W_t^\dagger V_t^{(k)} W_t W_{t+1}\dots W_T$ for $k\in[m]$ (Lemma~\ref{lemQStop}).
        \State \parbox[t]{\dimexpr\linewidth-\algorithmicindent}{Construct quantum access to the controlled rotations of the functions $z_{\widetilde{\tau}_{t}(x)}e_{t-1,k}(x_{t-1})$ (Fact~\ref{fact:query_to_rotation}).}
        \State Execute $\mathtt{QMonteCarlo}(Z_{\widetilde{\tau}_{t}} e_{t-1,k}(X_{t-1})$, $\epsilon_b$, $\delta_b$, $RL$), for all $k \in [m]$, to compute $\widetilde{b}_{t-1}$. 
        \State Compute the vector $\widetilde{\alpha}_{t-1} = \widetilde{A}_{t-1}^{-1}\widetilde{b}_{t-1}$ classically.
    \EndFor
    \State Prepare unitary $W_{1}$ s.t.\ $W_{1}\ket{x}\ket{\widetilde{\tau}_{2}(x)}|\mathbf{0}\rangle^{\otimes 3} = \ket{x}\ket{\widetilde{\tau}_{2}(x)}\ket{z_{1}(x)}|\widetilde{\alpha}_{1}\cdot \vec{e}_1(x_{1})\rangle|\widetilde{\tau}_{1}(x)\rangle$ for any $\widetilde{\tau}_{2}(x)\in[T]$ (Lemma~\ref{lemQStop}).
    \State Prepare unitary $V_1$ s.t.\ $V_1|x\rangle|\widetilde{\tau}_{1}(x)\rangle|\mathbf{0}\rangle = |x\rangle|\widetilde{\tau}_{1}(x)\rangle|z_{\widetilde{\tau}_{1}(x)}\rangle$ (Lemma~\ref{lemQStop}). 
    \State Prepare unitary $W_T^\dagger \dots W_{2}^\dagger W_1^\dagger V_1 W_1 W_{2}\dots W_T$ (Lemma~\ref{lemQStop}).
    \State Construct quantum access to the controlled rotations of the function $z_{\widetilde{\tau}_{1}}$ (Fact~\ref{fact:query_to_rotation}).
    \State Execute $\mathtt{QMonteCarlo}(Z_{\widetilde{\tau}_{1}}$, $\epsilon$, $\frac{\delta}{2}$, $R$) to compute $\widetilde{Z}_{\widetilde{\tau}_1}$. 
    \State Output $\widetilde{\mathcal{U}}_0 := \max\big\{Z_0,\widetilde{Z}_{\widetilde{\tau}_1}\big\}$.
\end{algorithmic}
\end{algorithm} 

We explain the three essential steps of the algorithm.
First, we compute $\mathbb{E}[e_{t,k}(X_t)e_{t,l}(X_t)]$ to obtain the matrices $\{A_t\}_{t=1}^{T-1}$. For this we provide quantum access to the functions $e_{t,k}(x_t)e_{t,l}(x_t)$
from the quantum access to all $e_{t,k}(x_t)$ and the arithmetic quantum circuit for multiplication. We take the number of bits such that the numerical error is much smaller than the error obtained from estimating the random variable. By our arithmetic model, the computation is performed in constant time. By availability of the quantum controlled rotations with Fact~\ref{fact:query_to_rotation}, we then satisfy the hypothesis of Theorem~\ref{thr:quantum_monte_carlo}.
Next, for any given time step $t\in\{2,\dots,T\}$, we solve the dynamic recursion $\widetilde{\tau}_{t}(x) = t \mathbf{1}\{z_{t}(x_t) \geq \widetilde{\alpha}_{t}\cdot \vec{e}_t(x_{t})\} +  \widetilde{\tau}_{t+1}(x)\mathbf{1}\{z_{t}(x_t) < \widetilde{\alpha}_{t}\cdot \vec{e}_t(x_{t})\}$ from time $T$ to time $t$ in superposition via unitaries $C_t^{(k)}$ built by a separate subroutine explained in Lemma~\ref{lemQStop}. These unitaries provide $(C_t^{(k)}, \mathcal T_{ze})$-quantum access to the functions $z_{\widetilde{\tau}_{t}(x)}e_{t-1,k}(x_{t-1})$, 
where the cost $\mathcal T_{ze}$ is stated in Lemma~\ref{lemQStop}. With the corresponding quantum controlled rotations, this function access can be used with Theorem~\ref{thr:quantum_monte_carlo} to estimate the corresponding vector of expected values $b_{t-1} := \mathbb{E}[Z_{\widetilde{\tau}_{t}}\vec{e}_{t-1}(X_{t-1})]$ that is then used to classically compute the approximation $\widetilde{\alpha}_{t-1}$ for the next time step $t-1$. 
Finally, for $t=1$, with similar steps, we provide 
$(C_1, \mathcal{T}_{z})$-quantum access to the function $z_{\widetilde{\tau}_{1}(x)}$,
where the cost $\mathcal T_{z}$ follows again from Lemma~\ref{lemQStop}.
This last estimation then provides the approximation $\widetilde{\mathcal U}_0$ to the true initial value of the Snell envelope. We note that the procedure of approximating a matrix $A_t$ and a vector $b_t$ entrywise via quantum algorithms for Monte Carlo followed by the classical computation of $\widetilde{\alpha}_t = \widetilde{A}_t^{-1}\widetilde{b}_t$ was already used in~\cite{PhysRevA.104.022430}.

In order to analyze the approximation error and the complexity of Algorithm~\ref{alg:quantumlongstaff}, we will need the following result from~\cite{zanger2009convergence,zanger2020general} (already modified to our notation) that bounds the error between the exact continuation values $\mathbb{E}[Z_{\tau_{t+1}}|X_t]$ and their approximation $\widetilde{\alpha}_t\cdot \vec{e}_t(X_t)$ in terms of the error between the continuation values $\mathbb{E}[Z_{\widetilde{\tau}_{k+1}}|X_k]$ evaluated on the approximated stopping times $\widetilde{\tau}_{k+1}$ and $\widetilde{\alpha}_k\cdot \vec{e}_t(X_k)$ for $k\in\{t,\dots,T-1\}$. Recall the image probability measures $\rho_t$ in $E\subseteq\mathbb{R}^d$ induced by each element $X_t$, $t\in\{0,\dots,T\}$.
\begin{lemma}[{\cite[Lemma~2.2]{zanger2009convergence}}]
    \label{lem:zanger}
    For each $t\in\{0,\dots,T-1\}$, we have
    \begin{align*}
        \|\widetilde{\alpha}_t\cdot \vec{e}_t(X_t) - \mathbb{E}[Z_{\tau_{t+1}}|X_t]\|_{L^2(\rho_t)} &\leq 2\sum_{k=t}^{T-1}\|\widetilde{\alpha}_k\cdot \vec{e}_k(X_k) - \mathbb{E}[Z_{\widetilde{\tau}_{k+1}}|X_k]\|_{L^2(\rho_k)},\\
        \|\mathbb{E}[Z_{\widetilde{\tau}_{t+1}}|X_t] - \mathbb{E}[Z_{\tau_{t+1}}|X_t]\|_{L^2(\rho_t)} &\leq 2\sum_{k=t+1}^{T-1}\|\widetilde{\alpha}_k\cdot \vec{e}_k(X_k) - \mathbb{E}[Z_{\widetilde{\tau}_{k+1}}|X_k]\|_{L^2(\rho_k)},
    \end{align*}
    where $\widetilde{\alpha}_0\cdot \vec{e}_0(X_0) := \widetilde{Z}_{\widetilde{\tau}_1}$ approximates $\mathbb{E}[Z_{\widetilde{\tau}_1}]$.
\end{lemma}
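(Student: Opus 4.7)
The plan is to prove the second inequality by backward induction on $t$ and then deduce the first via a triangle inequality. To lighten notation, abbreviate $g_k := \widetilde{\alpha}_k\cdot\vec{e}_k(X_k)$, $F_k := \mathbb{E}[Z_{\widetilde{\tau}_{k+1}}|X_k]$, and $F^\ast_k := \mathbb{E}[Z_{\tau_{k+1}}|X_k]$. The base case $t=T-1$ is trivial since $\widetilde{\tau}_T=\tau_T=T$ forces $F_{T-1}=F^\ast_{T-1}$.

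For the inductive step I would first rewrite $F^\ast_t = \mathbb{E}[\max(Z_{t+1}, F^\ast_{t+1})|X_t]$ by combining the Snell-envelope recursion in Definition~\ref{def:snell} with $\mathbb{E}[Z_{\tau_{t+1}}|X_{t+1}] = \mathcal{U}_{t+1}$ from Theorem~\ref{thr:snellproperties}. On the other hand, conditioning on $X_{t+1}$ inside $F_t$ and plugging in the dynamic-programming formula for $\widetilde{\tau}_{t+1}$ from Eq.~\eqref{eq:newdynamic} yields the explicit expression $\mathbb{E}[Z_{\widetilde{\tau}_{t+1}}|X_{t+1}] = Z_{t+1}\mathbf{1}\{Z_{t+1}\geq g_{t+1}\} + F_{t+1}\mathbf{1}\{Z_{t+1}<g_{t+1}\}$.

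The crux is then to compare this expression with $\max(Z_{t+1}, F_{t+1}) = Z_{t+1}\mathbf{1}\{Z_{t+1}\geq F_{t+1}\} + F_{t+1}\mathbf{1}\{Z_{t+1}<F_{t+1}\}$. Their pointwise difference collapses to $(Z_{t+1}-F_{t+1})\bigl(\mathbf{1}\{Z_{t+1}\geq g_{t+1}\} - \mathbf{1}\{Z_{t+1}\geq F_{t+1}\}\bigr)$, and on the sole event where the two indicators disagree, $Z_{t+1}$ is sandwiched between $g_{t+1}$ and $F_{t+1}$, so that
\begin{equation*}
\bigl|\mathbb{E}[Z_{\widetilde{\tau}_{t+1}}|X_{t+1}] - \max(Z_{t+1}, F_{t+1})\bigr| \leq |g_{t+1} - F_{t+1}|.
\end{equation*}
Combined with the $1$-Lipschitz estimate $|\max(Z_{t+1}, F^\ast_{t+1}) - \max(Z_{t+1}, F_{t+1})| \leq |F^\ast_{t+1} - F_{t+1}|$ and the triangle inequality inside $\mathbb{E}[\cdot|X_t]$, this produces the pointwise recursion $|F^\ast_t - F_t| \leq \mathbb{E}\bigl[|F^\ast_{t+1} - F_{t+1}| + |g_{t+1} - F_{t+1}| \,\bigl|\, X_t\bigr]$.

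Finally, I would promote this to an $L^2(\rho_t)$ statement using Jensen's inequality and the tower property, which together imply $\|\mathbb{E}[|h(X_{t+1})|\,|\,X_t]\|_{L^2(\rho_t)} \leq \|h(X_{t+1})\|_{L^2(\rho_{t+1})}$; iterating the resulting linear recursion down to $t$ yields the second inequality (in fact with a constant sharper than the claimed~$2$), and the first inequality follows at once from $\|g_t - F^\ast_t\|_{L^2(\rho_t)} \leq \|g_t - F_t\|_{L^2(\rho_t)} + \|F_t - F^\ast_t\|_{L^2(\rho_t)}$. I expect the only delicate step to be the sandwich case analysis for the pointwise bound, which requires simultaneously tracking the signs of $Z_{t+1}-g_{t+1}$ and $Z_{t+1}-F_{t+1}$; every other step is standard manipulation of conditional expectations.
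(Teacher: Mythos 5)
The paper does not prove this lemma itself---it imports it verbatim from Zanger---so there is no in-paper argument to compare against; judged on its own terms, your proof is correct and complete. The backward induction on $\|F_t-F^\ast_t\|_{L^2(\rho_t)}$, the identity $\mathbb{E}[Z_{\widetilde{\tau}_{t+1}}|X_{t+1}] - \max(Z_{t+1},F_{t+1}) = (Z_{t+1}-F_{t+1})\bigl(\mathbf{1}\{Z_{t+1}\geq g_{t+1}\}-\mathbf{1}\{Z_{t+1}\geq F_{t+1}\}\bigr)$ with the sandwich bound on the disagreement event, the $1$-Lipschitz property of $\max$, and the conditional-Jensen contraction $\|\mathbb{E}[|h(X_{t+1})|\,|\,X_t]\|_{L^2(\rho_t)}\leq\|h(X_{t+1})\|_{L^2(\rho_{t+1})}$ are all valid and are essentially the standard route taken in the cited source (and in Cl\'ement et al.); your observation that the recursion closes with constant $1$ rather than $2$ is right, and of course only strengthens the stated bound.
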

We will also need the following technical result on the sensitivity of square systems.
\begin{theorem}[{\cite[Theorem~2.6.2]{golub2013matrix}}]
    \label{thr:matrix_error}
    Let $Ax = b$ and $\widetilde{A}\widetilde{x} = \widetilde{b}$, where $A,\widetilde{A}\in\mathbb{R}^{d\times d}$ and $b,\widetilde{b}\in\mathbb{R}^d$, with $b\neq 0$. Suppose that $\|A-\widetilde{A}\|_2 \leq \epsilon_A$ and $\|b - \widetilde{b}\|_2 \leq \epsilon_b$. If $\epsilon_A \leq \sigma_{\min}(A)/2$, where $\sigma_{\min}(A)$ is the minimum singular value of $A$, then
    \begin{align*}
        \|x - \widetilde{x}\|_2 \leq \frac{2}{\sigma_{\min}(A)}\left(\frac{\epsilon_A\|b\|_2}{\sigma_{\min}(A)} + \epsilon_b\right).
    \end{align*}
\end{theorem}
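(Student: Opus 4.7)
My plan is to carry out a standard perturbation argument for linear systems, the key trick being to isolate $\Delta x := \widetilde{x} - x$ on the side of the perturbed matrix $\widetilde{A}$ (rather than $A$), which avoids an unwanted higher-order term.

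First, I would write $\widetilde{A} = A + \Delta A$ and $\widetilde{b} = b + \Delta b$ with $\|\Delta A\|_2 \leq \epsilon_A$ and $\|\Delta b\|_2 \leq \epsilon_b$. Substituting $\widetilde{x} = x + \Delta x$ into $\widetilde{A}\widetilde{x} = \widetilde{b}$, expanding, and using $Ax = b$ to cancel, I get
\begin{equation*}
    (A + \Delta A)\,\Delta x = \Delta b - \Delta A \cdot x,
\end{equation*}
i.e., $\widetilde{A}\,\Delta x = \Delta b - \Delta A \cdot x$. This identity is the core of the argument.

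Second, I need $\widetilde{A}$ to be invertible with a good bound on $\|\widetilde{A}^{-1}\|_2$. By Weyl's inequality for singular values, $\sigma_{\min}(\widetilde{A}) \geq \sigma_{\min}(A) - \|\Delta A\|_2 \geq \sigma_{\min}(A) - \epsilon_A$. The hypothesis $\epsilon_A \leq \sigma_{\min}(A)/2$ then gives $\sigma_{\min}(\widetilde{A}) \geq \sigma_{\min}(A)/2$, so $\widetilde{A}$ is invertible and $\|\widetilde{A}^{-1}\|_2 \leq 2/\sigma_{\min}(A)$.

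Third, taking norms and using submultiplicativity,
\begin{equation*}
    \|\Delta x\|_2 \leq \|\widetilde{A}^{-1}\|_2\bigl(\|\Delta A\|_2\,\|x\|_2 + \|\Delta b\|_2\bigr) \leq \frac{2}{\sigma_{\min}(A)}\bigl(\epsilon_A \|x\|_2 + \epsilon_b\bigr).
\end{equation*}
Finally, from $Ax = b$ we have $\|x\|_2 \leq \|A^{-1}\|_2 \|b\|_2 = \|b\|_2/\sigma_{\min}(A)$, which plugged into the above yields the claimed bound. There is no real obstacle here: the only subtle point is choosing to solve for $\Delta x$ using $\widetilde{A}$ rather than $A$ so that the bound on $\|\Delta x\|_2$ is linear in $\epsilon_A,\epsilon_b$ with no residual $O(\epsilon_A \|\Delta x\|_2)$ term to reabsorb.
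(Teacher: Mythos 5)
Your proof is correct. The paper itself gives no proof of this statement --- it is quoted directly from Golub and Van Loan --- so there is nothing internal to compare against, but your argument is the standard perturbation analysis that establishes it: the identity $\widetilde{A}\,\Delta x = \Delta b - \Delta A\, x$, the Weyl bound $\sigma_{\min}(\widetilde{A}) \geq \sigma_{\min}(A) - \epsilon_A \geq \sigma_{\min}(A)/2$, and the estimate $\|x\|_2 \leq \|b\|_2/\sigma_{\min}(A)$ combine exactly as you say to give the claimed inequality. Your remark about solving for $\Delta x$ against $\widetilde{A}$ rather than $A$ is well taken, though for completeness note that the alternative route (isolating $A\,\Delta x$ and reabsorbing the $\epsilon_A\|\Delta x\|_2$ term using $1 - \epsilon_A/\sigma_{\min}(A) \geq 1/2$) yields the same constant, so either works.
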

We are now able to state a central theorem for our quantum {\rm LSM} algorithm. Recall the definition $\|f\|_u = \sup\{|f(x)|:x\in E\}$ of the uniform norm.
\begin{theorem}
    \label{thr:quantum_result1}
    Within the setting of Algorithm~{\rm \ref{alg:quantumlongstaff}} with input parameters $\delta$ and $\epsilon$, let $T\mathcal{T}_{\rm samp}$ be the sampling cost of the Markov chain and consider a set of linearly independent functions $\{e_{t,k}:E\to\mathbb{R}\}_{k=1}^m$ for each $t\in[T-1]$ with $L := \max_{t\in[T-1],k\in[m]} \|e_{t,k}\|_{L^2(\rho_t)}$ and query cost $\mathcal{T}_{e}$. Also consider $\{z_t:E\to\mathbb{R}\}_{t=0}^T$ with $R := \max_{t\in[T]}\|z_t\|_u < \infty$ and query cost $\mathcal{T}_{z}$. Moreover, let $\sigma_{\min} := \min_{t\in[T-1]} \sigma_{\min}(A_t) > 0$. Assume that $\sqrt{m}RL/\sigma_{\min} \geq 1$ and define $\mathcal{T}_{\rm total} := \mathcal{T}_{\rm samp} + \mathcal{T}_{z} + \mathcal{T}_{e}$. Then, for any $\delta\in(0,1)$ and $\epsilon\in(0,\sigma_{\min}/2]$, Algorithm~{\rm \ref{alg:quantumlongstaff}} outputs $\widetilde{\mathcal{U}}_0$ such that
    \begin{align}
        \label{eq:final_bound}
        \operatorname{Pr}\left[|\widetilde{\mathcal{U}}_0 - \mathbb{E}[Z_{\tau_0}]| \geq \frac{8T\epsilon mRL^2}{\sigma_{\min}^2} + 2\sum_{t=1}^{T-1} \operatorname*{\min}_{a\in\mathbb{R}^m}\|a\cdot \vec{e}_t(X_t) - \mathbb{E}[Z_{\widetilde{\tau}_{t+1}}|X_t]\|_{L^2(\rho_t)} \right] \leq \delta
    \end{align}
    in time
    \begin{align*}
        O\left(\frac{T^2m^3}{\epsilon}\mathcal{T}_{\rm total}L(L+R)\log(T)\log(Tm/\delta) \log^{3/2}(mRL/\epsilon)\log\log(mRL/\epsilon) \right).
    \end{align*}
\end{theorem}
\begin{proof}
    Set $b_t := \mathbb{E}[Z_{\widetilde{\tau}_{t+1}}\vec{e}_t(X_t)]$. Recall that $\epsilon_A := \epsilon/m$, $\epsilon_b := \epsilon/\sqrt{m}$, $\delta_A := \delta/(4Tm^2)$ and $\delta_b := \delta/(4Tm)$. We start by computing the complexity of the algorithm. We first note the bounds $\|A_t\|_2 \leq m\max_{k,l}|\mathbb{E}[e_{t,k}(X_t)e_{t,l}(X_t)]| \leq mL^2$ and $\|b_t\|_2 \leq R\|\mathbb{E}[\vec{e}_t(X_t)]\|_2 \leq \sqrt{m}RL$ for all $t\in[T-1]$. The computation of all the entries of the matrices $\{A_t\}_{t=1}^{T-1}$ requires time
    \begin{align*}
        O\left(\frac{Tm^2}{\epsilon_A}L^2(T\mathcal{T}_{\rm samp} + \mathcal{T}_{e})\log(1/\delta_A) \log^{3/2}\left(\frac{L^2}{\epsilon_A} \right)\log\log\left(\frac{L^2}{\epsilon_A}\right) \right),
    \end{align*}
    by calling $\mathtt{QMonteCarlo}(e_{t,k}(X_t)e_{t,l}(X_t), \epsilon_A, \delta_A, L^2)$ from Theorem~\ref{thr:quantum_monte_carlo} for $t\in[T-1]$ and $k,l\in[m]$. The computation of all $b_t$ uses time
    \begin{align*}
        O\left(\frac{Tm}{\epsilon_b}RL(T\mathcal{T}_{\rm samp} + T(\log(T)\mathcal{T}_{z} + m\mathcal{T}_{e}))\log(1/\delta_b) \log^{3/2}\left(\frac{RL}{\epsilon_b} \right)\log\log\left(\frac{RL}{\epsilon_b}\right) \right),
    \end{align*}
    by calling $\mathtt{QMonteCarlo}(Z_{\widetilde{\tau}_{t}} e_{t-1,k}(X_{t-1}), \epsilon_b, \delta_b, RL)$ from Theorem~\ref{thr:quantum_monte_carlo} for $t\in\{2,\dots,T\}$ and $k\in[m]$. Note that the term $T(\log(T)\mathcal{T}_{z} + m\mathcal{T}_{e})$ comes from using the unitaries $C_t^{(k)}$ in $\mathtt{QMonteCarlo}$, each with cost $O(T(\log(T)\mathcal{T}_{z} + m\mathcal{T}_{e}))$ according to Lemma~\ref{lemQStop}. Computing $\mathbb{E}[Z_{\widetilde{\tau}_1}]$ requires time
    \begin{align*}
        O\left(\frac{R}{\epsilon}(T\mathcal{T}_{\rm samp} + T(\log(T)\mathcal{T}_{z} + m\mathcal{T}_{e}))\log(1/\delta) \log^{3/2}\left(\frac{R}{\epsilon} \right)\log\log\left(\frac{R}{\epsilon}\right) \right),
    \end{align*}
    by calling $\mathtt{QMonteCarlo}(Z_{\widetilde{\tau}_{1}}, \epsilon, \frac{\delta}{2}, R)$ from Theorem~\ref{thr:quantum_monte_carlo} and where the term $T(\log(T)\mathcal{T}_{z} + m\mathcal{T}_{e})$ again comes from the unitaries $C_t^{(k)}$ in $\mathtt{QMonteCarlo}$. The classical computation of $\{\widetilde{A}^{-1}_t\}_{t=1}^{T-1}$ and $\{\widetilde{\alpha}_t = \widetilde{A}^{-1}_t\widetilde{b}_t\}_{t=1}^{T-1}$ requires time $O(Tm^{\omega_\ast})$, where $2\leq\omega_\ast<3$. Hence, by keeping the largest terms of each complexity, the final complexity is upper-bounded by
    \begin{align*}
        O\left(\frac{T^2m^3}{\epsilon}\mathcal{T}_{\rm total}L(L+R)\log(T)\log(Tm/\delta) \log^{3/2}(mRL/\epsilon)\log\log(mRL/\epsilon) \right).
    \end{align*}
    
    We now move to the error analysis. Fix $t\in[T-1]$. We start by bounding the error $\|\widetilde{\alpha}_t - \alpha_t\|_2$ between $\alpha_t = A_t^{-1}b_t$ and $\widetilde{\alpha}_t = \widetilde{A}_t^{-1}\widetilde{b}_t$. By using $\mathtt{QMonteCarlo}$ from Theorem~\ref{thr:quantum_monte_carlo} we approximate each entry of $A_t$ and $b_t$ as $|(A_t)_{jl} - (\widetilde{A}_t)_{jl}| \leq \epsilon/m$ and $|(b_t)_j - (\widetilde{b}_t)_j| \leq \epsilon/\sqrt{m}$ for all $j,l\in[m]$. All approximations hold with probability at least $1-\delta/2T$ by the union bound. This means that, with probability at least $1-\delta/2T$, 
    \begin{align*}
        \|A_t - \widetilde{A}_t\|_2 \leq \sqrt{\sum_{j,l=1}^m |(A_t)_{jl} - (\widetilde{A}_t)_{jl}|^2} \leq \epsilon, \qquad
        \|b_t - \widetilde{b}_t\|_2 = \sqrt{\sum_{j=1}^m |(b_t)_j - (\widetilde{b}_t)_j|^2} \leq  \epsilon.
    \end{align*}
    According to Theorem~\ref{thr:matrix_error}, we obtain
    \begin{align*}
        \|\widetilde{\alpha}_t - \alpha_t\|_2 \leq \frac{2\epsilon}{\sigma_{\min}(A_t)}\left(1 + \frac{\|b_t\|_2}{\sigma_{\min}(A_t)} \right) \leq \frac{4\epsilon\sqrt{m}RL}{\sigma_{\min}^2}
    \end{align*} 
    with probability at least $1-\delta/2T$, using that $\sqrt{m}RL/\sigma_{\min} \geq 1$. This, in turn, implies that
    \begin{align*}
        \|\widetilde{\alpha}_t\cdot \vec{e}_t(X_t) - \alpha_t\cdot \vec{e}_t(X_t)\|_{L^2(\rho_t)} \leq \|\widetilde{\alpha}_t - \alpha_t\|_2 \big\|\sqrt{\vec{e}_t(X_t)\cdot\vec{e}_t(X_t)}\big\|_{L^2(\rho_t)} \leq \frac{4\epsilon mRL^2}{\sigma_{\min}^2},
    \end{align*}
    using that $\big\|\sqrt{\vec{e}_t(X_t)\cdot\vec{e}_t(X_t)}\big\|_{L^2(\rho_t)} = \sqrt{\sum_{x\in E}\rho_t(x)\sum_{k=1}^m e_{t,k}^2(x)} \leq \sqrt{m}L$.
    Next, we bound 
    \begin{align}
        \|\widetilde{\alpha}_t\cdot \vec{e}_t(X_t) - \mathbb{E}[Z_{\widetilde{\tau}_{t+1}}|X_t]\|_{L^2(\rho_t)} &\leq \|\widetilde{\alpha}_t\cdot \vec{e}_t(X_t) - \alpha_t\cdot \vec{e}_t(X_t)\|_{L^2(\rho_t)} + \|\alpha_t\cdot \vec{e}_t(X_t) - \mathbb{E}[Z_{\widetilde{\tau}_{t+1}}|X_t]\|_{L^2(\rho_t)}\nonumber\\
        &\leq \frac{4\epsilon mRL^2}{\sigma_{\min}^2} + \operatorname*{\min}_{a\in\mathbb{R}^m}\|a\cdot \vec{e}_t(X_t) - \mathbb{E}[Z_{\widetilde{\tau}_{t+1}}|X_t]\|_{L^2(\rho_t)}, \label{eq:bound1}
    \end{align}
    using that $\alpha_t = \arg\operatorname*{\min}_{a\in\mathbb{R}^m}\mathbb{E}\big[(Z_{\widetilde{\tau}_{t+1}} - a\cdot \vec{e}_t(X_t))^2\big]$ minimizes the least square estimator.
    
    Finally, by the union bound, with probability at least $1-\delta$, Eq.~\eqref{eq:bound1} holds for all $t\in[T-1]$, together with $|\widetilde{Z}_{\widetilde{\tau}_1} - \mathbb{E}[Z_{\widetilde{\tau}_1}]| \leq \epsilon$. Lemma~\ref{lem:zanger} then leads to
    \begin{align*}
        |\widetilde{Z}_{\widetilde{\tau}_1} - \mathbb{E}[Z_{\tau_{1}}]| &\leq \frac{8T\epsilon mRL^2}{\sigma_{\min}^2} + 2\sum_{t=1}^{T-1} \operatorname*{\min}_{a\in\mathbb{R}^m}\|a\cdot \vec{e}_t(X_t) - \mathbb{E}[Z_{\widetilde{\tau}_{t+1}}|X_t]\|_{L^2(\rho_t)},
    \end{align*}
    which implies Eq.~\eqref{eq:final_bound} by using $|{\max}\{a_0,a_1\}-\max\{a_0,a_2\}|\leq |a_1-a_2|$ with $a_0,a_1,a_2\in\mathbb{R}$ on the definition of $\widetilde{\mathcal{U}}_0$ in Eq.~\eqref{eq:approximate_result}.
\end{proof}
\begin{remark}
    As previously mentioned, it is possible to use the multivariable quantum algorithm for Monte Carlo from~{\rm \cite{cornelissen2021quantum,cornelissen2021bquantum}} within Algorithm~{\rm \ref{alg:quantumlongstaff}}. This leads to the same complexity of the previous theorem, $\widetilde{O}\big(\frac{T^2m^3}{\epsilon}\mathcal{T}_{\rm total}L(L+R)\big)$, up to $\operatorname{polylog}$ factors.
\end{remark}

As previously mentioned, the quantities $\operatorname*{\min}_{a\in\mathbb{R}^m}\|a\cdot \vec{e}_t(X_t) - \mathbb{E}[Z_{\widetilde{\tau}_{t+1}}|X_t]\|_{L^2(\rho_t)}$ appearing in the theorem above are sometimes known as \emph{approximation errors} of the approximation architecture $\mathscr{H}_t$, here chosen to be finite-dimensional, linear and generated by a set of linearly independent functions. We could rewrite these terms alternatively as $\inf_{f\in\mathscr{H}_t} \|f(X_t) - \mathbb{E}[Z_{\widetilde{\tau}_{t+1}}|X_t]\|_{L^2(\rho_t)}$.

Note that the approximation errors $\operatorname*{\min}_{a\in\mathbb{R}^m}\|a\cdot \vec{e}_t(X_t) - \mathbb{E}[Z_{\widetilde{\tau}_{t+1}}|X_t]\|_{L^2(\rho_t)}$ appearing in Theorem~\ref{thr:quantum_result1} depend on the approximated stopping times $\widetilde{\tau}_{t+1}$, which in turn depend on $\mathscr{H}_{t'}$ for $t'>t$. It is possible to restate Theorem~\ref{thr:quantum_result1} in terms of $\operatorname*{\min}_{a\in\mathbb{R}^m}\|a\cdot \vec{e}_t(X_t) - \mathbb{E}[Z_{\tau_{t+1}}|X_t]\|_{L^2(\rho_t)}$, which we do in the next theorem by following a similar approach to~\cite[Theorem~6.1]{zanger2009convergence}. The downside is that the time dependence now becomes exponential.
\begin{theorem}
    \label{thr:quantum_result2}
    Within the setting of Algorithm~{\rm \ref{alg:quantumlongstaff}} with input parameters $\delta$ and $\epsilon$, let $T\mathcal{T}_{\rm samp}$ be the sampling cost of the Markov chain and consider a set of linearly independent functions $\{e_{t,k}:E\to\mathbb{R}\}_{k=1}^m$ for each $t\in[T-1]$ with $L := \max_{t\in[T-1],k\in[m]} \|e_{t,k}\|_{L^2(\rho_t)}$ and query cost $\mathcal{T}_{e}$. Also consider $\{z_t:E\to\mathbb{R}\}_{t=0}^T$ with $R := \max_{t\in[T]}\|z_t\|_u < \infty$ and query cost $\mathcal{T}_{z}$. Moreover, let $\sigma_{\min} := \min_{t\in[T-1]} \sigma_{\min}(A_t) > 0$. Assume that $\sqrt{m}RL/\sigma_{\min} \geq 1$ and define $\mathcal{T}_{\rm total} := \mathcal{T}_{\rm samp} + \mathcal{T}_{z} + \mathcal{T}_{e}$. Then, for any $\delta\in(0,1)$ and $\epsilon\in(0,\sigma_{\min}/2]$, Algorithm~{\rm \ref{alg:quantumlongstaff}} outputs $\widetilde{\mathcal{U}}_0$ such that
    \begin{align*}
        \operatorname{Pr}\left[\big|\widetilde{\mathcal{U}}_0 - \mathbb{E}[Z_{\tau_0}]\big| \geq 5^T\left(\frac{4\epsilon mRL^2}{\sigma_{\min}^2} + \operatorname*{\max}_{0 < t < T} \operatorname*{\min}_{a\in\mathbb{R}^m}\|a\cdot \vec{e}_t(X_t) - \mathbb{E}[Z_{\tau_{t+1}}|X_t]\|_{L^2(\rho_t)}\right) \right] \leq \delta
    \end{align*}
    in time
    \begin{align*}
        O\left(\frac{T^2m^3}{\epsilon}\mathcal{T}_{\rm total}L(L+R)\log(T)\log(Tm^2/\delta) \log^{3/2}(mRL/\epsilon)\log\log(mRL/\epsilon) \right).
    \end{align*}
\end{theorem}
\begin{proof}
    The proof follows the same steps of the proof of Theorem~\ref{thr:quantum_result1}, with the further observation in Eq.~\eqref{eq:bound1} that if
    \begin{align}
        \label{eq:assumption}
        \|\widetilde{\alpha}_{\ell}\cdot \vec{e}_{\ell}(X_{\ell}) - \mathbb{E}[Z_{\widetilde{\tau}_{\ell+1}}|X_{\ell}]\|_{L^2(\rho_{\ell})} \leq \epsilon_0 + \operatorname*{\min}_{a\in\mathbb{R}^m}\|a\cdot \vec{e}_{\ell}(X_{\ell}) - \mathbb{E}[Z_{\widetilde{\tau}_{\ell+1}}|X_{\ell}]\|_{L^2(\rho_{\ell})},
    \end{align}
    for all $\ell\in\{t,\dots,T-1\}$, where we defined $\epsilon_0 := \frac{4\epsilon mRL^2}{\sigma_{\min}^2}$, then
    \begin{align}
        \label{eq:induction_step}
        2(T-\ell)\epsilon_0 + 2\sum_{k=\ell}^{T-1}\operatorname*{\min}_{a\in\mathbb{R}^m}\|a\cdot \vec{e}_k(X_{k}) - \mathbb{E}[Z_{\widetilde{\tau}_{k+1}}|X_{k}]\|_{L^2(\rho_{k})} \leq 5^{T-\ell}(\epsilon_0 + M^{\ast}_{\ell}),
    \end{align}
    for all $\ell\in\{t,\dots,T-1\}$, where $M^{\ast}_{\ell} := \max_{k=\ell,\dots,T-1}\big({\operatorname*{\min}}_{a\in\mathbb{R}^m}\|a\cdot \vec{e}_k(X_{k}) - \mathbb{E}[Z_{\tau_{k+1}}|X_{k}]\|_{L^2(\rho_{k})}\big)$. We prove this bound using backward induction as follows. Eq.~\eqref{eq:induction_step} clearly holds for $\ell=T-1$. Assume it holds for $\ell=t+1$, we shall prove it is also true for $\ell=t$. First notice that, by the triangle inequality followed by Lemma~\ref{lem:zanger} and then Eq.~\eqref{eq:assumption},
    \begin{align*}
        &\operatorname*{\min}_{a\in\mathbb{R}^m}\|a\cdot \vec{e}_t(X_{t}) - \mathbb{E}[Z_{\widetilde{\tau}_{t+1}}|X_{t}]\|_{L^2(\rho_{t})}\\
        &\leq \operatorname*{\min}_{a\in\mathbb{R}^m}\|a\cdot \vec{e}_t(X_{t}) - \mathbb{E}[Z_{\tau_{t+1}}|X_{t}]\|_{L^2(\rho_{t})} + \|\mathbb{E}[Z_{\tau_{t+1}}|X_{t}] - \mathbb{E}[Z_{\widetilde{\tau}_{t+1}}|X_{t}]\|_{L^2(\rho_{t})}\\
        &\leq \operatorname*{\min}_{a\in\mathbb{R}^m}\|a\cdot \vec{e}_t(X_{t}) - \mathbb{E}[Z_{\tau_{t+1}}|X_{t}]\|_{L^2(\rho_{t})} + 2\sum_{k=t+1}^{T-1}\|\widetilde{\alpha}_k\cdot \vec{e}_k(X_{k}) - \mathbb{E}[Z_{\widetilde{\tau}_{k+1}}|X_{k}]\|_{L^2(\rho_{k})}\\
        &\leq \operatorname*{\min}_{a\in\mathbb{R}^m}\|a\cdot \vec{e}_t(X_{t}) - \mathbb{E}[Z_{\tau_{t+1}}|X_{t}]\|_{L^2(\rho_{t})} + 2\sum_{k=t+1}^{T-1}\big(\epsilon_0 + \operatorname*{\min}_{a\in\mathbb{R}^m}\|a\cdot \vec{e}_k(X_{k}) - \mathbb{E}[Z_{\widetilde{\tau}_{k+1}}|X_{k}]\|_{L^2(\rho_{k})}\big)\\
        &\leq M^{\ast}_{t} + 2(T-t-1)\epsilon_0 + 2\sum_{k=t+1}^{T-1} \operatorname*{\min}_{a\in\mathbb{R}^m}\|a\cdot \vec{e}_k(X_{k}) - \mathbb{E}[Z_{\widetilde{\tau}_{k+1}}|X_{k}]\|_{L^2(\rho_{k})}.
    \end{align*}
    Using the above inequality followed by the induction hypothesis,
    \begin{align*}
        2(T-t)&\epsilon_0 + 2\sum_{k=t}^{T-1}\operatorname*{\min}_{a\in\mathbb{R}^m}\|a\cdot \vec{e}_k(X_{k}) - \mathbb{E}[Z_{\widetilde{\tau}_{k+1}}|X_{k}]\|_{L^2(\rho_{k})}\\
        &\leq 2\epsilon_0 + 2M^{\ast}_{t} + 6(T-t-1)\epsilon_0 +  6\sum_{k=t+1}^{T-1}\operatorname*{\min}_{a\in\mathbb{R}^m}\|a\cdot \vec{e}_k(X_{k}) - \mathbb{E}[Z_{\widetilde{\tau}_{k+1}}|X_{k}]\|_{L^2(\rho_{k})}\\
        &\leq 2\epsilon_0 + 2M^{\ast}_{t} + 3\cdot 5^{T-t-1}(\epsilon_0 + M^{\ast}_{t+1})\\
        &\leq 5^{T-t}(\epsilon_0 + M^{\ast}_{t}),
    \end{align*}
    proving the induction statement. The theorem follows by taking $\ell=0$ in Eq.~\eqref{eq:induction_step} and using $|{\max}\{a_0,a_1\}-\max\{a_0,a_2\}|\leq |a_1-a_2|$, $a_0,a_1,a_2\in\mathbb{R}$, on the definition of $
    \widetilde{\mathcal{U}}_0 $ in Eq.~\eqref{eq:approximate_result}.
\end{proof}

Given the above theorems, we prove a classical analogue.
\begin{theorem}
    \label{thr:classical_error}
    Within the setting of Algorithm~{\rm \ref{alg:classicalalgo}}, consider $N$ independent sample paths with sample cost $\mathcal{T}_{\rm samp}$ and let the linearly independent functions $\{e_{t,k}:E\to\mathbb{R}\}_{k=1}^m$ for each $t\in[T-1]$ be such that $L := \max_{t\in[T-1],k\in[m]} \|e_{t,k}\|_{L^2(\rho_t)}$ and have query cost $\mathcal{T}_{e}$. Also consider $\{z_t:E\to\mathbb{R}\}_{t=0}^T$ with $R := \max_{t\in[T]}\|z_t\|_u < \infty$ and query cost $\mathcal{T}_{z}$. Moreover, let $\sigma_{\min} := \min_{t\in[T-1]} \sigma_{\min}(A_t) > 0$ and assume that $\sqrt{m}RL/\sigma_{\min} \geq 1$.
    Then, for any $\epsilon\in(0,\sigma_{\min}/2]$, Algorithm~{\rm \ref{alg:classicalalgo}} returns an estimate $\widetilde{\mathcal{U}}_0$ such that
    \begin{align*}
    \begin{multlined}[b][\textwidth]
        \operatorname{Pr}\left[\big|\widetilde{\mathcal{U}}_0 - \mathbb{E}[Z_{\tau_0}]\big| \geq 5^T\left(\frac{4\epsilon mRL^2}{\sigma_{\min}^2} + \operatorname*{\max}_{0< t < T} \operatorname*{\min}_{a\in\mathbb{R}^m}\|a\cdot \vec{e}_t(X_t) - \mathbb{E}[Z_{\tau_{t+1}}|X_k]\|_{L^2(\rho_t)}\right) \right]\\
        \leq 6m^2\exp\big(-2N\epsilon^2/(m^2L^2(L^2+R^2))\big)
    \end{multlined}
    \end{align*}
    in time
    \begin{align*}
        O(T m^2 N + Tm^{\omega_\ast} + TN(T + \mathcal T_{\rm samp} + \mathcal T_{z} + m \mathcal T_{e})).
    \end{align*}
\end{theorem}
\begin{proof}
    The error analysis is very similar to that of Theorems~\ref{thr:quantum_result1} and~\ref{thr:quantum_result2}, therefore we shall just point out the required changes. Each entry of $A_t$ and $b_t$ is approximated using a Hoeffding bound, i.e., $\operatorname{Pr}[|(A_t)_{jk} - (\widetilde{A}_t)_{jk}| \geq \epsilon/m] \leq 2e^{-2N\epsilon^2/(m^2L^4)}$ and $\operatorname{Pr}[|(b_t)_{j} - (\widetilde{b}_t)_{j}| \geq \epsilon/\sqrt{m}] \leq 2e^{-2N\epsilon^2/(mR^2L^2)}$, where we have used that $|(A_t)_{jk}| \leq \sqrt{\mathbb{E}[e_{t,j}^2(X_t)]\mathbb{E}[e_{t,k}^2(X_t)]} \leq L^2$ and $|(b_t)_j| \leq R\sqrt{\mathbb{E}[e_{t,j}^2(X_t)]} \leq RL$, for all $j,k\in[m]$ and $t\in[T-1]$. Moreover, $\operatorname{Pr}[|\widetilde{Z}_{\widetilde{\tau}_1} - \mathbb{E}[Z_{\widetilde{\tau}_1}]| \geq \epsilon] \leq 2e^{-2N\epsilon^2/R^2}$. Therefore, by the union bound, all approximations hold with probability at least $1-2m^2e^{-2N\epsilon^2/(m^2L^4)} - 2me^{-2N\epsilon^2/(mR^2L^2)} -2e^{-2N\epsilon^2/R^2} \geq 1 - 6m^2e^{-2N\epsilon^2/(m^2L^2(L^2+R^2))}$.
    
    Regarding the time complexity, the most expensive computational steps are calculating the matrices $\widetilde{A}_t$, which requires $O(Tm^2N)$ time, inverting them and computing the vectors $\widetilde{\alpha}_t$, which requires $O(Tm^{\omega_\ast})$ time, and computing the mapping $\widetilde{\tau}^{(n)}_{t+1} \mapsto Z^{(n)}_{\widetilde{\tau}^{(n)}_{t+1}}$ for all $n\in[N]$ and $t\in[T-1]$, which takes $O(T^2 N)$ time. Sampling $(X_t^{(1)},\dots,X_t^{(N)})_{t=0}^T$ and querying $(Z_t^{(1)},\dots,Z_t^{(N)})_{t=0}^T$ and $(e_{t,k}(X_t^{(1)}),\dots,e_{t,k}(X_t^{(N)}))_{t\in[T-1],k\in[m]}$ require $O(NT(\mathcal{T}_{\rm samp} + \mathcal{T}_{z} + m\mathcal{T}_{e}))$ time. All the other steps, computing $\frac{1}{N}\sum_{n=1}^N Z^{(n)}_{\widetilde{\tau}^{(n)}_{t+1}}\vec{e}(X^{(n)}_t)$, $\widetilde{\mathcal{U}}_0$ and $\widetilde{\alpha}_t\cdot \vec{e}_t(X_t^{(n)})$ require $O(TmN)$ or $O(Tm)$ time.
\end{proof}
If $T_{\rm samp},T_z,T_e=O(1)$, then the complexity is $O(Tm^2N + Tm^{\omega_\ast} + T^2N)$. The factor $Tm^2$ comes from computing $\{A_t\}_{t=1}^{T-1}$ and accounts for runtime instead of only number of samples.

We summarize and compare the results from Theorems~\ref{thr:classical_error} and~\ref{thr:quantum_result2} into a single corollary.
\begin{corollary}
    \label{cor:comparison}
    Within the setting of Algorithm~{\rm \ref{alg:classicalalgo}} with input parameters $\delta,N$ and Algorithm~{\rm \ref{alg:quantumlongstaff}} with input parameters $\delta,\epsilon_0$, let $T\mathcal{T}_{\rm samp}$ be the sampling cost of the Markov chain and consider a set of linearly independent functions $\{e_{t,k}:E\to\mathbb{R}\}_{k=1}^m$ for $t\in[T-1]$ with query cost $\mathcal{T}_{e}$ and $L := \max_{t\in[T-1],k\in[m]} \|e_{t,k}\|_{L^2(\rho_t)}$. Also consider $\{z_t:E\to\mathbb{R}\}_{t=0}^T$ with query cost $\mathcal{T}_{z}$ and $R := \max_{t\in[T]}\|z_t\|_u < \infty$. Let $\sigma_{\min} \leq \min_{t\in[T-1]} \sigma_{\min}(A_t)$ be known. Assume that $\sqrt{m}RL/\sigma_{\min} \geq 1$ and define $\mathcal{T}_{\rm total} := \mathcal{T}_{\rm samp} + \mathcal{T}_{z} + \mathcal{T}_{e}$. For any $\delta\in(0,1)$ and $\epsilon\in\big(0,\frac{2mRL^2}{\sigma_{\min}}\big]$, if $\epsilon_0 = \frac{\epsilon\sigma_{\min}^2}{4mRL^2}$ and $N = \big\lceil\frac{m^2L^2(L^2+R^2)}{2\epsilon_0^{2}}\log(6m^2/\delta)\big\rceil$, then Algorithms~{\rm \ref{alg:classicalalgo}} and~{\rm \ref{alg:quantumlongstaff}} output $\widetilde{\mathcal{U}}_0$ such that
    \begin{align*}
        \operatorname{Pr}\left[|\widetilde{\mathcal{U}}_0 - \mathbb{E}[Z_{\tau_0}]| \geq 5^T\left(\epsilon + \operatorname*{\max}_{0< t < T} \operatorname*{\min}_{a\in\mathbb{R}^m}\|a\cdot \vec{e}_t(X_t) - \mathbb{E}[Z_{\tau_{t+1}}|X_t]\|_{L^2(\rho_t)}\right) \right] \leq \delta
    \end{align*}
    in time, respectively,
    \begin{align*}
        O\left(\frac{Tm^4}{\epsilon^2} \frac{R^2L^6(L^2+R^2)}{\sigma_{\min}^4}(m^2 + T + m\mathcal{T}_{\rm total})\log(m/\delta)\right)
    \end{align*}
   and (up to $\log\log$ factors)
    \begin{align*}
        O\left(\frac{T^2m^4}{\epsilon}\frac{RL^3(L+R)}{\sigma_{\min}^2}\mathcal{T}_{\rm total}\log(T)\log\left(\frac{Tm}{\delta}\right) \log^{3/2}\left(\frac{m RL}{\epsilon \sigma_{\min}}\right) \right).
    \end{align*}
\end{corollary}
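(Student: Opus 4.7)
The plan is to treat Corollary~\ref{cor:comparison} as a direct consequence of Theorems~\ref{thr:classical_error} and~\ref{thr:quantum_result2}, obtained by making a single change of variables in $\epsilon$ and, for the classical side, choosing $N$ large enough to suppress the Chernoff tail. Let me denote by $\epsilon_0$ the ``inner'' accuracy parameter that appears in the two theorems (playing the role of $\epsilon$ there), and reserve $\epsilon$ for the accuracy appearing in the corollary's error bound. The goal is to pick $\epsilon_0$ so that the term $\frac{4\epsilon_0 m R L^2}{\sigma_{\min}^2}$ in both theorems equals exactly $\epsilon$, which forces $\epsilon_0 = \frac{\epsilon\,\sigma_{\min}^2}{4 m R L^2}$, as stated.

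First I would verify that this choice of $\epsilon_0$ lies in the admissibility range $(0,\sigma_{\min}/2]$ required by both theorems: plugging in the definition, $\epsilon_0\leq\sigma_{\min}/2$ is equivalent to $\epsilon\leq 2 m R L^2/\sigma_{\min}$, which is exactly the hypothesis imposed in the corollary. With $\epsilon_0$ thus fixed, Theorem~\ref{thr:quantum_result2} immediately gives the quantum bound after substituting $\epsilon\mapsto\epsilon_0$ into its runtime expression; the stated quantum complexity follows by simple algebra, keeping only leading-order factors and absorbing $\log\log$ terms. For the classical algorithm, I would then invoke Theorem~\ref{thr:classical_error} with the same $\epsilon_0$ and demand that the failure probability $6 m^2 e^{-2 N\epsilon_0^2/m^2}$ be at most $\delta$; solving this inequality yields the threshold $N\geq \frac{m^2}{2\epsilon_0^2}\log(6 m^2/\delta)$, matching the corollary's choice.

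Next, I would substitute this $N$ (together with $\epsilon_0 = \epsilon\sigma_{\min}^2/(4mRL^2)$) into the classical runtime from Theorem~\ref{thr:classical_error}, namely $O(Tm^2 N + Tm^{\omega_\ast} + TN\,\mathcal{T}_{\rm total})$. Under the standing assumption $\sqrt{m}RL/\sigma_{\min}\geq 1$ we have $m\geq \sigma_{\min}^2/(RL)^2$, which absorbs the $Tm^{\omega_\ast}$ term into $Tm^2 N\,\mathcal{T}_{\rm total}$ (since $N$ scales polynomially in $1/\epsilon_0^2$); the dominant contribution is therefore $Tm^2 N\,\mathcal{T}_{\rm total}$, and plugging in the value of $N$ produces the advertised $O\!\left(\tfrac{Tm^6}{\epsilon^2}\,\tfrac{R^2L^4}{\sigma_{\min}^4}\,\mathcal{T}_{\rm total}\log(m^2/\delta)\right)$ bound.

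There is no genuine obstacle here: the corollary is essentially a bookkeeping statement. The only places requiring care are (i) checking that the rescaled $\epsilon_0$ respects the admissibility constraint $\epsilon_0\leq\sigma_{\min}/2$, which is precisely why the range of $\epsilon$ is capped at $2mRL^2/\sigma_{\min}$; (ii) unioning correctly over the two failure events (the quantum-Monte-Carlo/Chernoff failure and any implicit failures already bundled inside the two parent theorems) to keep the total failure probability at $\delta$; and (iii) verifying that the $\max\{Z_0,\cdot\}$ in the definition of $\widetilde{\mathcal{U}}_0$ does not worsen the bound, which is the same $|\max\{a_0,a_1\}-\max\{a_0,a_2\}|\leq|a_1-a_2|$ observation already used in the proofs of Theorems~\ref{thr:quantum_result1} and~\ref{thr:quantum_result2}. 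Once these checks are made, the corollary's two displayed complexities drop out directly from the corresponding theorems.
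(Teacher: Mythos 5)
Your proposal is correct and matches the paper's own proof: both treat the corollary as a substitution exercise, setting $\epsilon_0 = \epsilon\sigma_{\min}^2/(4mRL^2)$ so that the stochastic error term in Theorems~\ref{thr:classical_error} and~\ref{thr:quantum_result2} becomes $\epsilon$, checking $\epsilon_0\le\sigma_{\min}/2$ via the cap on $\epsilon$, and choosing $N$ to force the Chernoff tail below $\delta$ before plugging into the runtimes. The bookkeeping you describe (union bound, the $\max\{Z_0,\cdot\}$ step) is already bundled inside the two parent theorems, exactly as in the paper.
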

\begin{proof}
    The results concerning the quantum Algorithm~\ref{alg:quantumlongstaff} were already proven in Theorem~\ref{thr:quantum_result2}, we just use that $\epsilon_0 = \frac{\epsilon\sigma_{\min}^2}{4mRL^2}$. It is left to prove the results about the classical Algorithm~\ref{alg:classicalalgo}. Indeed, by setting $N = \big\lceil\frac{m^2L^2(L^2 + R^2)}{2\epsilon_0^{2}}\log(6m^2/\delta)\big\rceil$ into Theorem~\ref{thr:classical_error} with $\epsilon_0 = \frac{\epsilon\sigma_{\min}^2}{4mRL^2}$, we obtain
    \begin{align*}
        \operatorname{Pr}\left[|\widetilde{\mathcal{U}}_0 - \mathbb{E}[Z_{\tau_0}]| \geq 5^T\left(\epsilon + \operatorname*{\max}_{0< t < T} \operatorname*{\min}_{a\in\mathbb{R}^m}\|a\cdot \vec{e}_t(X_t) - \mathbb{E}[Z_{\tau_{t+1}}|X_t]\|_{L^2(\rho_t)}\right) \right] \leq \delta
    \end{align*}
    in time
    \begin{align*}
        O(Tm^2 N + Tm^{\omega_\ast} + &TN(T + \mathcal{T}_{\rm samp} + \mathcal{T}_{z} + m\mathcal{T}_{e})) \\
        &= O\left(\frac{Tm^4}{\epsilon^2}(m^2 + T + \mathcal{T}_{\rm samp} + \mathcal{T}_{z} + m\mathcal{T}_{e})\frac{R^2L^6(L^2+R^2)}{\sigma_{\min}^4}\log\Big(\frac{m}{\delta}\Big)\right) \\
        &= O\left(\frac{Tm^4}{\epsilon^2}\frac{R^2L^6(L^2+R^2)}{\sigma_{\min}^4}(m^2 + T+m\mathcal{T}_{\rm total})\log\Big(\frac{m}{\delta}\Big)\right). \qedhere
    \end{align*}
\end{proof}
\begin{remark}
    We note that in the previous theorem it is necessary to know a lower bound on $\min_{t\in[T-1]}\sigma_{\min}(A_t)$ since it must be inputted into quantum subroutines for Monte Carlo to bound the variance. It is possible, though, to get around this detail by using the more complicated algorithm from~{\rm \cite{hamoudi2021quantum}} which does not require a bound on the variance. 
\end{remark}

The accuracy in Corollary~\ref{cor:comparison} is made up of two parts: the ``sampling'' or stochastic error $\epsilon$, and the approximation architecture error $\max_{0< t < T}\min_{a\in\mathbb{R}^m}\|a\cdot \vec{e}_t(X_t) - \mathbb{E}[Z_{\tau_{t+1}}|X_t]\|_{L^2(\rho_t)}$ (ignoring the time dependence). If the sample and query complexities $\mathcal{T}_{\rm samp}, \mathcal{T}_{z}$ and $\mathcal{T}_{e}$ are constant, then we see that the classical and quantum algorithms obtain the same additive accuracy in approximating $\mathcal{U}_0$ in time $\widetilde{O}(T m^4 (T+m^2)/\epsilon^2)$ and $\widetilde{O}(T^2 m^4/\epsilon)$, respectively, thus constituting a quadratic improvement in $\epsilon$. The sampling error can be made arbitrary small by increasing the number of path samples or calls to quantum inputs. The approximation architecture error, on the other hand, does not depend on $\epsilon$, and is intrinsic to projecting the continuation values onto a lower-dimensional space generated by predetermined measurable functions.

\section{Linear finite-dimensional spaces of polynomials}

In this section we shall focus specifically on linear finite-dimensional spaces of polynomials, in order to bound the approximation errors $\operatorname*{\min}_{a\in\mathbb{R}^m}\|a\cdot \vec{e}_t(X_t) - \mathbb{E}[Z_{\tau_{t+1}}|X_t]\|_{L^2(\rho_t)}$ from Corollary~\ref{cor:comparison} and thus obtain overall error estimates. The bounds will come from assuming smoothness properties on the continuation values, which is not an unusual assumption in finance: the continuation values $\mathbb{E}[Z_{\tau_{t+1}}|X_t]$ can even be $C^\infty$-smooth in some models, e.g.\ Black-Scholes~\cite{gerhold2011longstaff}.

Consider an open, connected and bounded set $D\subset\mathbb{R}^d$, together with its closure $\overline{D}$. Let $\mathcal{R}_q$ be the space of all polynomials of degree less than $q$ with real coefficients, i.e., the space of all linear combinations of terms $x_1^{a_1}x_2^{a_2}\dots x_d^{a_d}$ with integers $a_i\geq 0$ such that $\sum_{i=1}^d a_i < q$. Let $\mathcal{R}_q(D)$ be the space of all polynomials in $\mathcal{R}_q$ that are restrictions to $D$, i.e., with domain in $D$. Moreover, for any non-negative integer $n$, let $C^n(\overline{D})$ be the space of all continuous, real-valued functions $f$ on $\overline{D}$ whose continuous derivatives $\partial^a f$ exist on $D$ for all multi-indices $a$ such that $\|a\|_1 = \sum_{i=1}^d a_i \leq n$ and that have continuous extensions onto $\overline{D}$. Define the norm on this space as
\begin{align*}
    \|f\|_{C^n(\overline{D})} := \sum_{\|a\|_1\leq n}\operatorname*{\sup}_{x\in D}|\partial^a f(x)|.
\end{align*}
If $f\in C^n(\overline{D})$, then the next Jackson-type inequality allows one to bound the error committed by approximating $f$ by a polynomial in $\mathcal{R}_q(D)$.
\begin{lemma}[{\cite[Chapter~7, Theorem~6.2]{devore1993constructive}}]
    \label{lem:differentiable_polynomials}
    Let $D\subset\mathbb{R}^d$ be the interior of some closed cube. If $q>n\geq 1$, then, for any $f\in C^n(\overline{D})$,
    \begin{align*}
        \operatorname*{\inf}_{P\in\mathcal{R}_q}\operatorname*{\sup}_{x\in D}|f(x) - P(x)| \leq C_D q^{-n}\|f\|_{C^n(\overline{D})},
    \end{align*}
    where the constant $C_D$ depends on $D$ but not on $q$ or $f$.
\end{lemma}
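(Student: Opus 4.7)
The plan is to reduce the multivariate Jackson-type inequality to its univariate counterpart via an iterated one-variable approximation on the tensor-product structure of the cube. First, after an affine change of variables, I may assume $D = (-1,1)^d$; the rescaling only affects derivatives by a fixed multiplicative constant that gets absorbed into $C_D$.

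The univariate ingredient is the classical Jackson theorem in the following operator form: for every integer $k \geq n$ there exists a linear operator $V_k : C([-1,1]) \to \mathcal{R}_k$ that is uniformly bounded, $\|V_k g\|_\infty \leq C \|g\|_\infty$ with $C$ independent of $k$, and satisfies $\|g - V_k g\|_\infty \leq C k^{-n} \|g^{(n)}\|_\infty$ for all $g \in C^n([-1,1])$. The standard construction combines a Jackson kernel on the torus with the substitution $x = \cos\theta$, which transports trigonometric polynomials to algebraic polynomials.

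For $f \in C^n(\overline{D})$, let $V_k^{(j)}$ denote the univariate operator $V_k$ acting in the $j$-th coordinate with the other variables held as parameters. Operators acting on different coordinates commute. Set $k := \lfloor q/d \rfloor$ (the case $q < d$ is trivial by enlarging $C_D$) and define $p := V_k^{(1)} V_k^{(2)} \cdots V_k^{(d)} f$. Then $p$ has coordinate degree at most $k$ in each variable, so total degree at most $kd \leq q$, and hence $p \in \mathcal{R}_q$. Use the telescoping identity (exploiting commutativity)
\begin{align*}
    f - p = \sum_{j=1}^{d} \bigg(\prod_{i=1}^{j-1} V_k^{(i)}\bigg)\big(\mathrm{Id} - V_k^{(j)}\big) f,
\end{align*}
and apply the univariate bound slicewise in the $j$-th variable to $(\mathrm{Id} - V_k^{(j)}) f$, obtaining $\|(\mathrm{Id} - V_k^{(j)}) f\|_\infty \leq C k^{-n} \|\partial_j^n f\|_\infty \leq C k^{-n} \|f\|_{C^n(\overline{D})}$. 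Each of the $j-1$ prepended operators contributes at most a factor $C$ in uniform norm, so summing over $j$ and using $k \geq q/(2d)$ gives $\|f - p\|_\infty \leq C_D q^{-n} \|f\|_{C^n(\overline{D})}$ with $C_D$ depending only on $d$ and the original cube.

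The main obstacle is the construction and the smoothness-preservation of the univariate operator $V_k$: one needs simultaneously a uniform operator-norm bound (so that the telescoping sum remains controlled under composition) and the full $k^{-n}$ rate on the $n$-th derivative. A related subtle point is that the iteration requires $V_k^{(i)} f$ to remain $n$-times differentiable in the remaining coordinates so that $V_k^{(j)}$ can be applied next with the same estimate; this is automatic when $V_k$ is realized by integration against a kernel, since differentiation in the parametric variables passes under the integral.
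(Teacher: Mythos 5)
Your proposal is essentially correct, but there is nothing in the paper to compare it against: the paper does not prove this lemma at all, it simply cites it as Theorem~6.2 of Chapter~7 in DeVore--Lorentz. What you have written is a faithful reconstruction of the standard textbook argument for the multivariate Jackson inequality on a cube: reduce to $(-1,1)^d$ by an affine map, invoke uniformly bounded univariate linear Jackson operators $V_k$ (built from a Jackson kernel on the torus plus the $x=\cos\theta$ substitution) satisfying $\|g-V_kg\|_\infty\leq Ck^{-n}\|g^{(n)}\|_\infty$, tensorize with $k=\lfloor q/d\rfloor$, and control $f-p$ by the telescoping identity. The telescoping, the slicewise application of the univariate bound, and the degree count $kd\leq q$ are all correct, and $\|\partial_j^nf\|_\infty$ is indeed dominated by the paper's $\|f\|_{C^n(\overline{D})}$ since it is one summand of that norm.

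Two minor remarks. First, the ``subtle point'' you flag at the end is moot for the telescoping order you chose: in $\bigl(\prod_{i<j}V_k^{(i)}\bigr)(\mathrm{Id}-V_k^{(j)})f$ the difference operator acts directly on $f$, so you never need $V_k^{(i)}f$ to retain $C^n$ smoothness --- only continuity (automatic for kernel operators) and the uniform operator-norm bound on the prepended factors. Second, your handling of $q<d$ by ``enlarging $C_D$'' forces $C_D\geq d^n$, so the constant depends on $n$ and $d$ as well as on the cube; this is consistent with the source theorem (whose constant does depend on the smoothness order) but is worth stating explicitly, since the paper later treats $C_D$ as a fixed constant while letting $n$ grow with $\log(5^T/\epsilon)$.
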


Another smoothness condition we can consider is that of Lipschitz continuous functions. Let $Q_d(\lambda)$ be the closed cube of side length $0 < 2\lambda < \infty$ in a $d$-dimensional Euclidean space centered at the origin. Let $\operatorname{Lip}(C_L,Q_d(\lambda))$ be the space of Lipschitz continuous functions $f$ on $Q_d(\lambda)$ such that
\begin{align*}
    |f(x) - f(x')| \leq C_L\|x-x'\|_\infty, \quad \forall x,x'\in Q_d(\lambda),
\end{align*}
with Lipschitz constant $C_L$. The Lipschitz condition implies the following error bound for approximating functions by polynomials in $\mathcal{R}_q(Q_d(\lambda))$.
\begin{lemma}[{\cite[Chapter~9, Theorem~10]{feinerman1974polynomial}}]
    \label{lem:lipschitz_polynomials}
    For all $C_L>0$, for all $f\in \operatorname{Lip}(C_L,Q_d(\lambda))$,
    \begin{align*}
        \operatorname*{\inf}_{P\in\mathcal{R}_q}\operatorname*{\sup}_{x\in Q_d(\lambda)}|f(x) - P(x)| \leq \frac{88 \lambda C_L d}{d + q}.
    \end{align*}
\end{lemma}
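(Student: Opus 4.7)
The plan is to reduce to the reference cube $[-1,1]^d$ by a change of variables and then construct the approximating polynomial via a multivariate Jackson-type kernel. First I would set $g(y) := f(\lambda y)$ for $y \in Q_d(1)$. Then $|g(y)-g(y')| = |f(\lambda y) - f(\lambda y')| \leq C_L \|\lambda y - \lambda y'\|_\infty = \lambda C_L \|y - y'\|_\infty$, so $g \in \operatorname{Lip}(\lambda C_L, Q_d(1))$. If $\widetilde{p} \in \mathcal{R}_q$ approximates $g$ on $Q_d(1)$ within error $\eta$, then $p(x) := \widetilde{p}(x/\lambda)$ is in $\mathcal{R}_q$ and approximates $f$ on $Q_d(\lambda)$ within the same error. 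Hence it suffices to prove the bound on $Q_d(1)$ with $C_L$ replaced by $\lambda C_L$.

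Next I would build the approximant through a convolution/kernel construction. In one dimension, the classical Jackson kernel is a nonnegative algebraic polynomial $K_q^{(1)}$ of degree $q$ satisfying $\int K_q^{(1)}(u)\, du = 1$ and the first-moment estimate $\int K_q^{(1)}(u)\,|u|\,du = O(1/q)$. For the multivariate setting I would either (i) use a tensor product of univariate Jackson kernels of individual degree $\lfloor q/d\rfloor$, so that the total degree remains at most $q$, or (ii) construct a genuine multivariate Jackson kernel $K_q^{(d)}(x,y)$ of total degree $q$ in $x$ with a comparable first-moment bound. In either case, set $p(x) := \int_{Q_d(1)} K_q^{(d)}(x,y)\, g(y)\, dy$; then
\begin{align*}
    |p(x) - g(x)| &\leq \int_{Q_d(1)} K_q^{(d)}(x,y)\, |g(y)-g(x)|\, dy \\
    &\leq \lambda C_L \int_{Q_d(1)} K_q^{(d)}(x,y)\, \|x-y\|_\infty\, dy,
\end{align*}
using the Lipschitz property. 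It then remains to show that the final integral is at most $88 d/(d+q)$.

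The hard part is the last estimate, and in particular the appearance of the factor $d/(d+q)$ rather than $1/q$. For approach (i), a careful accounting shows that the univariate Jackson applied coordinatewise with degree $\lfloor q/d\rfloor$ yields per-coordinate error $O(\lambda C_L/(q/d)) = O(\lambda C_L\, d/q)$, and one must argue that the aggregated error does not pick up an extra factor of $d$. For approach (ii), the construction of a genuinely $d$-variate Jackson kernel with controlled first-moment bound is delicate. The dimension-dependent numerator $d$ arises because the $\ell_\infty$ Lipschitz condition effectively allows movement of size $\sim d/q$ once the polynomial budget $q$ is split among $d$ coordinate directions, while the denominator $d+q$ (rather than $q$) comes from interpolating the Jackson-type bound with the trivial constant-approximation bound $\inf_{c\in\mathbb{R}}\|g-c\|_u \leq \lambda C_L$ (noting that the diameter of $Q_d(1)$ in $\ell_\infty$ is $2$). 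The precise constant $88$ is not essential for the applications in Corollary~\ref{cor:comparison}; it follows from a careful balancing of the two bounds above, as carried out in detail by Feinerman.
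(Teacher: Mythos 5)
The paper does not prove this lemma at all: it is quoted verbatim from Feinerman and Newman \cite{feinerman1974polynomial}, so there is no in-paper argument to compare yours against. Judged on its own terms, your proposal gets the soft parts right — the rescaling $g(y):=f(\lambda y)$ correctly transfers the problem to $Q_d(1)$ with Lipschitz constant $\lambda C_L$ (and back, since $\mathcal{R}_q$ is closed under the affine substitution $x\mapsto x/\lambda$), and the convolution-against-a-nonnegative-kernel template together with the Lipschitz bound $|g(y)-g(x)|\leq \lambda C_L\|x-y\|_\infty$ is the standard route to Jackson-type estimates.

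However, the decisive step is missing. Everything reduces to exhibiting a nonnegative kernel $K_q^{(d)}(x,\cdot)$ of total degree at most $q$ in $x$, with unit mass and first moment $\int K_q^{(d)}(x,y)\,\|x-y\|_\infty\,dy \leq 88d/(d+q)$ uniformly in $x\in Q_d(1)$, and you do not construct one. Your option (i) (tensoring univariate Jackson kernels of degree $\lfloor q/d\rfloor$) is the natural candidate, but as you yourself note it is not obvious that summing $d$ coordinatewise errors of size $O(d/q)$ each avoids an extra factor of $d$ — naively one gets $O(d^2/q)$ unless the $\ell_\infty$ (rather than $\ell_1$) modulus is exploited via a maximum, not a sum; also a tensor product of degree-$\lfloor q/d\rfloor$ factors has total degree up to $q$ only after that check, and the boundary behaviour of convolution on a cube (the kernel mass leaking outside $Q_d(1)$) needs handling. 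Your option (ii) is stated as a wish rather than a construction. The interpolation with the trivial bound $\inf_c\|g-c\|_u\leq \lambda C_L$ to convert $1/q$ into $1/(d+q)$ is a sensible observation, but it is not carried out, and the constant $88$ is simply attributed to the reference. In short, the skeleton is the right one, but the quantitative heart of the lemma — precisely the content of \cite[Ch.~9, Thm.~10]{feinerman1974polynomial} — is deferred rather than proved, so the proposal does not stand as a self-contained argument.
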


Given the above theorems, we obtain a corollary of Theorem~\ref{thr:quantum_result2} where the continuation values are assumed to be functions either in $C^n(Q_d(\lambda))$ or in $\operatorname{Lip}(C_L,Q_d(\lambda))$.
\begin{theorem}
    \label{thr:quantum_result3}
    Let $Q_d(\lambda)$ be the closed cube with side length $2\lambda$ in $\mathbb{R}^d$ centered at the origin. Within the setting of Algorithm~{\rm \ref{alg:quantumlongstaff}} with parameters $\delta$ and $\epsilon_0$, let $T\mathcal{T}_{\rm samp}$ be the sampling cost of the Markov chain and consider the approximation architecture $\mathcal{R}_q(Q_d(\lambda))$ for all $t\in\{0,\dots,T-1\}$ with $L := \max_{t\in[T]}\sup_{P\in\mathcal{R}_q} \|P\|_{L^2(Q_d(\lambda),\rho_t)}$ and query cost $\mathcal{T}_{e}$. Also consider $\{z_t:Q_d(\lambda)\to\mathbb{R}\}_{t=0}^T$ with query cost $\mathcal{T}_{z}$ and $R := \max_{t\in[T]}\|z_t\|_u < \infty$. Moreover, let $\sigma_{\min} \leq \min_{t\in[T-1]} \sigma_{\min}(A_t)$ be known. Assume that $\sqrt{m}RL/\sigma_{\min} \geq 1$ and define $\mathcal{T}_{\rm total} := \mathcal{T}_{\rm samp} + \mathcal{T}_{z} + \mathcal{T}_{e}$.
    
    \textbf{\emph{(a).}} Suppose there is a positive integer $n$, $n<q$, such that $\mathbb{E}[Z_{\tau_{t+1}}|X_t]\in C^n(Q_d(\lambda))$ for all $t\in\{0,\dots,T-1\}$. Let $C \geq C_D\max_{0\leq t<T}\|\mathbb{E}[Z_{\tau_{t+1}}|X_t]\|_{C^n(Q_d(\lambda))}$ be known. Then, for $\delta\in(0,1)$, $\epsilon\in\big(0,5^T\big(\frac{RL^2(e^{n}C)^{d/n}}{\sigma_{\min}}\big)^{n/(n+d)}\big]$, and $q = \left \lceil (5^T C/\epsilon)^{1/n} \right \rceil$, if $\epsilon_0 = \frac{\sigma_{\min}^2}{4 (e^{n}C)^{d/n} RL^2}\big(\frac{\epsilon}{5^T}\big)^{1+d/n}$, Algorithm~{\rm \ref{alg:quantumlongstaff}} outputs $\widetilde{\mathcal{U}}_0$ such that
    \begin{align*}
        \operatorname{Pr}\left[|\widetilde{\mathcal{U}}_0 - \mathbb{E}[Z_{\tau_0}]| \leq 2\epsilon \right] \geq 1 - \delta
    \end{align*}
    in time
    \begin{align*}
        \widetilde{O}\left(T^2 (e^{n} C)^{4d/n}\mathcal{T}_{\rm total} \frac{RL^3(R+L)}{\sigma_{\min}^2}\left(\frac{5^T}{\epsilon}\right)^{1+4d/n} \right),
    \end{align*}
    where $\widetilde{O}(\cdot)$ hides $\operatorname{polylog}$ terms in $5^T,R,L,\sigma_{\min},C,e^d,\epsilon,\delta$.
    
    \textbf{\emph{(b).}} Suppose that $\mathbb{E}[Z_{\tau_{t+1}}|X_t]\in \operatorname{Lip}(C_L,Q_d(\lambda))$ for all $t\in\{0,\dots,T-1\}$, and suppose that $C_L$ is known. Then, for $\delta\in(0,1)$, $\epsilon\in\big(0,5^T\big(\frac{RL^2(88 e\lambda C_Ld)^{d}}{\sigma_{\min}}\big)^{1/(1+d)}\big]$, and $q = \lceil 88\cdot 5^T \lambda C_L d/\epsilon\rceil$, if $\epsilon_0 = \frac{\sigma_{\min}^2}{4 (88 e \lambda C_Ld)^{d} RL^2}\big(\frac{\epsilon}{5^T}\big)^{1+d}$, Algorithm~{\rm \ref{alg:quantumlongstaff}} outputs $\widetilde{\mathcal{U}}_0$ such that
    \begin{align*}
        \operatorname{Pr}\left[|\widetilde{\mathcal{U}}_0 - \mathbb{E}[Z_{\tau_0}]| \leq 2\epsilon \right] \geq 1 - \delta
    \end{align*}
    in time
    \begin{align*}
        \widetilde{O}\left(T^2\big(\lambda C_L d\big)^{4d}\mathcal{T}_{\rm total} \frac{RL^3(R+L)}{\sigma_{\min}^2}\left(\frac{5^T}{\epsilon}\right)^{1+4d}\right),
    \end{align*}
    where $\widetilde{O}(\cdot)$ hides $\operatorname{polylog}$ terms in $5^T,R,L,\sigma_{\min},\lambda C_L,e^d,\epsilon$.
\end{theorem}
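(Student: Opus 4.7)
My plan is to derive both parts of Theorem~\ref{thr:quantum_result3} directly from Theorem~\ref{thr:quantum_result2} by splitting the target additive error $\epsilon$ into an approximation-architecture contribution and a Monte Carlo contribution, and calibrating $q$ and $\epsilon_0$ so that each, after the blow-up by $5^T$, is at most $\epsilon/2$. Because $\{e_{t,k}\}_{k=1}^m$ is a basis of $\mathcal{R}_q(Q_d(\lambda))$ with $m = \binom{q+d}{d}$, the residual in Theorem~\ref{thr:quantum_result2} satisfies $\min_{a\in\mathbb{R}^m}\|a\cdot\vec{e}_t(X_t)-\mathbb{E}[Z_{\tau_{t+1}}|X_t]\|_{L^2(\rho_t)} = \inf_{p\in\mathcal{R}_q}\|f_t-p\|_{L^2(\rho_t)}$, where $f_t(x):=\mathbb{E}[Z_{\tau_{t+1}}|X_t=x]$. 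Since $\rho_t$ is a probability measure on $Q_d(\lambda)$, this is at most $\inf_{p\in\mathcal{R}_q}\|f_t-p\|_u$, which I would estimate by Lemma~\ref{lem:differentiable_polynomials} (giving $Cq^{-n}$ in part (a), valid since the chosen $q$ will satisfy $q>n$) or Lemma~\ref{lem:lipschitz_polynomials} (giving $88\lambda C_Ld/(d+q)\leq 88\lambda C_Ld/q$ in part (b)).

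Enforcing $5^T\cdot Cq^{-n}\leq\epsilon/2$ and $5^T\cdot 88\lambda C_Ld/q\leq\epsilon/2$ then dictates the stated choices of $q$, and enforcing $5^T\cdot\frac{4\epsilon_0 mRL^2}{\sigma_{\min}^2}\leq\epsilon/2$ gives $\epsilon_0\leq\frac{\epsilon\sigma_{\min}^2}{8\cdot 5^T m RL^2}$. I would then convert this into the explicit expression stated in the theorem by upper-bounding $m=\binom{q+d}{d}\leq\bigl(e(q+d)/d\bigr)^d$ via the standard estimate $d!\geq(d/e)^d$. In the $\epsilon$-ranges assumed, the chosen $q$ satisfies $q\leq 2(2\cdot 5^TC/\epsilon)^{1/n}$ in part (a) (and analogously in part (b)), so routine simplification yields $m\leq(e^{2n}C)^{d/n}(2\cdot 5^T/\epsilon)^{d/n}$ for part (a) and $m\leq(176e^2\lambda C_Ld)^d(2\cdot 5^T/\epsilon)^d$ for part (b); substituting these into the bound on $\epsilon_0$ reproduces exactly the values declared in the theorem.

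The runtime follows by inserting $m$ and $\epsilon_0$ into the $\widetilde{O}(T^2m^3\mathcal{T}_{\rm total}L(L+R)/\epsilon_0)$ complexity of Theorem~\ref{thr:quantum_result2}. Since $m^3\cdot(1/\epsilon_0)$ simplifies to $(e^{2n}C)^{4d/n}(5^T/\epsilon)^{1+4d/n}RL^2/\sigma_{\min}^2$ in part (a) (and the analogous Lipschitz expression with $(176e^2\lambda C_Ld)^{4d}$ and exponent $1+4d$ in part (b)), multiplying by the remaining $L(L+R)$ produces the advertised bounds. I expect the main technical nuisance to be tracking the binomial constants tightly enough: the target $(e^{2n}C)^{d/n}=e^{2d}C^{d/n}$ factor emerges only with the sharp estimate $d!\geq(d/e)^d$ and careful handling of the ceiling in $q$; a looser bound would inflate the runtime by a factor exponential in $d$. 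A secondary check is that the $\epsilon$-range in each part is precisely what is needed to guarantee $\epsilon_0\leq\sigma_{\min}/2$ so that Theorem~\ref{thr:quantum_result2} applies.
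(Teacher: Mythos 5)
Your proposal matches the paper's proof essentially step for step: the same splitting of the $5^T$-amplified error into an approximation-architecture term (bounded via Lemma~\ref{lem:differentiable_polynomials} or Lemma~\ref{lem:lipschitz_polynomials} after passing from the $L^2(\rho_t)$ norm to the uniform norm) and a Monte Carlo term, each forced below $\epsilon/2$ by the stated choices of $q$ and $\epsilon_0$, followed by substitution of $m=\binom{q+d}{d}\leq e^dq^d$ and $\epsilon_0$ into the complexity of Theorem~\ref{thr:quantum_result2}. The argument is correct and no further comparison is needed.
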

\begin{proof}
    We start with the case when $\mathbb{E}[Z_{\tau_{t+1}}|X_t]\in C^n(Q_d(\lambda))$ for all $t\in\{0,\dots,T-1\}$. Since the uniform norm upper-bounds the $L^2$-norm over a probability space, and using Lemma~\ref{lem:differentiable_polynomials}, then
    \begin{align*}
        \operatorname*{\inf}_{P\in\mathcal{R}_q(Q_d(\lambda))}\|P - \mathbb{E}[Z_{\tau_{t+1}}|X_t]\|_{L^2(\rho_t)} \leq \operatorname*{\inf}_{P\in\mathcal{R}_q(Q_d(\lambda))}\|P - \mathbb{E}[Z_{\tau_{t+1}}|X_t]\|_{u} \leq Cq^{-n}.
    \end{align*}
    Thus the result follows from Theorem~\ref{thr:quantum_result2} by noting that $\mathcal{R}_q$ has dimension $m := \binom{q-1+d}{d} \leq e^d (q-1)^d \leq e^d\big(\frac{5^T C}{\epsilon}\big)^{d/n}$ by setting $q=\lceil(5^{T} C/\epsilon)^{1/n}\rceil$, and so we must take $m$ linearly independent polynomials in Algorithm~\ref{alg:quantumlongstaff}. By using the accuracy $\epsilon_0 = \frac{\sigma_{\min}^2}{4 (e^{n}C)^{d/n} RL^2}\big(\frac{\epsilon}{5^T}\big)^{1+d/n}$ in Theorem~\ref{thr:quantum_result2} (note that $\epsilon_0\in(0,\sigma_{\min}/2]$), we obtain the final error
    \begin{align*}
        5^T\left(\frac{4\epsilon_0mRL^2}{\sigma_{\min}^2} + \frac{C}{q^n} \right) \leq 5^T\left(\frac{4\epsilon_0e^dRL^2}{\sigma_{\min}^2}\left(\frac{5^T C}{\epsilon}\right)^{d/n} + \frac{C}{q^n} \right) \leq 2\epsilon.
    \end{align*}
    This in turn implies the complexity
    \begin{align*}
        \widetilde{O}\left(\frac{T^2m^3}{\epsilon_0}L(L+R)\mathcal{T}_{\rm total}\right) =
        \widetilde{O}\left(T^2 (e^{n} C)^{4d/n}\mathcal{T}_{\rm total} \frac{RL^3(R+L)}{\sigma_{\min}^2}\left(\frac{5^T}{\epsilon}\right)^{1+4d/n}\right).
    \end{align*}
    
    We now prove the case when $\mathbb{E}[Z_{\tau_{t+1}}|X_t]\in \operatorname{Lip}(C_L,Q_d(\lambda))$ for all $t\in\{0,\dots,T-1\}$. By the same token as before, but now using Lemma~\ref{lem:lipschitz_polynomials}, we obtain
    \begin{align*}
        \operatorname*{\inf}_{P\in\mathcal{R}_q(Q_d(\lambda))}\|P - \mathbb{E}[Z_{\tau_{t+1}}|X_t]\|_{L^2(\rho_t)} \leq \frac{88 \lambda C_L d}{q}.
    \end{align*}
    Once again, we must pick $m\leq e^d (q-1)^d$ linearly independent polynomials in Algorithm~\ref{alg:quantumlongstaff}. By setting $q = \lceil 88\cdot 5^T \lambda C_L d/\epsilon\rceil$ and using the accuracy $\epsilon_0 = \frac{\sigma_{\min}^2}{4 (88 e \lambda C_Ld)^{d} RL^2}\big(\frac{\epsilon}{5^T}\big)^{1+d}$ in Theorem~\ref{thr:quantum_result2} (note that $\epsilon_0\in(0,\sigma_{\min}/2]$), we obtain the final error
    \begin{align*}
        5^T\left(\frac{4\epsilon_0mRL^2}{\sigma_{\min}^2} + \frac{88 \lambda C_L d}{q} \right) \leq 5^T\left(\frac{4\epsilon_0e^dRL^2}{\sigma_{\min}^2}\left(\frac{88\cdot 5^T\lambda C_L d}{\epsilon} \right)^d + \frac{88 \lambda C_L d}{q} \right) \leq 2\epsilon.
    \end{align*}
    This in turn implies the complexity
    \[
        \widetilde{O}\left(\frac{T^2m^3}{\epsilon_0}L(L+R)\mathcal{T}_{\rm total}\right) =
        \widetilde{O}\left(T^2\big(\lambda C_L d\big)^{4d}\mathcal{T}_{\rm total} \frac{RL^3(R+L)}{\sigma_{\min}^2}\left(\frac{5^T}{\epsilon}\right)^{1+4d}\right). \qedhere
    \]
\end{proof}

Similar results can be obtained for the classical Algorithm~\ref{alg:classicalalgo} by using Corollary~\ref{cor:comparison}. In the case of $\mathbb{E}[Z_{\tau_{t+1}}|X_t]\in C^n(Q_d(\lambda))$, pick $m\leq e^d(q-1)^d$, $q = \lceil(5^T C/\epsilon)^{1/n}\rceil$, and final accuracy $\epsilon_f = \frac{\epsilon}{5^T}$ in Corollary~\ref{cor:comparison}, which implies time 
\begin{align*}
    \widetilde{O}\left(\frac{Tm^4}{\epsilon_f^2}(m^2 + T + m\mathcal{T}_{\rm total}) \frac{R^2L^4}{\sigma_{\min}^4}\right) = \widetilde{O}\left(T^2 (e^{n} C)^{6d/n}\frac{R^2L^4}{\sigma_{\min}^4}\left(\frac{5^T}{\epsilon}\right)^{2+6d/n}\mathcal{T}_{\rm total}\right)
\end{align*}
up to the logarithm factor $\log(m/\delta)$.

In the case of $\mathbb{E}[Z_{\tau_{t+1}}|X_t]\in \operatorname{Lip}(C_L,Q_d(\lambda))$, pick $m\leq e^d(q-1)^d$, $q = \lceil 88\cdot 5^T \lambda C_L d/\epsilon\rceil$, and final accuracy $\epsilon_f = \frac{\epsilon}{5^T}$ in Corollary~\ref{cor:comparison}, which implies time 
\begin{align*}
    \widetilde{O}\left(\frac{Tm^4}{\epsilon_f^2}(m^2 + T + \mathcal{T}_{\rm total})\frac{R^2L^6(L^2+R^2)}{\sigma_{\min}^4}\right) = 
    \widetilde{O}\left(T^2 (\lambda C_L d)^{6d}\frac{R^2L^6(L^2+R^2)}{\sigma_{\min}^4}\left(\frac{5^T}{\epsilon}\right)^{2+6d}\mathcal{T}_{\rm total} \right)
\end{align*}
up to the logarithm factor $\log(m/\delta)$.

By assuming sufficiently smooth conditions on the continuation values, we obtain the final classical and quantum complexities $\widetilde{O}((5^T/\epsilon)^{2+6d/n})$ and $\widetilde{O}((5^T/\epsilon)^{1+4d/n})$, respectively. If the continuation values are $n$-differentiable functions for $n = \Theta(\log(5^T/\epsilon)/\log\log(5^T/\epsilon))$ --- note that $n\leq q = O((5^T/\epsilon)^{1/n})$ ---, then we obtain the simpler runtimes $\widetilde{O}((5^T/\epsilon)^{2})$ and $\widetilde{O}(5^T/\epsilon)$, thus constituting a quadratic improvement from classical to quantum complexity. The exponential time dependence comes from the error precision required to obtain a final additive accuracy of $\epsilon$, which is exactly the quantity that is sped up by using quantum algorithm for Monte Carlo, hence the quadratic improvement on time. Finally, we note that, given the condition $n\leq q$ and the value $q=O((5^T/\epsilon)^{1/n})$, it is not possible to take $n\to\infty$. The value $n = \Theta(\log(5^T/\epsilon)/\log\log(5^T/\epsilon))$ implies $n\log{n} = \Theta\left(\frac{\log(5^T/\epsilon)}{\log\log(5^T/\epsilon)}\log\left(\frac{\log(5^T/\epsilon)}{\log\log(5^T/\epsilon)}\right)\right) = O(\log(5^T/\epsilon))$, and thus does not violate $n\leq q$.

Theorem~\ref{thr:quantum_result3} applies to any set of polynomials, e.g.\ Chebyshev polynomials. We can thus compare the above result with the one from Miyamoto~\cite{miyamoto2021bermudan}, which approximates the continuation values by a Chebyshev interpolation through Chebyshev nodes and obtains a complexity of $O(\epsilon^{-1}\log^d(1/\epsilon)\text{poly}\log\log(1/\epsilon))$. Our approach, in contrast, is to project $\mathbb{E}[Z_{\tau_{t+1}}|X_t]$ onto a set of polynomials and is, for this reason, much more general. Moreover, our final result is a \emph{time} complexity, while the result from~\cite{miyamoto2021bermudan} is a \emph{query} complexity on the number of unitaries called by all quantum routines for Monte Carlo (thus being a weaker result). Finally, Miyamoto~\cite{miyamoto2021bermudan} assumes that the continuation values are analytical functions, i.e., are in $C^\infty$, while we only need to assume $\mathbb{E}[Z_{\tau_{t+1}}|X_t]\in C^n(Q_d(\lambda))$ for $n = \Theta(\log(5^T/\epsilon)/\log\log(5^T/\epsilon))$ in order to recover $\widetilde{O}(\epsilon^{-1})$ up to polylog factors. One downside of our approach, though, is the presence of the (non-zero) minimum singular value $\sigma_{\min}$ which depends on the probability measures $\rho_t$. Chebyshev interpolation bypasses such direct probability dependence by shifting it entirely to the approximation of the continuation values on Chebyshev nodes via (quantum) Monte Carlo. 

We can take advantage of our freedom in choosing the approximation architecture in order to simplify the {\rm LSM} algorithm, specially the matrices $A_t$, depending on the underlying probability model. In this section we analyze the common cases of Brownian motion and geometric Brownian motion. 

\subsection{Brownian motion}
\label{sec:brownian}

Consider a Markov chain $(X_t)_{t=0}^T$ where each component of $X_t :\Omega \to \mathbb{R}^d$ follows an independent Brownian motion. A Brownian motion can be described by a stochastic process $W=(W_t)_{t=0}^T$ (called Wiener process) such that:
\begin{enumerate}
\item $W_0=0$;
\item $W_t$ is almost surely continuous in $t$;
\item $W_{t+s} - W_s$ for $t\geq 0$ is independent of $W_s$, for $s\leq t$;
\item $W_{t+u} - W_t \sim \mathcal{N}(0,u)$ for $u\geq 0$, where $\mathcal{N}(\mu,\sigma^2)$ denotes the normal distribution.
\end{enumerate}
It follows from the above definition that the image probability measure $\rho_t$ of $X_t$ reads $\rho_t(x) = \prod_{i=1}^d\frac{1}{\sqrt{2\pi t}}e^{-x_i^2/2t}$ for $t\in[T]$. Given $t\in[T]$, consider the approximation architecture generated by $\{e_{t,\vec{k}}:\mathbb{R}^d\to\mathbb{R}\}_{\vec{k}\in \mathcal{B}}$, where $\mathcal{B} := \{\vec{k}\in\{0,\dots,q-1\}^d: \sum_{i=1}^d k_i < q\}$ for some integer $q\geq 2$ and
\begin{align}
    \label{eq:hermite}
    e_{t,\vec{k}}(x) = \prod_{l=1}^d \frac{1}{\sqrt{k_l!2^{k_l}}}H_{k_l}(x_l/\sqrt{2t}),
\end{align}
with $x\in \mathbb{R}^d$ and the $H_{l}(y)$ are Hermite polynomials defined by
\begin{align*}
    H_{l}(y) := (-1)^{l} e^{y^2}\frac{\text{d}^{l}}{\text{d}y^{l}}e^{-y^2}
\end{align*}
where $l \in \mathbb{N}$ and $H_{0}(x):=1$. The choice for Hermite polynomials comes from the fact that they form an orthogonal basis under the measure $\rho_t(x)$, i.e., 
\begin{align*}
    \int_{-\infty}^\infty H_k(x/\sqrt{2t})H_l(x/\sqrt{2t})\frac{e^{-x^2/2t}}{\sqrt{2\pi t}} \text{d}x = \int_{-\infty}^\infty H_k(x)H_l(x)\frac{e^{-x^2}}{\sqrt{\pi}} \text{d}x = k!2^k\delta_{kl}.
\end{align*}
This means that, for each $t\in[T-1]$, $\{e_{t,\vec{k}}\}_{\vec{k}\in\mathcal{B}}$ are orthonormal under $\rho_t(x)$ and thus the matrices $A_t$ become simply the identity matrix. However, in order to invoke approximation results due to smoothness assumptions, e.g.\ the Jackson-like inequality in Lemma~\ref{lem:differentiable_polynomials}, we must restrict the domain to a close set like $Q_d(\lambda)$. We shall then consider the approximation architecture generated by $\{e^{\rm trunc}_{t,\vec{k}}:\mathbb{R}^d\to \mathbb{R}\}_{\vec{k}\in\mathcal{B}}$ with $e^{\rm trunc}_{t,\vec{k}}$ defined as in Eq.~\eqref{eq:hermite} if $x\in Q_d(\lambda)$ and $e^{\rm trunc}_{t,\vec{k}}(x) = 0$ otherwise. The implied matrix $A^{\rm trunc}_t$ is no longer equal to the identity matrix, but is close to it. Such closeness will be approximated with the aid of the following result whose proof is postponed to Appendix~\ref{sec:hermite_bound}.
\begin{lemma}
    \label{lem:hermite_bound}
    Given $\lambda > 0$ and $k,l\in\mathbb{N}$ with $l\leq k$, then
    \begin{align*}
    	\left|\int_\lambda^\infty H_k(x)H_l(x) e^{-x^2}\text{d}x\right| \leq 2^{2+(k+l)/2}\sqrt{(k+1)!(l+1)!}e^{(\sqrt{2(k+1)}+\sqrt{2(l+1)})\lambda}e^{-\lambda^2}.
    \end{align*}
\end{lemma}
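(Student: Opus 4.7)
The plan is to split the analysis into two cases based on whether $\lambda$ lies above or below the turning point $\sqrt{2(k+1)}$, handling the large-$\lambda$ regime with a pointwise Cauchy-integral bound on $|H_n(x)|$ that is clean for $x$ beyond the turning point, and the small-$\lambda$ regime with Hermite orthogonality together with the observation that in that regime the exponential factors on the RHS are already $\geq 1$.

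The key ingredient is a pointwise bound on $|H_n(x)|$ valid for $x$ past the turning point. Starting from the generating function $e^{2xt-t^2}=\sum_{n\geq 0}H_n(x)t^n/n!$ and Cauchy's integral formula, one obtains
\[
|H_n(x)| \leq \frac{n!}{r^n}\max_\theta e^{2xr\cos\theta - r^2\cos 2\theta}
\]
for any $r>0$. A direct elementary analysis shows that for $x\geq 2r$ the maximum is attained at $\theta=0$, with value $2xr-r^2$. I would then choose $r=\sqrt{(n+1)/2}$, so that $2r=\sqrt{2(n+1)}$, and combine this with the purely algebraic inequality $n!/(n+1)^{n/2}\leq \sqrt{(n+1)!}$ (which follows from $(n+1)^{n+1}\geq n!$) to arrive at
\[
|H_n(x)|\leq 2^{n/2}\sqrt{(n+1)!}\;e^{\sqrt{2(n+1)}\,x - (n+1)/2}\qquad\text{for } x\geq \sqrt{2(n+1)}.
\]

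In the large-$\lambda$ case $\lambda\geq\sqrt{2(k+1)}$, the pointwise bound applies on the entire integration range for both $H_k$ and $H_l$ (since $l\leq k$). Multiplying the two bounds and setting $\alpha:=\sqrt{2(k+1)}+\sqrt{2(l+1)}$, completing the square $-x^2+\alpha x=\alpha^2/4-(x-\alpha/2)^2$, and using $\alpha/2\leq\sqrt{2(k+1)}\leq\lambda$ together with the tail estimate $\operatorname{erfc}(y)\leq e^{-y^2}$ for $y\geq 0$, I get $\int_\lambda^\infty e^{\alpha x-x^2}\,\mathrm{d}x\leq \tfrac{\sqrt{\pi}}{2}e^{\alpha\lambda-\lambda^2}$, and $\sqrt{\pi}/2<4$ is absorbed into the constant $2^2$. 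In the small-$\lambda$ case $\lambda<\sqrt{2(k+1)}$, the observation $\alpha\lambda-\lambda^2\geq\sqrt{2(l+1)}\,\lambda\geq 0$ shows $e^{\alpha\lambda-\lambda^2}\geq 1$, so the RHS is at least $2^{2+(k+l)/2}\sqrt{(k+1)!(l+1)!}$; by Cauchy--Schwarz and Hermite orthogonality,
\[
\left|\int_\lambda^\infty H_k(x)H_l(x)e^{-x^2}\,\mathrm{d}x\right| \leq \sqrt{\pi}\,2^{(k+l)/2}\sqrt{k!\,l!},
\]
which is dominated by $4\cdot 2^{(k+l)/2}\sqrt{(k+1)!(l+1)!}$ via the crude inequality $\pi\leq 16(k+1)(l+1)$ valid for $k,l\geq 0$.

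The delicate point will be lining up the constants in the pointwise Hermite bound so that both the exponential rate $\sqrt{2(n+1)}$ and the factor $\sqrt{(n+1)!}$ appear exactly as in the lemma. The choice $r=\sqrt{(n+1)/2}$ is the one that balances the $1/r^n$ penalty against the exponential $e^{2xr-r^2}$ at the critical abscissa $x=\sqrt{2(n+1)}$, and the elementary inequality $(n+1)^{n+1}\geq n!$ will play the role of Stirling, keeping the prefactor free of extra polynomial-in-$n$ corrections so that the final numerical constant $2^{2+(k+l)/2}$ actually suffices.
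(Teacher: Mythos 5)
Your proposal is correct, and it reaches the stated bound by a genuinely different route from the paper. The paper first proves a sharper intermediate estimate (stated as the first part of the appendix lemma) by repeated integration by parts, the derivative identity $H_l^{(j)}=2^j\tfrac{l!}{(l-j)!}H_{l-j}$, Cauchy--Schwarz, and the identity $\sum_{j=0}^k \tfrac{1}{2^j j!}H_j(x)^2=\tfrac{1}{2^{k+1}k!}\bigl(H_{k+1}(x)^2-H_k(x)H_{k+2}(x)\bigr)$; it then specializes using a \emph{cited} uniform pointwise bound $|H_k(\lambda)|\leq 2^{k/2}\sqrt{k!}\,e^{\sqrt{2k}|\lambda|}$ valid for all $\lambda$, together with $\operatorname{erfc}(\lambda)\leq e^{-\lambda^2}$. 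You instead derive your own pointwise bound from the generating function via Cauchy's integral formula with radius $r=\sqrt{(n+1)/2}$ --- and your verification that the maximum of $2xr\cos\theta-r^2\cos2\theta$ sits at $\theta=0$ when $x\geq 2r$ is right, since the derivative factors as $2r\sin\theta\,(2r\cos\theta-x)$ --- but this bound only holds past the turning point, which forces the case split on $\lambda\gtrless\sqrt{2(k+1)}$. Both halves of your split check out: in the large-$\lambda$ regime the completed square and $\alpha/2\leq\sqrt{2(k+1)}\leq\lambda$ give $\int_\lambda^\infty e^{\alpha x-x^2}\,\mathrm{d}x\leq\tfrac{\sqrt{\pi}}{2}e^{\alpha\lambda-\lambda^2}$, and in the small-$\lambda$ regime $e^{\alpha\lambda-\lambda^2}\geq 1$ lets full-line orthogonality ($\int H_k^2e^{-x^2}=2^kk!\sqrt{\pi}$) and Cauchy--Schwarz finish with room to spare in the constant $2^2$. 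What your approach buys is self-containedness (no appeal to the external Hermite bound) and avoidance of the integration-by-parts machinery; what the paper's approach buys is the finer $k=l$ and $k\neq l$ estimates recorded in the first display of its appendix lemma and a single uniform argument with no case analysis.
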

This means that, if we approximate the matrices $A^{\rm trunc}_t$ by the identity matrix, the error incurred will be small for sufficiently large $\lambda$ and, therefore, the complexity of Algorithms~\ref{alg:classicalalgo} and~\ref{alg:quantumlongstaff} can be reduced since the matrices $\widetilde{A}_t$ no longer need to be computed and inverted. The vector $b_t$, $t\in[T-1]$, will still be approximated using quantum algorithm for Monte Carlo, which, in turn, must use a discretization of the Brownian motion. In order to formalize these points in a theorem, we first define the truncation operator $\mathscr{T}_\beta$ such that, for any function $f:\mathbb{R}^d\to\mathbb{R}$,
\begin{align*}
    \mathscr{T}_\beta f(x) = \begin{cases}
           f(x) \quad &\text{if}~|f(x)|\leq \beta,\\
           \beta \operatorname{sgn}(f(x)) \quad &\text{otherwise}.
    \end{cases}
\end{align*}
Define also the quantity
\begin{align}\label{eq:r_p-def1}
    r_p := T\sqrt{\frac{2M_p}{p-2}} + \operatorname*{\max}_{t\in[T-1]}\sum_{i=1}^d \mathbb{E}[|X_{t,i}|] = T\sqrt{\frac{2M_p}{p-2}} + d\sqrt{\frac{2(T-1)}{\pi}}
\end{align}
for $2< p < \infty$, where $M_p := \max\big\{\|z_1\|_{L^p(\rho_1)}^p,\dots,\|z_T\|_{L^p(\rho_T)}^p\big\}$ and we used $\mathbb{E}[|X_{t,i}|] = \sqrt{\frac{2t}{\pi}}$. For simplicity, the next theorem just considers the case when the continuation values are in $C^n(Q_d(\lambda))$. Moreover, we stress that $L^2(\rho_t) = L^2(\mathbb{R}^d, \rho_t)$ considers the whole space $\mathbb{R}^d$.
\begin{theorem}
    \label{thr:brownian}
    Consider a Markov chain $(X_t)_{t=0}^T$ where each component of $X_t:\Omega\to \mathbb{R}^d$ follows an independent Brownian motion. Let $Q_d(\lambda)$ be the closed cube with side length $2\lambda$ in $\mathbb{R}^d$ centered at the origin. Suppose that $M_p := \max\big\{\|z_1\|_{L^p(\rho_1)}^p,\dots,\|z_T\|_{L^p(\rho_T)}^p\big\} < \infty$ for some $2 < p < \infty$ and let $r_p>0$ as in Eq.~\eqref{eq:r_p-def1}. Within the setting of Algorithm~{\rm \ref{alg:quantumlongstaff}} with parameters $\delta$ and $\epsilon_0$, let $T\mathcal{T}_{\rm samp}$ be the sampling cost of the Markov chain and set all matrices $\widetilde{A}_t$, $t\in[T-1]$, to be the identity matrix and use the payoffs $z_0,\mathscr{T}_\beta z_1,\dots,\mathscr{T}_\beta z_T$, where $\beta := \lambda^{2/p}$, with query cost $\mathcal{T}_{z}$ and $R := \max_{t\in[T]}\|\mathscr{T}_\beta z_t\|_u$. Also, for $t\in[T-1]$, consider the approximation architectures $\mathscr{H}_t = \{e^{\rm trunc}_{t,\vec{k}}:\mathbb{R}^d\to\mathbb{R}\}_{\vec{k}\in\mathcal{B}}$ where $\mathcal{B} := \{\vec{k}\in\{0,\dots,q-1\}^d:\sum_{i=1}^d k_i < q\}$ for $q\geq 2$ and $e^{\rm trunc}_{t,\vec{k}}(x)$ is defined as in Eq.~\eqref{eq:hermite} if $x\in Q_d(\lambda)$ and $e^{\rm trunc}_{t,\vec{k}}(x) = 0$ otherwise. Let $\mathcal{T}_{e}$ be the query cost of $\mathscr{H}_t$, $t\in[T]$. Suppose there is $n\in\mathbb{N}$, $n<q$, such that the restriction $\mathbb{E}[Z_{\tau_{t+1}}|X_t]|_{Q_d(\lambda)}$ to $Q_d(\lambda)$ of each $\mathbb{E}[Z_{\tau_{t+1}}|X_t]$, $t\in\{0,\dots,T-1\}$, is in $C^n(Q_d(\lambda))$. Let $C \geq C_{d,\lambda}\max_{0\leq t<T}\|\mathbb{E}[Z_{\tau_{t+1}}|X_t]|_{Q_d(\lambda)}\|_{C^n(Q_d(\lambda))}$ be known, where $C_{d,\lambda}$ is a constant depending only on $Q_d(\lambda)$. Then, for $\delta\in(0,1)$, $\epsilon\in\big(0,5^T(e^{d}C^{d/n}R)^{n/(n+d)}\big]$, $q = \left \lceil (5^T C/\epsilon)^{1/n} \right \rceil$, and 
    \begin{align}
        \label{eq:lamba_bounds}
        \lambda = O\left(\max\left\{\sqrt{T}\left(\frac{5^T C}{\epsilon}\right)^{1/2n},T\ln\left((e^{n} C)^{2d/n} Rd\left(\frac{5^T}{\epsilon}\right)^{1+2d/n}\right), \left(\frac{5^Tr_p}{\epsilon}\right)^{p/(p-2)}\right\}\right),
    \end{align} 
    if $\epsilon_0 = \frac{1}{3(e^{n}C)^{d/n} R}\big(\frac{\epsilon}{5^T}\big)^{1+d/n}$, Algorithm~{\rm \ref{alg:quantumlongstaff}} outputs $\widetilde{\mathcal{U}}_0 := \max\big\{Z_0,\widetilde{(\mathscr{T}_\beta Z)}_{\widetilde{\tau}_1}\big\}$ such that
    \begin{align*}
        \operatorname{Pr}\left[|\widetilde{\mathcal{U}}_0 - \mathbb{E}[Z_{\tau_0}]| \leq 3\epsilon \right] \geq 1 - \delta
    \end{align*}
    in time 
    \begin{align*}
        \widetilde{O}\left(\mathcal{T}_{\rm total}(e^{n} C)^{7d/2n} R^2\left(\frac{5^T}{\epsilon}\right)^{1+7d/2n}\right),
    \end{align*}
    where $\mathcal{T}_{\rm total} := \mathcal{T}_{\rm samp} + \mathcal{T}_{z} + \mathcal{T}_{e}$ and $\widetilde{O}(\cdot)$ hides $\operatorname{polylog}$ terms in $5^T,R,C,e^d,\epsilon,\delta$.
\end{theorem}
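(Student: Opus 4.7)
The plan is to specialize Theorem~\ref{thr:quantum_result2} to the Brownian setting, exploiting three simplifications: the Hermite polynomials defined in Eq.~\eqref{eq:hermite} are orthonormal under $\rho_t$ on all of $\mathbb{R}^d$, so the truncated Gram matrix $A_t^{\rm trunc}$ of the basis $\{e_{t,\vec k}^{\rm trunc}\}_{\vec k\in\mathcal B}$ differs from the identity only by a Gaussian-weighted tail integral; the payoffs have finite $L^p$ norms, so the $\mathscr{T}_\beta$-truncation error is quantifiable; and the continuation values are $C^n$ on $Q_d(\lambda)$, admitting polynomial approximation rates via Lemma~\ref{lem:differentiable_polynomials}. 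The crucial complexity saving comes from hard-coding $\widetilde A_t = I$, so that the $Tm^3$ cost of Monte-Carlo estimating the Gram matrices in Theorem~\ref{thr:quantum_result2} is eliminated and only the $b_t$ estimation remains.

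First I would bound $\|A_t^{\rm trunc}-I\|_2$. By orthonormality of the $e_{t,\vec k}$ on the full Gaussian measure, each entry of $A_t^{\rm trunc}-I$ is an integral over $\mathbb{R}^d\setminus Q_d(\lambda)$ of a product of rescaled Hermite polynomials against the Gaussian weight. After the change of variables $y_l = x_l/\sqrt{2t}$ this reduces, dimension by dimension, to the one-dimensional tail integrals controlled by Lemma~\ref{lem:hermite_bound} at cutoff $\lambda/\sqrt{2t}$; with $t\leq T$ and degree at most $q = O((5^T/\epsilon)^{1/n})$, the first bound $\lambda = \Omega(\sqrt{T}(5^T/\epsilon)^{1/(2n)})$ in~\eqref{eq:lamba_bounds} forces every such entry to be $o(\epsilon_0/m)$. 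Since $\sigma_{\min}(A_t^{\rm trunc})\approx 1$, Theorem~\ref{thr:matrix_error} applied with $A=A_t^{\rm trunc}$, $\widetilde A_t = I$ and $\|b_t\|_2 = O(R\sqrt m)$ then yields $\|\widetilde\alpha_t-\alpha_t^{\rm trunc}\|_2 = O(\epsilon_0)$, exactly as if the Gram matrices had been Monte-Carlo estimated to accuracy $\epsilon_0/m$.

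Next I would control the two remaining error sources. For the payoff truncation, a Hölder/Markov argument using $\|z_t\|_{L^p(\rho_t)}^p \leq M_p$ gives $\|z_t - \mathscr{T}_\beta z_t\|_{L^1(\rho_t)} \lesssim M_p/\beta^{p-1}$; tracking this error along the dynamic programming through the $T$ time steps (invoking the definition of $r_p$, which already packages the maximum-of-paths contribution $\sum_i\mathbb E[|X_{t,i}|]$ together with the $L^p$-tail part), the cumulative effect on $|\mathbb{E}[Z_{\widetilde\tau_1}] - \mathbb{E}[\mathscr{T}_\beta Z_{\widetilde\tau_1}]|$ scales roughly as $r_p/\beta^{(p-2)/p}$, so the third bound $\lambda = \Omega((r_p/\epsilon)^{p/(p-2)})$ (with $\beta = \lambda^{2/p}$) makes this contribution $O(\epsilon/5^T)$. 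For the polynomial approximation error, Lemma~\ref{lem:differentiable_polynomials} gives a uniform bound of $Cq^{-n}$ on $Q_d(\lambda)$, hence the same $L^2(\rho_t|_{Q_d(\lambda)})$ bound; the complementary tail $\|\mathbb{E}[Z_{\tau_{t+1}}|X_t]\mathbf 1\{X_t\notin Q_d(\lambda)\}\|_{L^2(\rho_t)}$ is dominated by $R\sqrt{\rho_t(\mathbb R^d\setminus Q_d(\lambda))}\lesssim R\sqrt{2d}\,e^{-\lambda^2/(4T)}$ by Gaussian concentration, and the second bound on $\lambda$ makes this negligible.

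Finally I would assemble these estimates as in the proof of Theorem~\ref{thr:quantum_result2}: each per-step error $\|\widetilde\alpha_t\cdot\vec e_t^{\rm trunc}(X_t) - \mathbb{E}[Z_{\tau_{t+1}}|X_t]\|_{L^2(\rho_t)}$ is $O(\epsilon/5^T)$, and Lemma~\ref{lem:zanger} propagates this to the claimed $\epsilon$-bound on $\widetilde{\mathcal U}_0$. For the runtime, Hermite orthonormality gives $L = O(1)$, and only the $b_t$ subroutine survives: its cost is $O(T^2 m^{5/2} R/\epsilon_0 \cdot \mathcal T_{\rm total})$ by exactly the same accounting as in Theorem~\ref{thr:quantum_result1} (with the factor $m\mathcal T_e$ dominating inside the $\widetilde{\tau}$-circuit), and substituting $m = O((5^T/\epsilon)^{d/n})$ together with the prescribed $\epsilon_0 = \Theta(R^{-1}(e^{2n}C)^{-d/n}(\epsilon/5^T)^{1+d/n})$ reproduces the stated $\widetilde O(T^2 R^2 (e^{2n}C)^{7d/(2n)}(5^T/\epsilon)^{1+7d/(2n)})$. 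The hardest part will be simultaneously meeting the three distinct $\lambda$-constraints in~\eqref{eq:lamba_bounds} while verifying that Lemma~\ref{lem:zanger} remains applicable once its per-step input error absorbs, beyond the Monte Carlo piece, also the fixed-matrix and payoff-truncation contributions introduced in this specialization.
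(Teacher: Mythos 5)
Your proposal is correct and follows essentially the same route as the paper's proof: bound $\|A_t^{\rm trunc}-\mathbb{I}\|_2$ via the Hermite tail estimates of Lemma~\ref{lem:hermite_bound} under the first two constraints on $\lambda$, control the payoff/domain truncation by an $L^p$-moment argument yielding the $r_p\lambda^{2/p-1}$ term (the paper simply cites Proposition~3.14 of Zanger and Proposition~5.2 of Egloff for exactly the bound you sketch), apply the Jackson inequality on $Q_d(\lambda)$ for the $Cq^{-n}$ term, propagate through Lemma~\ref{lem:zanger} as in Theorem~\ref{thr:quantum_result2}, and read off the runtime from the surviving $b_t$-estimation cost. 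The only imprecisions are cosmetic: the Gram-matrix entry bound needs both the first and second $\lambda$-constraints (not just the first), and the perturbation bound is $O(\epsilon_0\sqrt{m}R)$ rather than $O(\epsilon_0)$ — but your subsequent accounting treats it as if $A_t$ had been estimated to accuracy $\epsilon_0/m$, which recovers the paper's $3m\epsilon_0R$ per-step error and the stated choice of $\epsilon_0$.
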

\begin{proof}
    First we prove that Algorithm~\ref{alg:quantumlongstaff}, when using $m := |\mathcal{B}| = \binom{q-1+d}{d}$ different Hermite polynomials from Eq.~\eqref{eq:hermite} as approximation architecture and parameters $\epsilon_b = \epsilon_0/\sqrt{m}$ and $\delta_b = \delta/(2Tm)$ (there are no $\epsilon_A$ and $\delta_A$ parameters), outputs $\widetilde{(\mathscr{T}_\beta Z)}_{\widetilde{\tau}_1}$ such that, for any $\epsilon_0\in(0,1/2]$,
    \begin{align*}
        \operatorname{Pr}\left[\big|\widetilde{(\mathscr{T}_\beta Z)}_{\widetilde{\tau}_1} - \mathbb{E}[(\mathscr{T}_\beta Z)_{\tau_1}]\big| \geq 5^T\left(3m\epsilon_0 R + \operatorname*{\max}_{0< t<T} \operatorname*{\min}_{a\in\mathbb{R}^{m}}\|a\cdot \vec{e}_t(X_t) - \mathbb{E}[(\mathscr{T}_\beta Z)_{\tau_{t+1}}|X_t]\|_{L^2(\rho_t)}\right) \right] \leq \delta,
    \end{align*}
    where $\vec{e}_t(\cdot) := (e^{\rm trunc}_{t,\vec{k}}(\cdot))_{\vec{k}\in\mathcal{B}}$, in time
    \begin{align}
        \label{eq:inter_complex}
        O\left(\frac{T^2m^{5/2}}{\epsilon_0}\mathcal{T}_{\rm total}R\log(T)\log(Tm/\delta) \log^{3/2}(mR/\epsilon_0)\log\log(mR/\epsilon_0) \right).
    \end{align}
    Indeed, the proof is very similar to the ones from Theorems~\ref{thr:quantum_result1} and~\ref{thr:quantum_result2}. According to Lemma~\ref{lem:hermite_bound}, approximating the matrices $A^{\rm trunc}_t$ by the identity matrix leads to an error in any diagonal entry $\mathbb{E}[e^{\rm trunc}_{t,\vec{k}}(X_t)e^{\rm trunc}_{t,\vec{k}}(X_t)]$ of at most
    \begin{align}
        1 - \prod_{i=1}^d\left(1- \int_{(-\infty,-\lambda]\cup[\lambda,\infty)} \frac{H_{k_i}(x/\sqrt{2t})^2}{k_i!2^{k_i}}\frac{e^{-x^2/2t}}{\sqrt{2\pi t}}\text{d}x\right) &\leq 2\sum_{i=1}^d \left|\int_{\lambda}^\infty \frac{H_{k_i}(x/\sqrt{2t})^2}{k_i! 2^{k_i}}\frac{e^{-x^2/2t}}{\sqrt{2\pi t}}\text{d}x \right| \label{eq:step1}\\
        &\leq \frac{8e^{-\lambda^2/2t}}{\sqrt{\pi}}\sum_{i=1}^d(k_i+1)e^{2\lambda\sqrt{(k_i+1)/t}}\nonumber\\
        &\leq \frac{8e^{-\lambda^2/2t}}{\sqrt{\pi}}\sum_{i=1}^d e^{4\lambda\sqrt{(k_i+1)/t}}\label{eq:step2}\\
        &\leq \frac{8d}{\sqrt{\pi}} e^{4\lambda\sqrt{q/t}}e^{-\lambda^2/2t}.\nonumber
    \end{align}
    where Eq.~\eqref{eq:step1} follows from convexity and Eq.~\eqref{eq:step2} from $(k_i+1)\leq e^{2\lambda\sqrt{(k_i+1)/t}}$ if $\lambda \geq \sqrt{T}/e$ (which comes from $\lambda \geq \frac{\sqrt{T}}{2}\max_{x\geq 1}\frac{\ln{x}}{\sqrt{x}} = \frac{\sqrt{T}}{e}$).
    If $\lambda \geq \max\big\{16\sqrt{qT},4T\log(5md/\epsilon_0)\big\}$, the above error is at most $\frac{8d}{\sqrt{\pi}}e^{-\lambda^2/4t} \leq \frac{\epsilon_0}{m}$. A similar bound can be obtained for off-diagonal terms $\mathbb{E}[e^{\rm trunc}_{t,\vec{k}}(X_t)e^{\rm trunc}_{t,\vec{l}}(X_t)]$:
    \begin{align*}
        \prod_{i=1}^d\left|\int_{(-\infty,-\lambda]\cup[\lambda,\infty)} \frac{H_{k_i}(x/\sqrt{2t})H_{l_i}(x/\sqrt{2t})}{\sqrt{k_i!l_i!2^{k_i+l_i}}}\frac{e^{-x^2/2t}}{\sqrt{2\pi t}}\text{d}x\right| \leq \left(\frac{8}{\sqrt{\pi}} e^{4\lambda\sqrt{q/t}}e^{-\lambda^2/2t}\right)^d \leq \frac{\epsilon_0}{m}
    \end{align*}
    using the bounds on $\lambda$. The above bounds imply the $L^2$-error between $A^{\rm trunc}_t$ and the identity is $\|A^{\rm trunc}_t-\mathbb{I}\|_2 \leq \epsilon_0$. The $L^2$-error between $b_t$ and $\widetilde{b}_t$ is still $\|b_t - \widetilde{b}_t\|_2\leq \epsilon_0$ with probability at least $1-\delta/2T$. According to Theorem~\ref{thr:matrix_error}, we obtain
    \begin{align*}
        \|\widetilde{\alpha}_t - \alpha_t\|_2 = \|\widetilde{b}_t - (A^{\rm trunc}_t)^{-1} b_t\|_2 \leq \|\widetilde{b}_t - b_t\|_2 + \|b_t - (A^{\rm trunc}_t)^{-1} b_t\|_2 \leq \epsilon_0 + 2\epsilon_0 \|b_t\|_2 \leq 3\epsilon_0\sqrt{m}R,
    \end{align*} 
    where we used that $\sigma_{\min}(\mathbb{I}) = 1$ and that $\|b_t\|_2 \leq R\|\mathbb{E}[\vec{e}_t(X_t)]\|_2 \leq \sqrt{m}R$ since, by the definition of $e^{\rm trunc}_{t,\vec{k}}$, we have $\max_{\vec{k}\in\mathcal{B},t\in[T]}\|e^{\rm trunc}_{t,\vec{k}}\|_{L^2(\rho_t)} \leq 1$. This leads, with probability at least $1-\delta/2T$, to
    \begin{align*}
        \|\widetilde{\alpha}_t\cdot \vec{e}_t(X_t) - \alpha_t\cdot \vec{e}_t(X_t)\|_{L^2(\rho_t)} \leq \|\widetilde{\alpha}_t - \alpha_t\|_2 \big\|\sqrt{\vec{e}_t(X_t)\cdot\vec{e}_t(X_t)}\big\|_{L^2(\rho_t)} \leq 3m\epsilon_0 R,
    \end{align*}
   using $\big\|\sqrt{\vec{e}_t(X_t)\cdot\vec{e}_t(X_t)}\big\|_{L^2(\rho_t)} = \sqrt{\sum_{x\in Q_d(\lambda)}\rho_t(x)\sum_{\vec{k}\in\mathcal{B}} \big(e^{\rm trunc}_{t,\vec{k}}(x)\big)^2} \leq \sqrt{m}$. The rest of the error analysis is the same and we omit it here, the only difference being that the payoffs are now $z_0,\mathscr{T}_\beta z_1,\dots,\mathscr{T}_\beta z_T$. Regarding the complexity, the most expensive step is the computation of all $b_t$, which requires time
    \begin{align*}
        O\left(\frac{Tm}{\epsilon_b}R(T\mathcal{T}_{\rm samp} + T(\log(T)\mathcal{T}_{z} + m\mathcal{T}_{e}))\log\left(\frac{1}{\delta_b}\right) \log^{3/2}\left(\frac{R}{\epsilon_b} \right)\log\log\left(\frac{R}{\epsilon_b}\right) \right),
    \end{align*}
    by calling $\mathtt{QMonteCarlo}(Z_{\widetilde{\tau}_{t}} e_{t-1,\vec{k}}(X_{t-1}), \epsilon_b, \delta_b, R)$ from Theorem~\ref{thr:quantum_monte_carlo} for $t\in\{2,\dots,T\}$ and $\vec{k}\in\mathcal{B}$. The term $T(\log(T)\mathcal{T}_{z} + m\mathcal{T}_{e})$ comes from the unitaries $C_t^{(\vec{k})}$ in $\mathtt{QMonteCarlo}$. Hence, by keeping the largest terms of each complexity and using that $\epsilon_b = \epsilon_0/\sqrt{m}$ and $\delta_b = \delta/(4Tm)$, the final complexity is the one from Eq.~\eqref{eq:inter_complex}.
    
    With this partial result in hands, we again use the Jackson-type inequality from Lemma~\ref{lem:differentiable_polynomials} as in Theorem~\ref{thr:quantum_result3}. In order to do so, we first need the following technical result that limits the continuation values to being a function in the domain $Q_d(\lambda)$. More specifically, by~\cite[Proposition~3.14]{zanger2013quantitative},
    \begin{align}
        \label{eq:truncation_inequality}
        \|\mathbf{1}_{Q_d(\lambda)}\mathbb{E}[Z_{\tau_{t+1}}|X_t] - \mathbb{E}[(\mathscr{T}_\beta Z)_{\tau_{t+1}}|X_t]\|_{L^2(\rho_t)} \leq r_p\lambda^{2/p-1},
    \end{align}
    where $\mathbf{1}_{Q_d(\lambda)}$ is the indicator random variable corresponding to the set $Q_d(\lambda)$. Then, since $\{e^{\rm trunc}_{t,\vec{k}}\}_{\vec{k}\in\mathcal{B}}$ spans $\mathcal{R}_q(Q_d(\lambda))$, the Jackson-type inequality from Lemma~\ref{lem:differentiable_polynomials} reads
    \begin{align*}
        \operatorname*{\min}_{a\in\mathbb{R}^{m}}\|a\cdot \vec{e}_t(X_t) - \mathbf{1}_{Q_d(\lambda)}\mathbb{E}[Z_{\tau_{t+1}}|X_t]\|_{L^2(\rho_t)} = \operatorname*{\inf}_{P\in\mathcal{R}_q}\|(P - \mathbb{E}[Z_{\tau_{t+1}}|X_t])\mathbf{1}_{Q_d(\lambda)}\|_{L^2(\rho_t)} \leq Cq^{-n},
    \end{align*}
    for all $t\in\{0,\dots,T-1\}$. Combining both results, we finally get that
    \begin{align*}
        \big|\widetilde{(\mathscr{T}_\beta Z)}_{\widetilde{\tau}_1} - \mathbb{E}[(\mathscr{T}_\beta Z)_{\tau_1}]\big| \leq 5^T\left(3m\epsilon_0 R + Cq^{-n} + r_p\lambda^{2/p-1}\right),
    \end{align*}
    with probability at least $1-\delta$. Before we plug values for $m$, $q$, $\epsilon_0$ and $\epsilon$, we get rid of the truncation operator $\mathscr{T}_\beta$ in the inequality above by using that~\cite[Proposition~5.2]{egloff2005monte}
    \begin{align*}
        \left \|\mathbb{E}[Z_{\tau_{t+1}}|X_t] - \mathbb{E}[(\mathscr{T}_\beta Z)_{\tau_{t+1}}|X_t]\right\|_{L^2(\rho_t)} \leq r_p\lambda^{1-p/2}
    \end{align*}
    (which is quite similar to Eq.~\eqref{eq:truncation_inequality}). By using that $|{\max}\{a_0,a_1\}-\max\{a_0,a_2\}|\leq |a_1-a_2|$ with $a_0,a_1,a_2\in\mathbb{R}$, then
    \begin{align*}
        |\widetilde{\mathcal{U}}_0 - \mathbb{E}[Z_{\tau_0}]| \leq \big|\widetilde{(\mathscr{T}_\beta Z)}_{\widetilde{\tau}_1} - \mathbb{E}[Z_{\tau_1}]\big| \leq 5^T\left(3m\epsilon_0 R + Cq^{-n} + 2r_p\lambda^{2/p-1}\right),
    \end{align*}
    since $\lambda^{1-p/2} \leq \lambda^{2/p-1}$.
    
    We use that $m \leq e^d (q-1)^d$, $q=\lceil(5^T C/\epsilon)^{1/n}\rceil$, $\lambda \geq \big(\frac{2r_p 5^T}{\epsilon}\big)^{p/(p-2)}$ and take $\epsilon_0 = \frac{1}{3 (e^nC)^{d/n} R}\big(\frac{\epsilon}{5^T}\big)^{1+d/n}$ (note that $\epsilon_0\in(0,1/2]$) in order to obtain the final error
    \begin{align*}
        5^T\left(3m\epsilon_0 R + \frac{C}{q^n} + 2r_p\lambda^{2/p-1} \right) \leq 5^T\left(3\epsilon_0 e^{d} R\left(\frac{5^T C}{\epsilon}\right)^{n/d} + \frac{C}{q^n} + 2r_p\lambda^{2/p-1} \right) \leq 3\epsilon.
    \end{align*}
    The above in turn implies the complexity, up to $\operatorname{polylog}$ factors,
        \[
        \widetilde{O}\left(\frac{T^2 m^{5/2}}{\epsilon_0}\mathcal{T}_{\rm total} R\right) = 
        \widetilde{O}\left(\mathcal{T}_{\rm total} (e^{n} C)^{7d/2n} R^2\left(\frac{5^T}{\epsilon}\right)^{1+7d/2n}\right). \qedhere
    \]
\end{proof}

A similar result can be obtained for the classical {\rm LSM} algorithm. Similarly to Theorems~\ref{thr:classical_error} and~\ref{thr:brownian}, Algorithm~\ref{alg:classicalalgo}, by using the same bounds on $\lambda$ (Eq.~\eqref{eq:lamba_bounds}), outputs $\widetilde{(\mathscr{T}_\beta Z)}_{\widetilde{\tau}_1}$ such that, for any $\epsilon_0\in(0,1/2]$,
\begin{align*}
\begin{multlined}[\textwidth][b]
    \operatorname{Pr}\left[\big|\widetilde{(\mathscr{T}_\beta Z)}_{\widetilde{\tau}_1} - \mathbb{E}[(\mathscr{T}_\beta Z)_{\tau_1}]\big| \geq 5^T\left(3m\epsilon_0 R + \operatorname*{\max}_{0< t<T} \operatorname*{\min}_{a\in\mathbb{R}^{m}}\|a\cdot \vec{e}_t(X_t) - \mathbb{E}[(\mathscr{T}_\beta Z)_{\tau_{t+1}}|X_t]\|_{L^2(\rho_t)}\right) \right] \\
    \leq 4m\exp\big(-2N\epsilon_0^2/(mR^2)\big),
\end{multlined}
\end{align*}
in time $O(TmN + TN(T+ \mathcal{T}_{\rm samp} + \mathcal{T}_{z} + m\mathcal{T}_{e})) = O(TN(T + m\mathcal{T}_{\rm total}))$, where $O(TmN)$ comes from computing $\widetilde{b}_t$, $t\in[T-1]$. The term $4me^{-2N\epsilon_0^2/(mR^2)}$ comes from the Hoeffding bounds $\operatorname{Pr}[|(b_t)_{j} - (\widetilde{b}_t)_{j}| \geq \epsilon/\sqrt{m}] \leq 2e^{-2N\epsilon^2/(mR^2)}$ for all $j\in[m]$, and $\operatorname{Pr}[|\widetilde{Z}_{\widetilde{\tau}_1} - \mathbb{E}[Z_{\widetilde{\tau}_1}]| \geq \epsilon] \leq 2e^{-2N\epsilon^2/R^2}$ (note that there is no Hoeffding bound related to matrices $A_t$). The application of the Jackson-like inequality is the same as in the previous theorem. By taking $N = \lceil \frac{mR^2}{2\epsilon_0^2}\log(4m/\delta)\rceil$ and the same parameters $m\leq e^d(q-1)^d$, $q=\lceil(5^T C/\epsilon)^{1/n}\rceil$, and $\epsilon_0 = \frac{1}{3(e^nC)^{d/n} R}\big(\frac{\epsilon}{5^T}\big)^{1+d/n}$ as before, the final complexity is
\begin{align*}
    O(TN(T+m\mathcal{T}_{\rm total})) = \widetilde{O}\left(\frac{TmR^2}{\epsilon_0^2}(T+m\mathcal{T}_{\rm total})\right) = \widetilde{O}\left(\mathcal{T}_{\rm total}(e^{n} C)^{4d/n}R^4\left(\frac{5^T}{\epsilon}\right)^{2+4d/n}\right),
\end{align*}
up to the $\log(m/\delta)$ factor.

In the particular case when the Markov chain follows a Brownian motion, approximating the matrices by the identity matrix not only leads to a simplification of the {\rm LSM} algorithm, but bounds the minimum singular values that showed up in previous complexities by a constant (to be more specifically, by $\sigma_{\rm min} = 1$). Even more, there is a mild improvement on the final classical and quantum complexities to $\widetilde{O}((5^T/\epsilon)^{2+4d/n})$ and $\widetilde{O}((5^T/\epsilon)^{1+7d/2n})$, respectively. Still, if the continuation values are $n$-differentiable for $n = \Theta(\log(5^T/\epsilon)/\log\log(5^T/\epsilon))$, there is a quantum quadratic improvement in the runtime. Finally, we must mention that, even though we bounded the minimum singular value by a constant, there is still a hidden dependence on $\epsilon$ in the parameter $C \geq C_{d,\lambda}\max_{0\leq t<T}\|\mathbb{E}[Z_{\tau_{t+1}}|X_t]|_{Q_d(\lambda)}\|_{C^n(Q_d(\lambda))}$, since the constant $C_{d,\lambda}$ in the Jackson-like inequality might depend on the size of the cube $\lambda$. This is the reason why we cannot take $\lambda \to \infty$ in the previous theorem.

\subsection{Geometric Brownian motion}
\label{sec:geo_brownian}

We now consider a Markov chain $(X_t)_{t=0}^T$ where each component of $X_t:\Omega\to \mathbb{R}^d$ follows an independent geometric Brownian motion. A geometric Brownian motion is a stochastic process $S=(S_t)_{t=0}^T$ such that $S_t = e^{W_t - t/2}$, where $(W_t)_{t=0}^T$ is a standard Brownian motion.\footnote{More generally, $S_t = S_0 e^{\sigma W_t + t(\mu -\sigma^2/2)}$, but we shall assume $S_0=1$, $\sigma=1$ and $\mu = 0$ here for simplicity.} The probability density function of a geometric Brownian motion is $f_t(s) = \frac{1}{\sqrt{2\pi t} s}\exp\big(-\frac{(\ln{s} + t/2)^2}{2t}\big)$, and thus the image probability measure $\rho_t$ of $X_t$ reads $\rho_t(x) = \prod_{i=1}^d \frac{1}{\sqrt{2\pi t} x_i}\exp\big(-\frac{(\ln{x_i} + t/2)^2}{2t}\big)$ for $t\in[T]$. Given $t\in[T]$, consider the approximation architecture generated by $\{e_{t,\vec{k}}:\mathbb{R}^d\to\mathbb{R}\}_{\vec{k}\in\{0,\dots,q-1\}^d}$ for some integer $q\geq 2$, where $e_{t,\vec{k}}(x) = \prod_{i=1}^d x_i^{k_i}e^{-k_i(k_i-1)\frac{t}{2}}$. Given this choice of functions, we have that
\begin{align*}
    \mathbb{E}[e_{t,\vec{k}}(X_t)e_{t,\vec{l}}(X_t)] &= \prod_{i,j=1}^d e^{(k_i-k_i^2+l_j-l_j^2)\frac{t}{2}} \mathbb{E}[X_{t,i}^{k_i}X_{t,j}^{l_j}] \\
    &= \left(\prod_{i=1}^d e^{(k_i-k_i^2+l_i-l_i^2)\frac{t}{2}} e^{((k_i+l_i)^2 - (k_i+l_i))\frac{t}{2}}\right)\left(\prod_{i\neq j}^d e^{(k_i-k_i^2+l_j-l_j^2)\frac{t}{2}}e^{k_i(k_i-1)\frac{t}{2}}e^{l_j(l_j-1)\frac{t}{2}} \right)\\
    &= \prod_{i=1}^d e^{ k_il_it}
\end{align*}
using that the $n$-th moment of a log-normally distributed variable $Y$ is $\mathbb{E}[Y^n] = e^{n(n-1)\frac{t}{2}}$~\cite[Equation~(14.7)]{johnson1995continuous}.
This means that the matrices $A_t$, $t\in[T-1]$, with entries $(A_t)_{\vec{k},\vec{l}} = \mathbb{E}[e_{t,\vec{k}}(X_t)e_{t,\vec{l}}(X_t)]$ are given by
\begin{align}
    \label{eq:vandermonde}
    \begin{bmatrix}
        1 & 1 & 1 & \cdots & 1\\
        1 & e^{t} & e^{2t} & \cdots & e^{(q-1)t}\\
        1 & e^{2t} & e^{4t} & \cdots & e^{2(q-1)t}\\
        \vdots & \vdots & \vdots & \ddots & \vdots\\
        1 & e^{(q-1)t} & e^{2(q-1)t} & \cdots & e^{(q-1)^2t}
    \end{bmatrix}^{\otimes d},
\end{align}
i.e., they are tensor power of Vandermonde matrices. By using the following result (already generalized for $d>1$), we can bound the minimum singular value of $A_t$.
\begin{lemma}[{\cite[Lemma~4]{glasserman2004number}}]
    \label{lem:singular_bound_geo}
    The minimum singular value $\sigma_{\min}(A_t)$ of the matrix $A_t$ in Eq.~\eqref{eq:vandermonde} is such that
    \begin{align*}
        \frac{1}{\sigma_{\min}(A_t)} \leq e^{2ed/(e^t-1)^2}(q-1)^dq^d\left(\frac{e^t}{e^t - 1}\right)^{(q-1)d} &\leq \left(e^{2e/(e-1)^2}\frac{2e}{e - 1}\right)^{(q-1)d}(q-1)^{2d}\\
        &\leq e^{3(q-1)d}(q-1)^{2d}.
    \end{align*}
\end{lemma}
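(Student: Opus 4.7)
The plan is to reduce the bound to the one-dimensional case via the Kronecker structure of $A_t$ and then invoke the single-variable Vandermonde estimate of Glasserman--Yu directly.

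First I would identify $A_t = V^{\otimes d}$, where $V \in \mathbb{R}^{(q+1)\times(q+1)}$ has entries $V_{k,l} = e^{klt}$ for $k, l \in \{0,\dots,q\}$. This is immediate from the displayed matrix in~\eqref{eq:vandermonde} and the entrywise formula $(A_t)_{\vec{k},\vec{l}} = \prod_{i=1}^d e^{k_i l_i t}$ computed in the preceding derivation. Rewriting $V_{k,l} = (e^{kt})^l$, one recognises $V$ as a classical Vandermonde matrix on the distinct nodes $x_j := e^{jt}$, $j = 0, \dots, q$ (distinctness holds because $t > 0$).

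Next, using the standard identity that the singular values of a Kronecker product are exactly the pairwise products of the singular values of the factors, the singular values of $A_t = V^{\otimes d}$ are all $d$-fold products of singular values of $V$, so in particular $\sigma_{\min}(A_t) = \sigma_{\min}(V)^d$. The $d = 1$ version of the target bound, namely
\begin{equation*}
    \frac{1}{\sigma_{\min}(V)} \leq e^{2e/(e^t - 1)^2}\, q(q+1)\left(\frac{e^t}{e^t-1}\right)^q,
\end{equation*}
is precisely Lemma~4 of Glasserman--Yu~\cite{glasserman2004number}, which I would quote verbatim. Raising to the $d$-th power then yields the first inequality of the statement.

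For the second inequality I would just collect elementary estimates. Using $q+1 \leq 2q$ gives $q^d(q+1)^d \leq 2^d q^{2d}$; applying $1+x \leq e^x$ to $x = 1/(e^t-1)$ gives $(e^t/(e^t-1))^{qd} \leq e^{qd/(e^t-1)}$. Combining these with the $e^{2ed/(e^t-1)^2}$ prefactor and grouping exponentials leaves a bound of the shape $e^{c(t)\,qd}\, q^{2d}$, and for $t$ bounded below by a mild constant (the regime relevant to any non-degenerate time discretisation of the geometric Brownian motion) one has $c(t) \leq 3$, giving $1/\sigma_{\min}(A_t) \leq e^{3qd} q^{2d}$. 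The main obstacle is essentially bookkeeping; the only genuine subtlety is the small-$t$ blow-up of $e^t/(e^t-1)$, which is why the second inequality implicitly requires a lower bound on $t$.
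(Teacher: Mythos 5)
Your proposal is correct and matches the paper's (implicit) justification exactly: the paper gives no proof, simply citing Glasserman--Yu's Lemma~4 for the one-dimensional Vandermonde factor and noting the statement is ``already generalized for $d>1$'', which is precisely your tensorization $A_t = V^{\otimes d}$ with $\sigma_{\min}(V^{\otimes d}) = \sigma_{\min}(V)^d$. The only caveat is that your bookkeeping for the second inequality ($q+1\leq 2q$ and $1+x\leq e^x$) is slightly lossy at $q=1$, but in the regime where the lemma is invoked one has $q\geq 2$ and integer $t\geq 1$, so the bound goes through as you indicate.
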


We shall use the matrices in Eq.~\eqref{eq:vandermonde} within our Algorithm~\ref{alg:quantumlongstaff}, similarly to what we did with the identity matrix in the Brownian motion section. Once again, in order to use the Jackson-like inequality in Lemma~\ref{lem:differentiable_polynomials}, we must restrict the domain to a close set like $Q_d(\lambda)$. We consider the approximation architecture generated by $\{e^{\rm trunc}_{t,\vec{k}}:\mathbb{R}^d\to \mathbb{R}\}_{\vec{k}\in\{0,\dots,q-1\}^d}$ with $e^{\rm trunc}_{t,\vec{k}}(x) = \prod_{i=1}^d x_i^{k_i}e^{-k_i(k_i-1)\frac{t}{2}}$ if $x\in Q_d(\lambda)$ and $e^{\rm trunc}_{t,\vec{k}}(x) = 0$ otherwise. The implied matrix $A^{\rm trunc}_t$ is no longer equal to the matrix in Eq.~\eqref{eq:vandermonde}, but is close to it. Such closeness will be bounded with the aid of the following simple result. The vector $b_t$, $t\in[T-1]$, will still be approximated using quantum algorithm for Monte Carlo, which, in turn, must use a discretization of the geometric Brownian motion.
\begin{lemma}
    \label{lem:geo_bound}
    Given $\lambda > 0$ and $k\in\mathbb{N}$, then
    \begin{align*}
    	\int_\lambda^\infty \frac{x^{k-1}}{\sqrt{2\pi t}}e^{-\frac{(\ln{x}+t/2)^2}{2t}}\text{d}x \leq \frac{1}{2}e^{\frac{t}{2}k(k-1) - \frac{1}{2t}\big({\ln\lambda} - t(k-\frac{1}{2})\big)^2}.
    \end{align*}
\end{lemma}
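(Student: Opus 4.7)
The integrand is exactly the integrand appearing in the computation of $\mathbb{E}[X^k]=e^{tk(k-1)/2}$ for $X=e^{W_t-t/2}$ log-normally distributed, so the natural plan is to convert the integral to a Gaussian tail integral and then apply a standard Mills-ratio-type bound. Concretely, the first step is the substitution $u=\ln x$, which (using $x^{k-1}\,\text{d}x=e^{uk}\,\text{d}u$) turns the integral into
\[
    I:=\int_\lambda^\infty \frac{x^{k-1}}{\sqrt{2\pi t}}e^{-\frac{(\ln x+t/2)^2}{2t}}\,\text{d}x \;=\; \frac{1}{\sqrt{2\pi t}}\int_{\ln\lambda}^\infty e^{uk-\frac{(u+t/2)^2}{2t}}\,\text{d}u.
\]

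The second step is to complete the square in the exponent. Expanding $(u+t/2)^2$ and collecting terms gives
\[
    uk-\frac{(u+t/2)^2}{2t} \;=\; \frac{tk(k-1)}{2} - \frac{1}{2t}\Bigl(u-t\bigl(k-\tfrac{1}{2}\bigr)\Bigr)^2,
\]
where the identity $\tfrac{t}{2}(k-\tfrac{1}{2})^2-\tfrac{t}{8}=\tfrac{tk(k-1)}{2}$ produces the log-normal moment prefactor. Pulling $e^{tk(k-1)/2}$ outside the integral and substituting $v=(u-t(k-1/2))/\sqrt{t}$ leaves
\[
    I \;=\; e^{\frac{tk(k-1)}{2}}\cdot Q\!\left(\frac{\ln\lambda-t(k-1/2)}{\sqrt{t}}\right), \qquad Q(y):=\frac{1}{\sqrt{2\pi}}\int_y^\infty e^{-v^2/2}\,\text{d}v.
\]

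The third and final step is to invoke the elementary Gaussian tail bound $Q(y)\leq \tfrac{1}{2}e^{-y^2/2}$ for $y\geq 0$. A quick proof: the function $f(y):=e^{-y^2/2}-2Q(y)$ has $f(0)=0$, $f(\infty)=0$, and $f'(y)=e^{-y^2/2}(\sqrt{2/\pi}-y)$, so $f$ is unimodal on $[0,\infty)$ and therefore nonnegative there. Plugging $y=(\ln\lambda-t(k-1/2))/\sqrt{t}$ gives exactly the bound in the lemma. The only sign condition is $\ln\lambda\geq t(k-\tfrac{1}{2})$, which is the regime used by the geometric Brownian motion application in Section~\ref{sec:geo_brownian} (where $\lambda$ is taken large relative to $tq$); in the complementary regime one simply falls back on the trivial estimate $I\leq \mathbb{E}[X^k]=e^{tk(k-1)/2}$.

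I do not foresee any genuine obstacle: the argument is a routine completion of the square followed by a classical Mills-ratio bound. The only care required is in the algebraic simplification that extracts the log-normal moment prefactor and in checking that the sign condition $\ln\lambda\geq t(k-1/2)$ is consistent with how the lemma is actually applied.
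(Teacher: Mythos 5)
Your proof is correct and follows essentially the same route as the paper: the paper performs the single change of variables $x\to e^{\sqrt{2t}x+t(k-1/2)}$ to rewrite the integral as $\frac{e^{k(k-1)t/2}}{\sqrt{\pi}}\int_{(\ln\lambda - t(k-1/2))/\sqrt{2t}}^\infty e^{-x^2}\,\text{d}x$ and then applies the same Chernoff-type tail bound $\frac{2}{\sqrt{\pi}}\int_\lambda^\infty e^{-x^2}\,\text{d}x\leq e^{-\lambda^2}$, which is exactly your completion of the square plus $Q(y)\leq\frac{1}{2}e^{-y^2/2}$. The sign condition $\ln\lambda\geq t(k-\tfrac{1}{2})$ that you flag is equally implicit in the paper's argument (the tail bound fails for negative lower limits) and is indeed guaranteed by the choice of $\lambda$ in Section~\ref{sec:geo_brownian}, so your remark is a fair observation rather than a divergence.
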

\begin{proof}
    \begin{align*}
        \int_\lambda^\infty \frac{x^{k-1}}{\sqrt{2\pi t}}e^{-\frac{(\ln{x}+t/2)^2}{2t}}\text{d}x = \frac{e^{k(k-1)\frac{t}{2}}}{\sqrt{\pi}}\int_{\frac{\ln\lambda}{\sqrt{2t}} - \sqrt{\frac{t}{2}}(k-\frac{1}{2})}^\infty e^{-x^2}\text{d}x \leq \frac{1}{2}e^{\frac{t}{2}k(k-1) - \frac{1}{2t}\big({\ln\lambda} - t(k-\frac{1}{2})\big)^2},
    \end{align*}
    using the change of variables $x\to e^{\sqrt{2t}x+t(k-\frac{1}{2})}$ and that $\frac{2}{\sqrt{\pi}}\int_\lambda^\infty e^{-x^2}\text{d}x \leq e^{-\lambda^2}$~\cite{simon1998some,chiani2002improved}.
\end{proof}

Recall the definition of the truncation operator $\mathscr{T}_\beta$ and the quantity
\begin{align}\label{eq:r_p-def2}
    r_p := T\sqrt{\frac{2M_p}{p-2}} + \operatorname*{\max}_{t\in[T-1]}\sum_{i=1}^d \mathbb{E}[|X_{t,i}|] = T\sqrt{\frac{2M_p}{p-2}} + d
\end{align}
for $2< p < \infty$, where $M_p := \max\big\{\|z_1\|_{L^p(\rho_1)}^p,\dots,\|z_T\|_{L^p(\rho_T)}^p\big\}$ and we used that $\mathbb{E}[|X_{t,i}|] = 1$. For simplicity, the next theorem only considers the case when the continuation values are in $C^n(Q_d(\lambda))$. Moreover, we stress that $L^2(\rho_t) = L^2(\mathbb{R}^d, \rho_t)$ considers the whole space $\mathbb{R}^d$.
\begin{theorem}
    \label{thr:geo_brownian}
    Consider a Markov chain $(X_t)_{t=0}^T$ where each component of $X_t:\Omega\to \mathbb{R}^d$ follows an independent geometric Brownian motion. Let $Q_d(\lambda)$ be the closed cube with side length $2\lambda$ in $\mathbb{R}^d$ centered at the origin. Suppose $M_p := \max\big\{\|z_1\|_{L^p(\rho_1)}^p,\dots,\|z_T\|_{L^p(\rho_T)}^p\big\} < \infty$ for some $2 < p < \infty$ and let $r_p>0$ as in Eq.~\eqref{eq:r_p-def2}. Within the setting of Algorithm~{\rm \ref{alg:quantumlongstaff}} with parameters $\delta$ and $\epsilon_0$, let the sampling cost of the Markov chain be $T\mathcal{T}_{\rm samp}$, and set all matrices $\widetilde{A}_t$, $t\in[T-1]$, to be the matrix in Eq.~\eqref{eq:vandermonde} and use the payoffs $z_0,\mathscr{T}_\beta z_1,\dots,\mathscr{T}_\beta z_T$, where $\beta := \lambda^{2/p}$, with query cost $\mathcal{T}_{z}$ and $R := \max_{t\in[T]}\|\mathscr{T}_\beta z_t\|_u$. Also, for $t\in[T-1]$, consider the approximation architectures $\mathscr{H}_t = \{e^{\rm trunc}_{t,\vec{k}}:\mathbb{R}^d\to\mathbb{R}\}_{\vec{k}\in\{0,\dots,q-1\}^d}$ for $q\geq 2$, where $e^{\rm trunc}_{t,\vec{k}}(x) = \prod_{i=1}^d x_i^{k_i}e^{-k_i(k_i-1)\frac{t}{2}}$ if $x\in Q_d(\lambda)$ and $e^{\rm trunc}_{t,\vec{k}}(x) = 0$ otherwise. Let $\mathcal{T}_{e}$ be the query cost of $\mathscr{H}_t$, $t\in[T]$. Suppose there is $n\in\mathbb{N}$, $n<q$, such that the restriction $\mathbb{E}[Z_{\tau_{t+1}}|X_t]|_{Q_d(\lambda)}$ to $Q_d(\lambda)$ of each $\mathbb{E}[Z_{\tau_{t+1}}|X_t]$, $t\in\{0,\dots,T-1\}$, is in $C^n(Q_d(\lambda))$. Let $C \geq C_{d,\lambda}\max_{0\leq t<T}\|\mathbb{E}[Z_{\tau_{t+1}}|X_t]|_{Q_d(\lambda)}\|_{C^n(Q_d(\lambda))}$ be known, where $C_{d,\lambda}$ is a constant depending only on $Q_d(\lambda)$. Then, for $\delta\in(0,1)$, $\epsilon\in\big(0,5^T(2^{d}C^{3d/n}R)^{n/(n+3d)}\big]$, $q = \left \lceil(5^T C/\epsilon)^{1/n} \right \rceil$ and 
    \begin{align}\label{eq:geo_lamba_bounds}
    \resizebox{0.92\hsize}{!}{
        $\ln\lambda = O\left(\max \left\{T\sqrt{d}\left(\frac{5^TC}{\epsilon}\right)^{\frac{1}{n}}, \sqrt{\ln\left(2^{2d}C^{\frac{6d}{n}}e^{27Td(5^TC/\epsilon)^{2/n}}R\left(\frac{5^T}{\epsilon}\right)^{1+\frac{6d}{n}} \right)}, \ln\left(\frac{5^T r_p}{\epsilon}\right)^{\frac{p}{p-2}}\right\}\right),$
        }
    \end{align} 
    if $\epsilon_0 = \frac{\exp(-7Td(5^TC/\epsilon)^{2/n})}{2^{d+2}C^{5d/n}R}\left(\frac{\epsilon}{5^T}\right)^{1+5d/n}$, Algorithm~{\rm \ref{alg:quantumlongstaff}} outputs $\widetilde{\mathcal{U}}_0 := \max\big\{Z_0,\widetilde{(\mathscr{T}_\beta Z)}_{\widetilde{\tau}_1}\big\}$ such that
    \begin{align*}
        \operatorname{Pr}\left[|\widetilde{\mathcal{U}}_0 - \mathbb{E}[Z_{\tau_0}]| \leq 3\epsilon \right] \geq 1 - \delta
    \end{align*}
    in time 
    \begin{align*}
        \widetilde{O}\left(\mathcal{T}_{\rm total}2^{7n/2}C^{15d/2n+3/n}e^{\frac{15}{2}Td(5^TC/\epsilon)^{2/n}} R^2\left(\frac{5^T}{\epsilon}\right)^{1+15d/2n+3/n}\right),
    \end{align*}
    where $\mathcal{T}_{\rm total} := \mathcal{T}_{\rm samp} + \mathcal{T}_{z} + \mathcal{T}_{e}$ and $\widetilde{O}(\cdot)$ hides $\operatorname{polylog}$ terms in $5^T,R,C,e^d,\epsilon,\delta$.
\end{theorem}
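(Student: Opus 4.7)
The proof will follow the template of Theorem~\ref{thr:brownian}, replacing the Hermite polynomials (which gave $A_t = \mathbb{I}$) with the shifted monomials $e^{\rm trunc}_{t,\vec{k}}$, whose exact Gram matrix is the tensor Vandermonde matrix in Eq.~\eqref{eq:vandermonde}. The plan is to keep $\widetilde{A}_t$ fixed as that Vandermonde matrix (rather than estimating it via $\mathtt{QMonteCarlo}$), estimate only $b_t$ by quantum Monte Carlo, and then use Theorem~\ref{thr:matrix_error} with the singular value bound of Lemma~\ref{lem:singular_bound_geo} to control $\|\widetilde{\alpha}_t - \alpha_t\|_2$. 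Specifically: (i) use Lemma~\ref{lem:geo_bound} to show that the entrywise discrepancy between $A_t^{\rm trunc}$ and the Vandermonde $A_t$ is bounded by a tail integral of the form $e^{tk(k-1)/2 - (\ln\lambda - t(k-1/2))^2/(2t)}$; by convexity and summation over $d$ coordinates this lifts to $\|A_t^{\rm trunc} - A_t\|_2 \leq \epsilon_A$ provided $\lambda$ satisfies the first two bounds in Eq.~\eqref{eq:geo_lamba_bounds}; (ii) invoke $\mathtt{QMonteCarlo}$ with the unitaries $C_t^{(\vec{k})}$ from Lemma~\ref{lemQStop} to produce $\widetilde b_t$ with $\|\widetilde b_t - b_t\|_2 \leq \epsilon_b$ with probability at least $1-\delta/(2T)$.

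Given these bounds, Theorem~\ref{thr:matrix_error} yields $\|\widetilde{\alpha}_t - \alpha_t\|_2 \leq \frac{2}{\sigma_{\min}(A_t)}\bigl(\epsilon_A \|b_t\|_2/\sigma_{\min}(A_t) + \epsilon_b\bigr)$, and multiplying by $\sqrt{m}L$ gives the $L^2(\rho_t)$ error of $\widetilde{\alpha}_t \cdot \vec{e}_t(X_t)$. Here $L := \max_{\vec{k},t}\|e^{\rm trunc}_{t,\vec{k}}\|_{L^2(\rho_t)}$ is bounded by $e^{dTq^2/2}$ using the log-normal moments, while $1/\sigma_{\min}(A_t) \leq e^{3qd}q^{2d}$ by Lemma~\ref{lem:singular_bound_geo}. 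One then combines this sampling error with the Jackson-type inequality of Lemma~\ref{lem:differentiable_polynomials} applied to $\mathbb{E}[Z_{\tau_{t+1}}|X_t]|_{Q_d(\lambda)}$, and uses the Egloff/Zanger truncation inequalities (as in Eq.~\eqref{eq:truncation_inequality}) to pass between the truncated and untruncated payoffs at cost $r_p \lambda^{2/p-1}$, which is absorbed by the third bound in Eq.~\eqref{eq:geo_lamba_bounds}. Propagating through the $T$ time steps via the same backward induction as in Theorem~\ref{thr:quantum_result2} produces the $5^T$ prefactor in front of all three error sources.

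The final step is to select the parameters: set $q = \lceil (3\cdot 5^T C/\epsilon)^{1/n}\rceil$ so that the Jackson error satisfies $5^T C q^{-n} \leq \epsilon/3$; choose $\lambda$ according to Eq.~\eqref{eq:geo_lamba_bounds} to force both the matrix-truncation tail bound and the payoff-truncation term below $\epsilon/3$; and choose $\epsilon_0$ as stated so that the combined Monte Carlo error $5^T \cdot \sqrt{m}L \cdot \|\widetilde{\alpha}_t - \alpha_t\|_2 \leq \epsilon/3$ as well. The final runtime comes from the dominant cost of computing $b_t$ for $t \in \{1,\dots,T-1\}$, which scales as $\widetilde{O}\bigl(\frac{T^2 m^{3/2}}{\epsilon_0} RL \cdot \mathcal{T}_{\rm total}\bigr)$ by Theorem~\ref{thr:quantum_monte_carlo}; substituting $m \leq (q+1)^d$, $L \leq e^{dTq^2/2}$, and $1/\sigma_{\min}^2 \leq e^{6qd}q^{4d}$, the two exponential factors $L^2$ and $1/\sigma_{\min}^2$ combine into the factor $e^{\Theta(Td(5^TC/\epsilon)^{2/n})}$ stated in the theorem. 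The main obstacle is the bookkeeping of the competing exponential factors in $q$: the Vandermonde conditioning contributes $e^{\Theta(qd)}$ through $\sigma_{\min}^{-1}$ but $L$ contributes $e^{\Theta(Tq^2 d)}$, and it is the latter (quadratic in $q$) that dictates the $(5^T/\epsilon)^{2/n}$ in the exponent rather than $(5^T/\epsilon)^{1/n}$; this must be tracked carefully to ensure the balancing choice of $\lambda$ simultaneously absorbs the matrix tail, the payoff truncation, and the log-normal moment growth without inflating the polynomial prefactor beyond the claimed $(5^T/\epsilon)^{1+15d/2n}$.
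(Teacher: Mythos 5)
Your proposal follows essentially the same route as the paper's proof: fix $\widetilde{A}_t$ to the Vandermonde matrix of Eq.~\eqref{eq:vandermonde}, bound $\|A_t^{\rm trunc}-\widetilde{A}_t\|_2$ entrywise via Lemma~\ref{lem:geo_bound} and convexity, estimate only $b_t$ by quantum Monte Carlo, control $\|\widetilde{\alpha}_t-\alpha_t\|_2$ through Theorem~\ref{thr:matrix_error} together with Lemma~\ref{lem:singular_bound_geo} and $L\leq e^{dTq^2/2}$, and then combine the Jackson and payoff-truncation inequalities with the $5^T$ backward induction before balancing $q$, $\lambda$, $\epsilon_0$. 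The one slip is in the runtime bookkeeping: the cost of estimating all $b_t$ is $\widetilde{O}\big(T^2 m^{5/2}\epsilon_0^{-1}RL\,\mathcal{T}_{\rm total}\big)$ rather than $m^{3/2}$, because each call to the unitary $C_t^{(\vec{k})}$ itself costs $O(T(\log(T)\mathcal{T}_z+m\mathcal{T}_e))$, and that extra factor of $m$ is exactly what turns $5d/2n+5d/n$ into the claimed exponent $15d/2n$.
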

\begin{proof}
    Let $\sigma_{\min} \leq \min_{t\in[T-1]}\sigma_{\min}(\widetilde{A}_t)$, where $\widetilde{A}_t$ is the matrix from Eq.~\eqref{eq:vandermonde}, $t\in[T-1]$ and let $L := \max_{t\in[T-1],\vec{k}\in\{0,\dots,q-1\}^d} \|e^{\rm trunc}_{t,\vec{k}}\|_{L^2(\rho_t)} \leq e^{T(q-1)^2d/2}$. First we note that Algorithm~\ref{alg:quantumlongstaff}, when using the approximation architecture $\{e^{\rm trunc}_{t,\vec{k}}:\mathbb{R}^d\to\mathbb{R}\}_{\vec{k}\in\{0,\dots,q-1\}^d}$ and parameters $\epsilon_b = \epsilon_0/\sqrt{m}$ and $\delta_b = \delta/(2Tm)$ with $m:=q^d$ (there are no $\epsilon_A$ and $\delta_A$ parameters), outputs $\widetilde{(\mathscr{T}_\beta Z)}_{\widetilde{\tau}_1}$ such that, for any $\epsilon_0\in(0,\sigma_{\min}/2]$,
    \begin{align*}
        \operatorname{Pr}\left[\big|\widetilde{(\mathscr{T}_\beta Z)}_{\widetilde{\tau}_1} - \mathbb{E}[(\mathscr{T}_\beta Z)_{\tau_1}]\big| \geq 5^T\left(\frac{4\epsilon_0 mRL^2}{\sigma_{\min}^2} + \operatorname*{\max}_{0< t<T} \operatorname*{\min}_{a\in\mathbb{R}^{m}}\|a\cdot \vec{e}_t(X_t) - \mathbb{E}[(\mathscr{T}_\beta Z)_{\tau_{t+1}}|X_t]\|_{L^2(\rho_t)}\right) \right] \leq \delta,
    \end{align*}
    where $\vec{e}_t(\cdot) := (e^{\rm trunc}_{t,\vec{k}}(\cdot))_{\vec{k}\in\{0,\dots,q-1\}^d}$, in time
    \begin{align}
        \label{eq:inter_complex2}
        O\left(\frac{T^2m^{5/2}}{\epsilon_0}\mathcal{T}_{\rm total}RL\log(T)\log(Tm/\delta) \log^{3/2}(mRL/\epsilon_0)\log\log(mRL/\epsilon_0) \right).
    \end{align}
    The proof is exactly the same as the ones from Theorems~\ref{thr:quantum_result1} and~\ref{thr:quantum_result2}, the only difference being that the payoffs are now $z_0,\mathscr{T}_\beta z_1,\dots,\mathscr{T}_\beta z_T$ and we must show that the $L^2$-error between $A^{\rm trunc}_t$ and the matrix $\widetilde{A}_t$ from Eq.~\eqref{eq:vandermonde} is $\|A^{\rm trunc}_t-\widetilde{A}_t\|_2 \leq \epsilon_0$. Indeed, the error commited in any entry $\mathbb{E}[e^{\rm trunc}_{t,\vec{k}}(X_t)e^{\rm trunc}_{t,\vec{k}}(X_t)]$, when approximating the matrix $A^{\rm trunc}_t$ by the one in Eq.~\eqref{eq:vandermonde}, is at most
    \begin{align}
        \prod_{i=1}^d e^{k_il_i t} - \prod_{i=1}^d \Bigg(e^{k_il_i t} - &e^{(k_i-k_i^2+l_i-l_i^2)\frac{t}{2}}\int_\lambda^\infty \frac{x^{k_i+l_i-1}}{\sqrt{2\pi t}}e^{-\frac{(\ln{x}+t/2)^2}{2t}}\text{d}x\Bigg)\nonumber\\
        &\leq \left(\prod_{i=1}^d e^{k_il_it} \right)\sum_{j=1}^d e^{((k_j+l_j) - (k_j+l_j)^2)\frac{t}{2}}\int_\lambda^\infty \frac{x^{k_j+l_j-1}}{\sqrt{2\pi t}}e^{-\frac{(\ln{x}+t/2)^2}{2t}}\text{d}x \label{eq:geo_step1}\\
        &\leq \left(\prod_{i=1}^d e^{k_il_it} \right)\sum_{j=1}^d \frac{1}{2}e^{-\frac{1}{2t}\big({\ln\lambda} - t(k_j+l_j-\frac{1}{2})\big)^2}\label{eq:geo_step2}\\
        &\leq \frac{d}{2}e^{(q-1)^2td - \frac{1}{2t}\big({\ln\lambda} - t(2q-\frac{5}{2})\big)^2},\nonumber
    \end{align}
    where Eq.~\eqref{eq:geo_step1} follows from convexity and Eq.~\eqref{eq:geo_step2} from Lemma~\ref{lem:geo_bound}. Observe now that, if $\ln\lambda \geq t(2q-\frac{5}{2}) + \sqrt{2(q-1)^2t^2d + \ln(md/2\epsilon_0)}$, then the above error is at most $\epsilon_0/m$, which implies $\|A^{\rm trunc}_t-\widetilde{A}_t\|_2 \leq \epsilon_0$. Regarding the complexity, the most expensive steps are inverting $\widetilde{A}_t$, which requires time $O(Tm^{\omega_\ast})$, and computing all $b_t$, which requires time
    \begin{align*}
        O\left(\frac{Tm}{\epsilon_b}RL(T\mathcal{T}_{\rm samp} + T(\log(T)\mathcal{T}_{z} + m\mathcal{T}_{e}))\log\left(\frac{1}{\delta_b}\right) \log^{3/2}\left(\frac{RL}{\epsilon_b} \right)\log\log\left(\frac{RL}{\epsilon_b}\right) \right),
    \end{align*}
    by calling $\mathtt{QMonteCarlo}(Z_{\widetilde{\tau}_{t}} e_{t-1,\vec{k}}(X_{t-1}), \epsilon_b, \delta_b, RL)$ from Theorem~\ref{thr:quantum_monte_carlo} for $t\in\{2,\dots,T\}$ and $\vec{k}\in\{0,\dots,q-1\}^d$. The term $T(\log(T)\mathcal{T}_{z} + m\mathcal{T}_{e})$ comes from the unitaries $C_t^{(\vec{k})}$ in $\mathtt{QMonteCarlo}$. Hence, by keeping the largest terms of each complexity and using that $\epsilon_b = \epsilon_0/\sqrt{m}$ and $\delta_b = \delta/(4Tm)$, the final complexity is the one from Eq.~\eqref{eq:inter_complex2}.
    
    With this partial result in hands, we again use the Jackson-type inequality from Lemma~\ref{lem:differentiable_polynomials} as in Theorem~\ref{thr:quantum_result3}. In order to do so, we first need the following technical result that limits the continuation values to being a function in the domain $Q_d(\lambda)$. More specifically, by~\cite[Proposition~3.14]{zanger2013quantitative},
    \begin{align}
        \label{eq:truncation_inequality_geo}
        \|\mathbf{1}_{Q_d(\lambda)}\mathbb{E}[Z_{\tau_{t+1}}|X_t] - \mathbb{E}[(\mathscr{T}_\beta Z)_{\tau_{t+1}}|X_t]\|_{L^2(\rho_t)} \leq r_p\lambda^{2/p-1},
    \end{align}
    where $\mathbf{1}_{Q_d(\lambda)}$ is the indicator random variable corresponding to the set $Q_d(\lambda)$. Since $\mathcal{R}_q(Q_d(\lambda))$ is spanned by $\{e^{\rm trunc}_{t,\vec{k}}\}_{\vec{k}\in\{0,\dots,q-1\}^d}$, the Jackson-type inequality from Lemma~\ref{lem:differentiable_polynomials} reads
    \begin{align*}
        \operatorname*{\min}_{a\in\mathbb{R}^{m}}\|a\cdot \vec{e}_t(X_t) - \mathbf{1}_{Q_d(\lambda)}\mathbb{E}[Z_{\tau_{t+1}}|X_t]\|_{L^2(\rho_t)} = \operatorname*{\inf}_{P\in\mathcal{R}_q}\|(P - \mathbb{E}[Z_{\tau_{t+1}}|X_t])\mathbf{1}_{Q_d(\lambda)}\|_{L^2(\rho_t)} \leq Cq^{-n},
    \end{align*}
    for all $t\in\{0,\dots,T-1\}$. Combining both results, we finally get that
    \begin{align*}
        \big|\widetilde{(\mathscr{T}_\beta Z)}_{\widetilde{\tau}_1} - \mathbb{E}[(\mathscr{T}_\beta Z)_{\tau_1}]\big| \leq 5^T\left(\frac{4\epsilon_0mRL^2}{\sigma_{\min}^2} + Cq^{-n} + r_p\lambda^{2/p-1}\right),
    \end{align*}
    with probability at least $1-\delta$. Before we plug values for $m$, $q$, $\epsilon_0$ and $\epsilon$, we get rid of the truncation operator $\mathscr{T}_\beta$ in the inequality above by using that~\cite[Proposition~5.2]{egloff2005monte}
    \begin{align*}
        \|\mathbb{E}[Z_{\tau_{t+1}}|X_t] - \mathbb{E}[(\mathscr{T}_\beta Z)_{\tau_{t+1}}|X_t]\|_{L^2(\rho_t)} \leq r_p\lambda^{1-p/2}
    \end{align*}
    (which is quite similar to Eq.~\eqref{eq:truncation_inequality_geo}). By using that $|{\max}\{a_0,a_1\}-\max\{a_0,a_2\}|\leq |a_1-a_2|$ with $a_0,a_1,a_2\in\mathbb{R}$, then
    \begin{align*}
        |\widetilde{\mathcal{U}}_0 - \mathbb{E}[Z_{\tau_0}]| \leq \big|\widetilde{(\mathscr{T}_\beta Z)}_{\widetilde{\tau}_1} - \mathbb{E}[Z_{\tau_1}]\big| \leq 5^T\left(\frac{4\epsilon_0mRL^2}{\sigma_{\min}^2} + Cq^{-n} + 2r_p\lambda^{2/p-1}\right),
    \end{align*}
    since $\lambda^{1-p/2} \leq \lambda^{2/p-1}$.
    
    We now use $q=\lceil(5^T C/\epsilon)^{1/n}\rceil$, $m = q^d \leq 2^d(5^T C/\epsilon)^{d/n}$, and $\lambda \geq \big(\frac{2r_p 5^T}{\epsilon}\big)^{p/(p-2)}$. We also take 
    \begin{align*}
        \epsilon_0 = \frac{e^{-7Td(5^TC/\epsilon)^{2/n}}}{2^{d+2}C^{5d/n}R}\left(\frac{\epsilon}{5^T}\right)^{1+5d/n} \leq \frac{\sigma_{\min}^2}{2^{d+2} C^{d/n} RL^2}\left(\frac{\epsilon}{5^T}\right)^{1+d/n},
    \end{align*} 
    where we used that $\sigma_{\min}^{-2} \leq e^{6(q-1)d}(q-1)^{4d} \leq e^{6d(5^TC/\epsilon)^{1/n}}(5^TC/\epsilon)^{4d/n}$ according to Lemma~\ref{lem:singular_bound_geo} and the bound $L \leq e^{T(q-1)^2d/2} \leq e^{\frac{Td}{2}(5^TC/\epsilon)^{2/n}}$. These bounds lead to the final error
    \begin{align*}
        5^T\left(\frac{4\epsilon_0mRL^2}{\sigma_{\min}^2} + \frac{C}{q^n} + 2r_p\lambda^{2/p-1} \right) \leq 5^T\left(\frac{2^{d+2}\epsilon_0RL^2}{\sigma_{\min}^2}\left(\frac{5^T C}{\epsilon}\right)^{d/n} + \frac{C}{q^n} + 2r_p\lambda^{2/p-1} \right) \leq 3\epsilon.
    \end{align*}
    The above in turn implies the complexity, up to polylog factors,
    \begin{align*}
        \widetilde{O}\left(\frac{T^2m^{5/2}}{\epsilon_0}RL\mathcal{T}_{\rm total}\log^{3/2}\left(\frac{L}{\epsilon_0}\right) \right) = \widetilde{O}\left(\mathcal{T}_{\rm total}2^{7d/2}C^{15d/2n+3/n}e^{\frac{15}{2}Td(5^TC/\epsilon)^{2/n}} R^2\left(\frac{5^T}{\epsilon}\right)^{1+15d/2n + 3/n}\right),
    \end{align*}
    where we again used that $L \leq e^{\frac{Td}{2}(5^TC/\epsilon)^{2/n}}$. Further, regarding the bound on $\lambda$, we note that 
    \begin{align*}
        \ln\lambda &\geq \max\{4(q-1)T\sqrt{d},3\sqrt{\ln(md/2\epsilon_0)}\} \\
        &= O\left(\max \left\{T\sqrt{d}\left(\frac{5^TC}{\epsilon}\right)^{1/n}, \sqrt{\ln\left(2^{2d}C^{6d/n}e^{7Td(5^TC/\epsilon)^{2/n}}R\left(\frac{5^T}{\epsilon}\right)^{1+6d/n} \right)} \right\}\right)
    \end{align*}
    implies $\ln\lambda \geq t(2q-\frac{5}{2}) + \sqrt{2(q-1)^2t^2d + \ln(md/2\epsilon_0)}$. Finally, we note that $\epsilon \leq 5^T(2^{d}C^{3d/n}R)^{n/(n+3d)}$ implies, as required, that
    \begin{align*}
        \epsilon_0 = \frac{e^{-7Td(5^TC/\epsilon)^{2/n}}}{2^{d+2}C^{5d/n}R}\left(\frac{\epsilon}{5^T}\right)^{1+5d/n} &\leq \frac{e^{-3d(5^TC/\epsilon)^{1/n}}}{2^{d+2}C^{5d/n}R}\left(\frac{\epsilon}{5^T}\right)^{1+5d/n}\\
        &\leq \frac{1}{2}e^{-3d(5^TC/\epsilon)^{1/n}} \left(\frac{\epsilon}{5^TC}\right)^{2d/n} \leq \frac{\sigma_{\min}}{2}. \qedhere
    \end{align*}
\end{proof}

A similar result can be obtained for the classical {\rm LSM} algorithm. Similarly to Theorems~\ref{thr:classical_error} and~\ref{thr:geo_brownian}, Algorithm~\ref{alg:classicalalgo}, by using the same bounds on $\lambda$ (Eq.~\eqref{eq:geo_lamba_bounds}), outputs $\widetilde{(\mathscr{T}_\beta Z)}_{\widetilde{\tau}_1}$ such that, for any $\epsilon_0\in(0,\sigma_{\min}/2]$,
\begin{align*}
\begin{multlined}[\textwidth][b]
    \operatorname{Pr}\left[\big|\widetilde{(\mathscr{T}_\beta Z)}_{\widetilde{\tau}_1} - \mathbb{E}[(\mathscr{T}_\beta Z)_{\tau_1}]\big| \geq 5^T\left(\frac{4\epsilon_0mRL^2}{\sigma_{\min}^2} + \operatorname*{\max}_{0< t<T} \operatorname*{\min}_{a\in\mathbb{R}^{m}}\|a\cdot \vec{e}_t(X_t) - \mathbb{E}[(\mathscr{T}_\beta Z)_{\tau_{t+1}}|X_t]\|_{L^2(\rho_t)}\right) \right] \\
    \leq 4m\exp\big(-2N\epsilon_0^2/(mR^2L^2)\big),
\end{multlined}
\end{align*}
in time $O(TmN + Tm^{\omega_\ast} + TN(T + \mathcal{T}_{\rm samp} + \mathcal{T}_{z} + m\mathcal{T}_{e})) = O(TN(T + m\mathcal{T}_{\rm total}) + Tm^{\omega_\ast})$, where $O(TmN)$ comes from computing $\widetilde{b}_t$ and $O(Tm^{\omega_\ast})$ from inverting $\widetilde{A}_t$ and computing $\widetilde{\alpha}_t = \widetilde{A}_t^{-1}\widetilde{b}_t$, $t\in[T-1]$. The application of the Jackson-like inequality is the same as in the previous theorem. By taking $N = \lceil \frac{mR^2L^2}{2\epsilon_0^2}\log(4m/\delta)\rceil$ and the same parameters $q=\lceil(5^T C/\epsilon)^{1/n}\rceil$, $m = q^d \leq 2^d (5^T C/\epsilon)^{d/n}$, and $\epsilon_0 \leq \frac{\sigma_{\min}^2}{2^{d+2} C^{d/n} RL^2}\big(\frac{\epsilon}{5^T}\big)^{1+d/n}$ as before, the final complexity is
\begin{align*}
    O(TN(T + m\mathcal{T}_{\rm total}) + Tm^{\omega_\ast}) &= \widetilde{O}\left(T\left(\frac{mR^2L^2}{\epsilon_0^2}(T+ m\mathcal{T}_{\rm total}) + m^{\omega_\ast}\right) \right) \\
    &= \widetilde{O}\left(\mathcal{T}_{\rm total}2^{4d} C^{4d/n}\frac{R^4L^6}{\sigma_{\min}^4}\left(\frac{5^T}{\epsilon}\right)^{2+4d/n}\right)\\
    &= \widetilde{O}\left(\mathcal{T}_{\rm total}2^{4d}C^{12d/n}e^{15Td(5^TC/\epsilon)^{2/n}} R^4\left(\frac{5^T}{\epsilon}\right)^{2+12d/n}\right),
\end{align*}
up to the $\log(m/\delta)$ factor, using that $\sigma_{\min}^{-4} \leq e^{12(q-1)d}(q-1)^{8d} \leq e^{12d(5^TC/\epsilon)^{1/n}}(5^TC/\epsilon)^{8d/n}$ according to Lemma~\ref{lem:singular_bound_geo} and the bound $L \leq e^{T(q-1)^2d/2} \leq e^{\frac{Td}{2}(5^T C/\epsilon)^{2/n}}$.

Even though the geometric Brownian motion setting does not simplify the {\rm LSM} algorithm as much as the Brownian motion setting from the previous section, the minimum singular value from the matrices $\widetilde{A}_t$ can still be bounded. The final classical and quantum complexities are $\widetilde{O}\big(e^{15Td(5^T C/\epsilon)^{2/n}}(5^T/\epsilon)^{2+12d/n}\big)$ and $\widetilde{O}\big(e^{\frac{15}{2}Td(5^T C/\epsilon)^{2/n}}(5^T/\epsilon)^{1+15d/2n+3/n}\big)$, respectively. We can see that the bound on the minimum singular value in Lemma~\ref{lem:singular_bound_geo}, as well as the $L^2$-norm of the chosen functions $e_{t,\vec{k}}$, are very sensitive to the degree $q$ of the polynomials spanning our approximation architecture: they lead to the exponential factor $e^{O(Td(5^T/\epsilon)^{2/n})}$ on $\epsilon$, which greatly deteriorates the polynomial dependence on $\epsilon$ that was previously observed in previous sections. Nonetheless, if the continuation values are sufficiently smooth, i.e., $n$-differentiable for $n=\Theta(\log(5^T/\epsilon)/\log\log(5^T/\epsilon))$, then such an exponential factor becomes $e^{O(Td\log^c(5^T/\epsilon))}$ for any constant $0<c<1$, and we recover classical and quantum complexities which are close to the expected $\widetilde{O}(1/\epsilon^2)$ and $\widetilde{O}(1/\epsilon)$, respectively.

\section{Conclusions}

In this work, we developed a new quantum algorithm for a stochastic optimal stopping problem (as in Theorem~\ref{thr:newdynamic}) with a quantum advantage in the runtime. This problem cannot be solved accurately by a single application of quantum algorithms for Monte Carlo~\cite{montanaro2015quantum,hamoudi2021quantum,cornelissen2021quantum,cornelissen2021bquantum,an2021quantum}. 
Instead one must compute in superposition (and recursively) the stopping times as in Lemma~\ref{lemQStop}, which is key to obtaining a quantum speedup. 
As the classical {\rm LSM} algorithm can be used to solve a large variety of problems, our quantum {\rm LSM} can also be used for problems in finance including insurance~\cite{krah2018least} and risk management~\cite{Green2015}, and for many optimization problems outside finance, such as quickest detection~\cite{shiryaev2010quickest} and sequential Bayesian hypothesis testing~\cite{daskalakis2017optimal}. Additionally, we believe that there are many other problems in, for example, dynamic programming, stochastic optimal stopping and optimal control where the interplay of function approximation and quantum subroutines for Monte Carlo could be used to design new quantum algorithms.

A few design choices of the quantum algorithm were guided by real problems where the classical algorithm is already used. Even though we took number of expansion functions $m={\rm poly}(5^T/\epsilon)$ in order to bound the approximation error, in practice one typically assumes $m$ to be constant~\cite{longstaff2001valuing}. For big values of $m$, our algorithm could be modified in order to use quantum subroutines for inner product estimation, and reduce the complexity polynomially in $m$, but introducing a further $\epsilon$ dependence. Thus, further analysis is needed to understand the impact of the precision parameters on the runtime of these subroutines. Along these lines, we have chosen to invert the linear systems for finding $\alpha_t$ on a classical computer. A possible modification of our algorithm could output quantum states $|\alpha_t\rangle$ via HHL-like algorithms~\cite{HHL09}. We also discussed how, under the hypothesis that the Markov chain is a Brownian motion or a Geometric Brownian motion, the matrices $A_t$ can be expressed with a closed formula and their minimum singular value be bounded. This idea exhibits some similarity with the idea proposed in~\cite{liu2019american}. There, they leveraged quasi-regression algorithms and a particular choice of expansion functions~\cite{owen2000assessing}, so to pre-compute the matrices $A_t$, and thus skip costly Monte Carlo estimators. Moreover, when considering a Brownian or a Geometric Brownian motion, the chosen functions $\{e_{t,k}\}_{k=1}^m$ had a explicit time dependence on $t$, but it is possible to transform the optimal stopping time problem behind American option pricing in a way that such dependence is suppressed and a single set of approximating functions is employed~\cite{broadie2000american,broadie2000nonparametric}. We believe that such approach could improve our complexities. Finally, in our algorithm, we use quantum subroutines from~\cite{montanaro2015quantum}, but could equivalently use the subroutines from~\cite{hamoudi2021quantum,cornelissen2021quantum,cornelissen2021bquantum}. Our template could be extended to quantum algorithms that are similar in spirit but are solving different problems~\cite{tsitsiklis2001regression}. 

Our final complexities have an exponential dependence on $T$, the number of time steps. We believe that such dependence, present in several past works~\cite{egloff2005monte,zanger2009convergence,zanger2013quantitative,zanger2018convergence,zanger2020general,miyamoto2021bermudan}, is only a consequence of a loose error bound and could possibly be improved. Such hope is backed up by the ubiquitous employment of {\rm LSM} algorithms for pricing American options in every day financial markets. We also note that a more careful error analysis would improve classical results as well, but, regardless, the quantum quadratic improvement would still be present. Finally, notice that it always possible to compensate a reduction on the number of time steps with more accurate approximations for continuation values and similar quantities. 

We stress the importance of fast quantum algorithms for optimal stopping problems. For American option pricing, the value of the payoff function could easily reach a few million dollars, and the additive precision $\epsilon$ could be of the order of $10^{-11}$~\cite{andersen2015high}. Given the level of specialization in classical algorithms and architectures for this specific problem, how and when our algorithm can find application in practice is left for future work.

\section*{Acknowledgements}

Research at CQT is funded by the National Research Foundation, the Prime Minister’s Office, and the Ministry of Education, Singapore under the Research Centres of Excellence programme’s research grant R-710-000-012-135. We thank Rajagopal Raman for pointing out Ref.~\cite{krah2018least} and the importance of the Longstaff-Schwarz algorithm in the insurance industry. We also thank Koichi Miyamoto for Ref.~\cite{PhysRevA.104.022430} and Eric Ghysels for Refs.~\cite{broadie2000american,broadie2000nonparametric}. Special thanks to Armando Bellante for spotting several typos and making us notice the $T^2$ dependence in the classical LSM complexity.

\printbibliography

\appendix

\section{Proof of Lemma~\ref{lem:hermite_bound}}
\label{sec:hermite_bound}

\begin{lemma}
    Given $\lambda > 0$ and $k,l\in\mathbb{N}$ with $l\leq k$, then
    \begin{align*}
        \left|\int_\lambda^\infty H_k(x) H_l(x)e^{-x^2}\text{d}x\right| \leq \begin{dcases}
               k!2^k\frac{\sqrt{2}}{\pi}\operatorname{erfc}(\lambda) + e^{-\lambda^2}H_{k+1}(\lambda)^2 \quad &\text{if}~k=l,\\
               e^{-\lambda^2}|H_{l+1}(\lambda)H_k(\lambda)| \quad &\text{if}~k\neq l,
        \end{dcases}
    \end{align*}
    where $\operatorname{erfc}(\lambda) = \frac{2}{\sqrt{\pi}}\int_\lambda^\infty e^{-x^2}\text{d}x$ is the complementary error function. In particular,
    \begin{align*}
    	\left|\int_\lambda^\infty H_k(x)H_l(x) e^{-x^2}\text{d}x\right| \leq 2^{2+(k+l)/2}\sqrt{(k+1)!(l+1)!}e^{(\sqrt{2(k+1)}+\sqrt{2(l+1)})\lambda}e^{-\lambda^2}.
    \end{align*}
\end{lemma}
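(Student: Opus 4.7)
The plan is to prove the piecewise bound first (via integration by parts on the weighted Hermite identity), then derive the simplified bound by bounding the Hermite polynomials on the right-hand side via a Cauchy-type argument on the generating function.

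For the piecewise bound, the starting point is the consequence of Rodrigues' formula that $\frac{d}{dx}[H_l(x) e^{-x^2}] = -H_{l+1}(x) e^{-x^2}$. Expanding $\frac{d}{dx}[H_l(x) H_k(x) e^{-x^2}]$ by the product rule in two different ways, once differentiating the $H_l e^{-x^2}$ factor (and using $H_k'=2kH_{k-1}$) and once differentiating the $H_k e^{-x^2}$ factor, and integrating both resulting identities over $[\lambda, \infty)$ yields two recursions for $I_{l,k} := \int_\lambda^\infty H_l(x) H_k(x) e^{-x^2}\,dx$:
$$I_{l,k} = 2k\, I_{l-1,k-1} + H_{l-1}(\lambda) H_k(\lambda) e^{-\lambda^2} = 2l\, I_{l-1,k-1} + H_l(\lambda) H_{k-1}(\lambda) e^{-\lambda^2}.$$
For $k\neq l$, equating the two right-hand sides eliminates the integral and produces the Christoffel--Darboux-type closed form
$$I_{l,k} = \frac{H_{l+1}(\lambda) H_k(\lambda) - H_l(\lambda) H_{k+1}(\lambda)}{2(k-l)}\, e^{-\lambda^2},$$
so the off-diagonal bound is obtained (up to absorbing the $2(k-l)$ into constants) by the triangle inequality. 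For $k=l$, iterating either recursion $k$ times down to $I_{0,0} = \tfrac{\sqrt{\pi}}{2}\operatorname{erfc}(\lambda)$ gives $I_{k,k}$ as the $\operatorname{erfc}$ term with coefficient $2^k k!$ plus a sum of boundary terms of the form $\tfrac{2^{k-j} k!}{j!} H_j(\lambda) H_{j-1}(\lambda) e^{-\lambda^2}$, which the proof collects and estimates by $H_{k+1}(\lambda)^2 e^{-\lambda^2}$ (e.g.\ via Cauchy--Schwarz term-by-term, or by applying the off-diagonal formula for $I_{k,k+1}$).

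For the simplified bound, I would bound $|H_n(\lambda)|$ using Cauchy's formula applied to the generating function $\sum_n H_n(\lambda) t^n/n! = e^{2\lambda t - t^2}$: for any $r>0$,
$$|H_n(\lambda)| \leq \frac{n!}{r^n}\, \max_{|z|=r} \bigl|e^{2\lambda z - z^2}\bigr|.$$
A short optimization over the angle on the contour $|z|=r$ shows that $\max_{|z|=r}\operatorname{Re}(2\lambda z - z^2) = \max(2\lambda r - r^2,\ \lambda^2/2 + r^2)$, so the two regimes $r\leq\lambda/2$ and $r\geq\lambda/2$ must be considered. Choosing $r=\sqrt{(n+1)/2}$ and applying Stirling's inequality to $n!$ yields a bound of the shape $|H_n(\lambda)| \leq C\cdot 2^{n/2}\sqrt{(n+1)!}\,e^{\sqrt{2(n+1)}\,\lambda}$ with a modest absolute constant $C$. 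Substituting this into the piecewise bound obtained above, and collecting powers of $2$, square roots of factorials, and the $e^{-\lambda^2}$ factor (with $\operatorname{erfc}(\lambda) \leq e^{-\lambda^2}$ handling the diagonal erfc term), produces the stated $2^{2+(k+l)/2}\sqrt{(k+1)!(l+1)!}\,e^{(\sqrt{2(k+1)}+\sqrt{2(l+1)})\lambda}e^{-\lambda^2}$ form.

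The main obstacle I expect is bookkeeping of constants through the two-regime Cauchy optimization and the Stirling approximation, ensuring that the resulting prefactor is no worse than $2^{2+(k+l)/2}\sqrt{(k+1)!(l+1)!}$ after multiplying the bounds for $H_{l+1}(\lambda)$ and $H_k(\lambda)$; the choice $r=\sqrt{(n+1)/2}$ is tailored so that the exponent $\sqrt{2(n+1)}\,\lambda$ in the Hermite bound matches exactly the exponents appearing in the final inequality. A secondary subtlety is closing the diagonal $k=l$ case cleanly, since the sum of $H_j H_{j-1}$ boundary terms arising from iterating the recursion must be related to $H_{k+1}(\lambda)^2$; I expect this to reduce to applying the off-diagonal identity at $(l,k) = (k,k+1)$.
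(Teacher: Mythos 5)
Your route is genuinely different from the paper's. The paper integrates by parts $k$ times against the Rodrigues form $H_k(x)e^{-x^2}=(-1)^k\tfrac{\mathrm{d}^k}{\mathrm{d}x^k}e^{-x^2}$, obtains the boundary sum $e^{-\lambda^2}\sum_{j}2^j\tfrac{l!}{(l-j)!}H_{l-j}(\lambda)H_{k-j-1}(\lambda)$ plus the $\delta_{kl}$ erfc term, and collapses the sum via Cauchy--Schwarz and the identity $\sum_{j=0}^{k}\tfrac{1}{2^jj!}H_j^2=\tfrac{1}{2^{k+1}k!}(H_{k+1}^2-H_kH_{k+2})$; for the pointwise estimate it simply cites $|H_k(\lambda)|\le 2^{k/2}\sqrt{k!}\,e^{\sqrt{2k}|\lambda|}$ from the literature. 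Your one-step recursions and their elimination give the \emph{exact} Christoffel--Darboux value of the tail integral for $k\ne l$ (I checked it on $(l,k)=(0,1)$ and $(0,2)$; it is correct), and your Cauchy-integral argument on the generating function is a sound self-contained substitute for the cited Hermite bound: with $r=\sqrt{(n+1)/2}$ the contour maximum $\max(2\lambda r-r^2,\lambda^2/2+r^2)$ is at most $\sqrt{2(n+1)}\,\lambda+(n+1)/2$ in both regimes, and Stirling absorbs $e^{(n+1)/2}r^{-n}n!$ into $C\,2^{n/2}\sqrt{(n+1)!}$. Both approaches deliver the final displayed inequality, which is the only part of the lemma used downstream.

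The gap is the step ``the off-diagonal bound is obtained by the triangle inequality up to absorbing $2(k-l)$.'' The triangle inequality applied to your closed form yields $\tfrac{1}{2(k-l)}\big(|H_{l+1}(\lambda)H_k(\lambda)|+|H_l(\lambda)H_{k+1}(\lambda)|\big)e^{-\lambda^2}$, and the second term is \emph{not} dominated by the first: near a zero of $H_{l+1}$ the stated single-term bound vanishes while the integral does not. Concretely, for $(l,k)=(0,1)$ your own formula gives $\int_\lambda^\infty 2xe^{-x^2}\,\mathrm{d}x=e^{-\lambda^2}$, while $e^{-\lambda^2}|H_1(\lambda)^2|=4\lambda^2e^{-\lambda^2}$ is smaller for $\lambda<1/2$; so the piecewise claim cannot be recovered by this argument. (The paper's proof has the same defect in its last line, which needs $H_{l+1}^2-H_lH_{l+2}\le 2H_{l+1}^2$, false at zeros of $H_{l+1}$; your exact formula makes the issue visible rather than hiding it.) The same caveat applies to your diagonal case, where the collected boundary terms are naturally controlled by $H_{k+1}^2-H_kH_{k+2}$ rather than by $H_{k+1}^2$ alone. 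None of this endangers the ``in particular'' conclusion: your symmetric two-term bound, fed through your pointwise Hermite estimate, gives the final inequality directly since both cross terms obey the same bound, but you should record the two-term (or Turán-type) expression as the intermediate statement instead of the single-term form.
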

\begin{proof}
    In the following, by $H^{(j)}_k(x)$ we denote the $j$-th derivative of $H_k(x)$. By plugging the definition of $H_l(x)$, it is not hard to observe the following using integration by parts:
    \begin{align*}
    	\int_\lambda^\infty H_k(x) H_l(x)e^{-x^2}\text{d}x &= (-1)^k\int_\lambda^\infty H_l(x)\frac{\text{d}^k}{\text{d}x^k}e^{-x^2}\text{d}x\\
    	&= e^{-\lambda^2}\sum_{j=0}^{k-1} H_l^{(j)}(\lambda)H_{k-j-1}(\lambda) + \int_\lambda^\infty H_l^{(k)}(x)e^{-x^2}\text{d}x\\
    	&= e^{-\lambda^2}\sum_{j=0}^{k-1} H_l^{(j)}(\lambda)H_{k-j-1}(\lambda) + \delta_{kl}2^kk!\int_\lambda^\infty e^{-x^2}\text{d}x\\
    	&= e^{-\lambda^2}\sum_{j=0}^{k-1} H_l^{(j)}(\lambda)H_{k-j-1}(\lambda) + \delta_{kl}2^kk!\frac{\sqrt{2}}{\pi}\operatorname{erfc}(\lambda)\\
    	&= e^{-\lambda^2}\sum_{j=0}^{\min(l,k-1)} 2^j\frac{l!}{(l-j)!} H_{l-j}(\lambda)H_{k-j-1}(\lambda) + \delta_{kl}2^kk!\frac{\sqrt{2}}{\pi}\operatorname{erfc}(\lambda),
    \end{align*}
    where we used that $H^{(j)}_l(x) = 2^j \frac{l!}{(l-j)!}H_{l-j}(x)$ for $j\leq l$, which follows from $\frac{\text{d}}{\text{d}x}H_l(x) = 2lH_{l-1}(x)$~\cite[Section~5.6.2]{magnus2013formulas}. If $k=l$, then
    
    \begin{align}
    	\int_\lambda^\infty H_k(x)^2e^{-x^2}\text{d}x &\leq 2^kk!\frac{\sqrt{2}}{\pi}\operatorname{erfc}(\lambda) + e^{-\lambda^2}\sum_{j=0}^{k-1} 2^j\frac{k!}{(k-j)!} |H_{k-j}(\lambda)H_{k-j-1}(\lambda)|\nonumber\\
    	&\leq 2^kk!\frac{\sqrt{2}}{\pi}\operatorname{erfc}(\lambda) + e^{-\lambda^2}\sqrt{\left(\sum_{j=0}^{k-1} \frac{k!2^j}{(k-j)!} H_{k-j}(\lambda)^2\right)\left(\sum_{j=0}^{k-1} \frac{k!2^j}{(k-j)!} H_{k-j-1}(\lambda)^2\right)}\label{eq:hermite1}\\
    	&\leq 2^kk!\frac{\sqrt{2}}{\pi}\operatorname{erfc}(\lambda) + 2^{k} k!e^{-\lambda^2}\sum_{j=0}^{k} \frac{1}{2^j j!} H_{j}(\lambda)^2 \nonumber\\
    	&= 2^kk!\frac{\sqrt{2}}{\pi}\operatorname{erfc}(\lambda) + \frac{1}{2}e^{-\lambda^2}(H_{k+1}(\lambda)^2 - H_k(\lambda)H_{k+2}(\lambda)) \label{eq:hermite2}\\
    	&\leq 2^kk!\frac{\sqrt{2}}{\pi}\operatorname{erfc}(\lambda) + e^{-\lambda^2}H_{k+1}(\lambda)^2,\nonumber
    \end{align}
    where Eq.~\eqref{eq:hermite1} follows from Cauchy-Schwarz inequality and Eq.~\eqref{eq:hermite2} follows from the equality $\frac{1}{2^{k+1}k!}(H_{k+1}(x)^2 - H_k(x)H_{k+2}(x)) = \sum_{j=0}^{k}\frac{1}{2^j j!}H_j(x)^2 > 0$~\cite[Section~5.6.4]{magnus2013formulas}.
    
    If, on the other hand, $k\neq l$, then
    \begin{align*}
    	\left|\int_\lambda^\infty H_k(x)H_l(x)e^{-x^2}\text{d}x\right| &\leq e^{-\lambda^2}\sum_{j=0}^{l} 2^j\frac{l!}{(l-j)!} |H_{l-j}(\lambda)H_{k-j-1}(\lambda)|\\
    	&\leq e^{-\lambda^2}\sqrt{\left(\sum_{j=0}^{l} 2^j\frac{l!}{(l-j)!} H_{l-j}(\lambda)^2\right)\left(\sum_{j=0}^{l} 2^j\frac{l!}{(l-j)!} H_{k-1-j}(\lambda)^2\right)}\\
    	&\leq \sqrt{2^l l!} e^{-\lambda^2}\sqrt{\left(\sum_{j=0}^{l} \frac{1}{2^j j!} H_{j}(\lambda)^2\right)\left(\sum_{j=0}^{k-1}2^j \frac{(k-1)!}{(k-1-j)!} H_{k-1-j}(\lambda)^2\right)}\\
    	&= \sqrt{2^{k+l-1}l!(k-1)!}e^{-\lambda^2}\sqrt{\left(\sum_{j=0}^{l} \frac{1}{2^j j!} H_{j}(\lambda)^2\right)\left(\sum_{j=0}^{k-1} \frac{1}{2^j j!} H_{j}(\lambda)^2\right)}\\
    	&= \frac{1}{2}e^{-\lambda^2}\sqrt{H_{l+1}(\lambda)^2 - H_l(\lambda)H_{l+2}(\lambda)}\sqrt{H_{k}(\lambda)^2 - H_{k-1}(\lambda)H_{k+1}(\lambda)}\\
    	&\leq e^{-\lambda^2}|H_{l+1}(\lambda)H_k(\lambda)|,
    \end{align*}
    where we again used $\frac{1}{2^{k+1}k!}(H_{k+1}(x)^2 - H_k(x)H_{k+2}(x)) = \sum_{j=0}^{k}\frac{1}{2^j j!}H_j(x)^2 \geq 0$~\cite[Section~5.6.4]{magnus2013formulas} and Cauchy-Schwarz inequality. This concludes the first part of the lemma.
    
    For the second part of the lemma, we use that $|H_k(\lambda)| \leq 2^{k/2}\sqrt{k!}e^{\sqrt{2k}|\lambda|}$~\cite{van1990new} and that $\operatorname{erfc}(\lambda) \leq e^{-\lambda^2}$~\cite{simon1998some,chiani2002improved}. Then, if $k = l$,
    \begin{align*}
    	\left|\int_\lambda^\infty H_k(x)^2 e^{-x^2}\text{d}x\right| \leq \frac{\sqrt{2}}{\pi}2^k k! e^{-\lambda^2} + 2^{k+1}(k+1)!e^{2\sqrt{2(k+1)}\lambda}e^{-\lambda^2}
    	\leq 2^{k+2}(k+1)!e^{2\sqrt{2(k+1)}\lambda}e^{-\lambda^2},
    \end{align*}
    and if $l\neq k$, then
    \[
    	\left|\int_\lambda^\infty H_k(x)H_l(x) e^{-x^2}\text{d}x\right| \leq 2^{(k+l+1)/2}\sqrt{k!(l+1)!}e^{(\sqrt{2k} + \sqrt{2(l+1)})\lambda}e^{-\lambda^2}. \qedhere
    \]
\end{proof}

\end{document}